\def\subsectiontitle{}
\def\subsubsectiontitle{}
\def\ps@pprintTitle{%
 \let\@oddhead\@empty
 \let\@evenhead\@empty
 \def\@oddfoot{\emph{Very preliminary version}\hfill\emph{This draft: \today}}%
 \let\@evenfoot\@oddfoot}
\newtheorem{prop}{Proposition}
\crefname{prop}{Proposition}{Propositions}
\newtheorem{thm}{Theorem}
\crefname{thm}{Theorem}{Theorems}
\newtheorem{cor}{Corollary}[thm]
\crefname{cor}{Corollary}{Corollaries}
\newtheorem{lem}{Lemma}
\crefname{lem}{Lemma}{Lemmas}
\newtheorem{ass}{Assumption}
\crefname{ass}{Assumption}{Assumptions}
\crefname{axiom}{Axiom}{Axioms}
\newtheorem{defi}{Definition}
\crefname{defi}{Definition}{Definitions}
\theoremstyle{remark}
\newtheorem{remark}{Remark}
\crefname{remark}{Remark}{Remarks}
\theoremstyle{claim}
\crefname{claim}{Claim}{Claims}
\theoremstyle{definition}
\newtheorem{eg}{Example}
\crefname{eg}{Example}{Examples}
\crefname{problem}{Problem}{Problems}
\newcommand{\MM}{M}
\newcommand{\mm}{m}
\let\oldfootnote\footnote
\renewcommand\footnote[1]{\oldfootnote{\hspace{.4mm}#1}}
\renewcommand{\footnotesize}{\scriptsize}
\renewenvironment{proof}[1][\proofname] {\par\pushQED{\qed}\normalfont\topsep6\p@\@plus6\p@\relax\trivlist\item[\hskip\labelsep\bfseries#1\@addpunct{.}]\ignorespaces}{\popQED\endtrivlist\@endpefalse}
\let\oldFootnote\footnote
\newcommand\nextToken\relax
\renewcommand\footnote[1]{%
    \oldFootnote{#1}\futurelet\nextToken\isFootnote}
\newcommand\isFootnote{%
    \ifx\footnote\nextToken\textsuperscript{,}\fi}
\def\d{\mathrm{d}}
\newcommand{\rp}[1]{\langle #1 \rangle}
\def\E{\mathbb{E}}
\def\R{\mathbb{R}}
\DeclareMathOperator*{\argmax}{argmax}
\titlespacing\section{0pt}{6pt}{4pt}
\titlespacing\subsection{0pt}{4pt}{2pt}
\titlespacing\subsubsection{0pt}{2pt}{2pt}
\titlespacing*{\paragraph}{0pt}{1.25ex plus 1ex minus .2ex}{0.5em}
\begin{document}

\title{Persuasion and Optimal Stopping}

\date{This version: \today}

\author{\makebox[.25\linewidth]{{Andrew Koh}\thanks{MIT Department of Economics; email: \protect\texttt{ajkoh@mit.edu}}}\\{MIT}
\and
\makebox[.25\linewidth]{Sivakorn Sanguanmoo\thanks{MIT Department of Economics; email: \protect\texttt{sanguanm@mit.edu}}} \\ {MIT}
\and \makebox[.25\linewidth]{Weijie Zhong\thanks{Stanford GSB; email: \protect\texttt{weijie.zhong@stanford.edu}
\newline
We are grateful to Daron Acemoglu, Ian Ball, Ben Brooks, Jeffery Ely, Drew Fudenberg, Jan Knoepfle, Stephen Morris, and Alex Wolitzky for insightful discussions. We also thank seminar participants at various seminars and conferences for comments.
}} \\
{Stanford GSB}
}

\maketitle
    \vspace{-2em}
\begin{abstract}
    We develop a duality-based first-order approach to dynamic persuasion in optimal stopping problems with general action-, state-, and time-dependent preferences. A direct-communication reduction recasts the design problem as a semi-static program over joint distributions of stopping beliefs and times; strong duality and a near-necessary first-order condition then reduce it to a one-dimensional differential equation in a multiplier that prices the agent's continuation incentive, characterizing the optimum as a \emph{concavification} of a multiplier-augmented payoff. We demonstrate the method in three applications: dynamic binary persuasion, where optimal policies combine \emph{suspense generation} with \emph{action-targeting}; a structural result by which the principal's time-risk preferences alone determine whether suspense is optimal; and dynamic linear persuasion, where the optimum is \emph{dynamic tail-censorship}.
\end{abstract}

\noindent \textbf{Keywords:} dynamic persuasion, optimal stopping, information design, duality, first-order conditions\\
\textbf{JEL Classification:} D80, D82, D83\\

\clearpage

\section{Introduction}\label{sec:intro}
Uncertainty and irreversibility lie at the heart of many economic decisions. In such environments, decision-makers must choose \emph{when} to stop gathering information as well as \emph{what} irrevocable action to take. As \cite{bernanke1983irreversibility} observed, ``the optimizing investor must decide not only which projects to undertake but also which point in time is the best for making a commitment\ldots the dynamics of investment become very sensitive to expectations about the rate of information arrival.''\footnote{See also classical applications of irreversible decision-making to environmental preservation \citep{arrow1974environmental} and labor markets \citep{dornbusch1987open}.} A large literature accordingly fixes the flow of information and asks how decision-makers should optimally stop and act. But the \emph{design} of information itself is a powerful instrument: a company times its earnings guidance to manage expectations and smooth stock-price volatility; a central bank discloses information about fundamentals to shape the timing of investment; an advisor reveals information about a client's prospects to guide a career choice. The principal in such settings cares about the agent's action, the state, and the timing of the decision, and her preferences along these three dimensions can be rich and non-stationary. How should she design information over time?

The methodological challenge is sharp. The principal's choice variable is a stochastic belief process---an object of infinite dimension---and the agent's optimal stopping problem couples obedience constraints across every continuation history: an incentive to wait today depends on the information promised tomorrow, which depends on the information promised the day after. Classical concavification \citep{kamenica2011bayesian} handles persuasion when the design space is a one-dimensional belief at a fixed time, but it has no native treatment of the timing dimension. Most existing analyses progress by specialization---Markovian belief processes, exponential discounting, particular payoff structures, or principal preferences for full revelation---each restriction buying tractability at the cost of generality.

\textbf{Outline of contribution.} There is an unknown, persistent, payoff-relevant state. A principal chooses a sequence of history-dependent statistical experiments, and an agent, observing their realizations, makes an irreversible choice of when to stop and which action to take. Both players' preferences over the state, action, and stopping time are arbitrary, subject only to mild regularity. We develop a general operational method for this problem and demonstrate it in three applications.

\textit{A duality-based first-order characterization.} \citet{koh2024attention} established a direct-communication reduction for discrete-time dynamic persuasion in stopping problems; we extend that reduction to continuous time (\cref{lem:direct_communication}). The dynamic problem then reduces to a semi-static optimization over joint distributions $f$ of stopping beliefs and stopping times $(\mu,t)\in\Delta(\Theta)\times T$, subject to a single conditional obedience constraint \eqref{eqn:OCC} (in the spirit of \citealp{forges1986approach,myerson1986multistage}). We establish strong duality (\cref{lem:duality}) and derive a sufficient and near-necessary first-order condition (\cref{thm:foc}): the optimal joint distribution \emph{concavifies} a payoff $l_{f,\Lambda}$ that augments the principal's direct stopping payoff $V$ with the agent's intertemporal incentives through a Lagrange multiplier $\Lambda$, a time-dependent shadow price of the continuation incentive. Solving the design problem is then equivalent to solving a one-dimensional ordinary differential equation for $\Lambda$. The recipe is operational: conjecture the support of the optimum, derive the multiplier ODE along it, and verify the off-support concavification inequality. Both of the closed-form applications below follow this recipe.

\textit{Application 1: Dynamic binary persuasion.} In \cref{sec:binary} we apply the method to a canonical binary environment: a binary state $\theta\in\{L,R\}$, a binary action $\{\ell,r\}$, an agent who wishes to match the state, and a principal with state-independent action preferences whose payoffs over actions and time evolve flexibly. Optimal strategies combine two qualitatively distinct building blocks: a \emph{suspense-generation} stage that releases inconclusive ``plot twists'' to keep the agent simultaneously indifferent between stopping and continuing, and an \emph{action-targeting} stage that releases conclusive Poisson news to drive the decision. The characterization (\cref{thm:binary_suspense_L,cor:binary_suspense_R}) makes precise three sharp trade-offs: the magnitude of the persuasion gain determines the \emph{scope} of persuasion; the principal's time-risk preferences determine the \emph{duration} of suspense; and the complementarity or substitutability between persuasion and delay determines the \emph{direction} of targeting. The framework rationalizes ``good news'' versus ``bad news'' arrival processes endogenously.

\textit{Application 2: General time-risk preferences.} In \cref{sec:time_risk} we turn the method on itself to derive a structural result: the dependence of optimal persuasion on time-risk preferences is a feature of the framework, not of the binary specialization. \Cref{thm:time_risk_loving} shows that, regardless of the state and action spaces, time-risk loving preferences rule out suspense: the obedience constraint binds everywhere on the support of the optimum and the implementing strategy is a direct recommendation. \Cref{thm:time_risk_averse} shows that time-risk averse preferences instead bound the persuasion window and force a suspense interval before any stopping, implemented through informative indirect interim messages. Together with the first-order condition, these results predict the qualitative \emph{form} of the optimum from time preferences alone, before any computation.

\textit{Application 3: Dynamic linear persuasion.} In \cref{sec:linear} we extend the framework to a linear environment in which the state lies in a continuous interval and indirect utilities depend on the belief only through its mean, generalizing classical static linear persuasion \citep{dworczak2019simple,kolotilin2018optimal,kolotilin2022censorship,kleiner2021extreme} to a dynamic setting. The continuum first-order condition (\cref{thm:foc_continuum}) features two multipliers, and the dual \emph{decouples} them: a time-dependent multiplier $\Lambda$ for the obedience constraint is belief-invariant, and a belief-dependent convex multiplier $\Gamma$ for the mean-preservation constraint is time-invariant. This decoupling is the technical reason the dynamic problem inherits the tractability of its static counterpart. In a central-bank--investor problem (\cref{thm:tail_censorship}), the optimal disclosure is \emph{dynamic tail-censorship}: the state is revealed only when it lies in a time-expanding interval $[\alpha(t),\beta(t)]$, and persistent silence becomes increasingly informative about the tails---a pattern that resonates with the communication practices of monetary authorities and with corporate financial disclosure.

\subsection{Related literature}\label{ssec:related}

\paragraph{Dynamic information design and optimal stopping.} A growing literature studies the design of information when the receiver's action is a stopping decision: \citet{ely2017beeps} on the timing of news; \citet{ely2020moving} on dynamic persuasion with endogenous effort; \citet{knoepfle2020dynamic} and \citet{hebert2022engagement} on engagement-maximizing disclosure; \citet{orlov2020persuading} and \citet{bizzotto2021dynamic} on persuading an agent to wait under public signals; \citet{ball2023dynamic} and \citet{smolin2021dynamic} on dynamic disclosure with one-time stopping; \citet{henry2019research} on Wald-style learning with an information-selling principal; \citet{escude2023slow} on disclosure under flow-rate constraints; \citet{che2023keeping}, \citet{mekonnen2023persuaded}, and \citet{siegel2020economic} on binary-state persuasion under exogenous frictions; \citet{renault2017optimal} and \citet{zhao2024contracting} on dynamic disclosure with long-run states; and \citet{saeedi2024getting} on attention capture. Each of these studies narrower preferences than ours, typically state-independent payoffs or restricted time-dependence. This paper builds directly on \citet{koh2024attention}, who study discrete-time dynamic persuasion in stopping problems and establish the direct- and indirect-communication principles; our contribution relative to theirs is methodological---the duality-based first-order characterization and the recipe that operationalizes it---and applied---the binary, time-risk, and linear demonstrations.

\paragraph{Revelation principles, mechanism selection, and static design.} The pooling argument underlying \cref{ssec:relaxed} adapts the classical direct-revelation principles of \citet{forges1986approach} and \citet{myerson1986multistage} to a dynamic information-design setting. The static information-design literature originating with \citet{aumann1995repeated} and \citet{kamenica2011bayesian} characterizes commitment-optimal information through belief concavification; our framework recovers these static benchmarks as the time-stationary limit \citep[see][for a survey]{bergemann2019information}. The literature on mechanism-selection games \citep{laffont1988dynamics,bester2001contracting,skreta2015optimal,doval2022mechanism} studies analogous reductions for direct mechanisms under limited commitment; our dynamic-stopping setting is the informational counterpart, and admits a no-commitment-gap result via the indirect-communication principle of \citet{koh2024attention} rather than a commitment gap.

\paragraph{Linear persuasion, suspense, and real options.} The linear application connects to static linear persuasion with mean-measurable utilities \citep{gentzkow2016rothschild,dworczak2019simple,kolotilin2018optimal,kolotilin2022censorship,arieli2023optimal,kleiner2021extreme,yang2024design}, where censorship policies emerge as optimal disclosures and convex-order methods pin down their shape; our dynamic counterpart inherits this tractability through the dual decoupling. The suspense-generation stage is conceptually related to but mechanically distinct from the suspense-optimal information of \citet{ely2015suspense}, in which the agent derives \emph{non-instrumental} utility from variance reduction \citep[see also][]{augenblick2018belief,georgiadis2021information,chen2024information}; in our framework the continuation value is purely instrumental and the suspense paths are pinned down by zero interim surplus. The central-bank application interfaces with the real-options literature \citep{bernanke1983irreversibility,mcdonald1986value,dixit1994investment,stokey2008economics,stokey2016wait}. Finally, holding the principal's policy fixed, the agent solves a Wald-style stopping problem \citep{wald1947foundations,arrow1949bayes}; modern treatments of endogenous information acquisition span no agent control \citep{fudenberg2018speed,gonccalves2024speed}, partial control \citep{moscarini2001optimal,che2019optimal,liang2022dynamically}, and full control \citep{steiner2017rational,zhong2022optimal,hebert2023rational,sannikov2024embedding}. Our paper inverts the perspective: information is endogenous to the principal's design, and the agent retains only the stopping decision.

\section{The Framework}\label{sec:framework}

\subsection{Model}\label{ssec:model}

\paragraph*{Primitives.} $\Theta$ is a finite set of payoff-relevant states. $\mu_0 \in \Delta(\Theta)$ is the common prior. $A$ is a set of actions. The time space is a compact set $T \subset \R_+$, which we take to be a continuous interval throughout the body.\footnote{The discrete-time analog of this framework was introduced in \citet{koh2024attention}.} There are two players, a principal (she) and an agent (he). The agent makes a one-time irreversible choice of action $a$ at a time $t$ of his choosing. For any state, action, and time triple $(\theta, a, t) \in \Theta \times A \times T$, the agent obtains utility $u(\theta, a, t)$ and the principal obtains utility $v(\theta, a, t)$; both $u$ and $v$ are bounded.

\paragraph*{Information.} At the start of the game the principal commits to a persuasion strategy: a c\`adl\`ag martingale $\rp{\mu_t}$ in $\Delta(\Theta)$, accompanied by a suitable underlying probability space $(\Omega, \mathcal{F} = \rp{\mathcal{F}_t}, \mathcal{P})$, describing the common random posterior belief process induced by the principal's sequential experimentation.\footnote{\citet{koh2024attention} establish a no-commitment-gap result for this framework via the \emph{indirect-communication principle}: every commitment-optimal outcome is implementable as the equilibrium path of a subgame-perfect equilibrium of a dynamic experiment-selection game, even without intertemporal commitment. We invoke that result for the dynamic-consistency interpretations of the strategies derived below; see \cref{rmk:dynamic_consistency_binary,rmk:dynamic_consistency_linear}.} We write $\mu$ for a generic belief in $\Delta(\Theta)$, $\rp{\mu_t}$ for a stochastic belief process, and $\delta_\theta \in \Delta(\Theta)$ for the degenerate belief on $\theta$.

\paragraph*{Stopping and action.} The agent observes the realized path of the belief process and makes an irreversible choice of stopping time and action. Define indirect utilities $U, V : \Delta(\Theta) \times T \to \R$ via
\begin{align*}
	U(\mu, t) &:= \max_{a \in A, s \ge t} \sum_{\theta \in \Theta} u(\theta, a, s) \mu(\theta); \\
	V(\mu, t) &:= \max_{a^*, s^*} \sum_{\theta \in \Theta} v(\theta, a^*, s^*) \mu(\theta) \quad \text{s.t.} \quad (a^*, s^*) \in \argmax_{a \in A, s \ge t} \sum_{\theta \in \Theta} u(\theta, a, s) \mu(\theta).
\end{align*}
By construction $U$ is convex in $\mu$ and weakly decreasing in $t$. The principal chooses a belief martingale $\rp{\mu_t}$ and the agent chooses a stopping time $\tau$ optimally with respect to it. The principal's commitment problem is
\begin{align}
	\sup_{\rp{\mu_t}, \tau} & \E\left[ V(\mu_\tau, \tau) \right] \tag{P} \label{prob:1} \\
	\text{s.t.} \quad & \tau \in \argmax_{\tau'} \E\left[ U(\mu_{\tau'}, \tau') \right]. \tag{OC}\label{prob:1:oc}
\end{align}
We extend $U$ from $\Delta(\Theta)$ to $\R_+^{|\Theta|}$ homogeneously of degree 1 by setting $U(\mu, t) := \sum_\theta \mu(\theta) \cdot U(\mu / \sum_\theta \mu(\theta), t)$ for $\mu \ne 0$, a notational device that lets conditional distributions be evaluated without normalization.

\subsection{The relaxed problem}\label{ssec:relaxed}

Solving \eqref{prob:1} over belief martingales directly is intractable: the choice variable is infinite-dimensional and the constraint couples every continuation history. We instead work with a relaxed semi-static problem in which the principal chooses a joint distribution over stopping beliefs and stopping times. The economic idea is the standard direct-communication argument: a designer who wants the agent to continue is best off pooling every continuation message into a single ``continue'' instruction, since pooling weakly tightens the conditional obedience constraint and never affects the joint distribution of stopping outcomes. Pooling reduces the problem to choosing a joint distribution $f \in \Delta_{\mu_0} := \{f \in \Delta(\Delta(\Theta) \times T) : \E_f[\mu] = \mu_0\}$ subject to a single conditional obedience constraint at each interim time. We formalize this reduction (\cref{lem:direct_communication}) in \cref{appendix:reduction}; its discrete-time analog appears in \citet{koh2024attention}.

The relaxed problem is
\begin{align}
	\sup_{f \in \Delta_{\mu_0}} & \int V(\mu, t) f(\d \mu, \d t) \tag{R} \label{prob:relaxed} \\
	\text{s.t.} & \int_{y > t} U(\mu, y) f(\d \mu, \d y) \ge U\left( \int_{y > t} \mu f(\d \mu, \d y), t \right), \quad \forall t \in T^\circ, \tag{OC-C}\label{eqn:OCC}
\end{align}
where $T^\circ := T \setminus \sup T$. The reduction is that \eqref{prob:1} and \eqref{prob:relaxed} have the same optimal value, and any solution to \eqref{prob:relaxed} is implemented in the original game by the simple recommendation strategy constructed in \cref{appendix:reduction}. We work with \eqref{prob:relaxed} for the remainder of the paper.

\subsection{Strong duality and the first-order characterization}\label{ssec:foc}

Since \eqref{prob:relaxed} is a canonical convex program with infinitely many linear constraints (indexed by $t$), we solve it via the Lagrangian dual. Define $\mathcal{L} : \Delta_{\mu_0} \times \mathcal{B}(T^\circ) \to \R$ by
\begin{align*}
	\mathcal{L}(f, \Lambda) := \int V(\mu, \tau) f(\d \mu, \d \tau) + \int_{t \in T^\circ} \left( \int_{\tau > t} U(\mu, \tau) f(\d \mu, \d \tau) - U\left( \int_{\tau > t} \mu f(\d \mu, \d \tau), t \right) \right) \d \Lambda(t),
\end{align*}
where $\mathcal{B}(T^\circ)$ is the set of positive Borel measures on $T^\circ$ and a generic $\Lambda \in \mathcal{B}(T^\circ)$ assigns each interim time $t$ a multiplier on its obedience constraint; with slight abuse of notation we write $\Lambda(t) := \int_{s < t} \d \Lambda(s)$ for its cumulative distribution function. We maintain two regularity assumptions for the remainder of the paper.

\begin{ass}[Information is strictly valuable]\label{ass:positive}
	$\E_{\mu_0}[U(\delta_\theta, t)] > U(\mu_0, t)$ for all $t \in T$.
\end{ass}

\begin{ass}[Continuity]\label{ass:cont}
	$U$ is continuous in $(\mu, t)$, $V$ is upper-semicontinuous in $(\mu, t)$, and $\{V(\mu, \cdot)\}_{\mu \in \Delta(\Theta)}$ is an equicontinuous family of functions from $T$ to $\R$.
\end{ass}

\Cref{ass:positive} fails only when information is valueless at the prior, in which case the problem is degenerate. \Cref{ass:cont} is a standard continuity requirement ensuring existence in both the primal and the dual.

\begin{lem}[Strong duality]\label{lem:duality}
	Under \cref{ass:positive,ass:cont}, strong duality holds:
	\begin{align}
		\sup_f \min_\Lambda \mathcal{L}(f, \Lambda) = \min_\Lambda \sup_f \mathcal{L}(f, \Lambda). \tag{D}\label{eqn:dual}
	\end{align}
	Furthermore, the minimum is achieved by some $\Lambda^* \in \mathcal{B}(T^\circ)$ and the supremum by some $f^* \in \Delta_{\mu_0}$.
\end{lem}

Strong duality ensures that solving the constrained problem \eqref{prob:relaxed} is equivalent to solving the unconstrained $\sup_f \mathcal{L}(f, \Lambda)$ at the optimal multiplier; the proof is in \cref{appendix:duality_proof}. To state the first-order characterization, define the \emph{derivative of the Lagrangian} at a generic belief-time pair $(\mu, t)$ by
\begin{align*}
	l_{f, \Lambda}(\mu, t) := V(\mu, t) + \Lambda(t) \cdot U(\mu, t) - \left( \int_{\tau < t} \nabla_\mu U(\widehat{\mu}_\tau, \tau) \, \d \Lambda(\tau) \right) \cdot \mu,
\end{align*}
where $\widehat{\mu}_t := \E_{f(\mu, \tau)}[\mu \mid \tau > t]$ is the conditional expectation of the belief among continuing draws; when $U(\cdot, t)$ is not differentiable at $\widehat{\mu}_t$ we let $\nabla_\mu U(\widehat{\mu}_t, t)$ denote any selection of subgradients.

\begin{thm}[First-order characterization]\label{thm:foc}
	A joint distribution $f \in \Delta_{\mu_0}$ solves \eqref{prob:relaxed} if there exist $\Lambda \in \mathcal{B}(T^\circ)$, $a \in \R^{|\Theta|}$, and a selection of subgradients such that
	\begin{align}
		l_{f, \Lambda}(\mu, t) \le a \cdot \mu, \quad \text{with equality on the support of } f, \tag{FOC} \label{eqn:foc}
	\end{align}
	$f$ satisfies \eqref{eqn:OCC}, and the complementary slackness condition $\mathcal{L}(f, \Lambda) = \E_f[V]$ holds. Conversely, under \cref{ass:positive,ass:cont}, for every solution $f$ of \eqref{prob:relaxed} there exist $\Lambda$ solving \eqref{eqn:dual}, $a \in \R^{|\Theta|}$, and a selection of subgradients such that \eqref{eqn:foc} holds.
\end{thm}

\Cref{thm:foc} gives a sufficient and near-necessary first-order characterization of optimality, with the necessity part relying on strong duality (the proof is in \cref{appendix:foc_proof}). \Cref{eqn:foc} is a ``concavification'' condition---the dynamic analog of the static condition of \citet{kamenica2011bayesian}: the derivative-of-Lagrangian $l_{f,\Lambda}$ touches its upper supporting hyperplane $a\cdot\mu$ only on the support of the optimal $f$. The function being concavified is a $\Lambda$-augmented payoff: $V(\mu,t)$ is the principal's direct stopping payoff; $\Lambda(t)\cdot U(\mu,t)$ is her shadow benefit from relaxing the agent's continuation incentive at time $t$; and the final term subtracts a shadow cost, since pulling mass to time $t$ tightens the obedience constraints at all earlier times, scaled by the gradient of $U$ at the realized continuation beliefs. The principal's effective strategy is therefore a mean-preserving spread of the prior in the product space $\Delta(\Theta)\times T$, with intertemporal incentives baked in via $\Lambda$.

\subsection{Recipe for solving the dual}\label{ssec:recipe}

\Cref{thm:foc} reduces optimal dynamic persuasion to finding the multiplier $\Lambda$, a one-dimensional function of time. The three-step recipe is:
\begin{enumerate}[leftmargin=2em]
	\item \textbf{Conjecture the support of $f$.} The optimal joint distribution is typically supported on a low-dimensional submanifold of $\Delta(\Theta) \times T$---a path, a pair of paths, or a band.
	\item \textbf{Derive the multiplier ODE.} The equality $l_{f, \Lambda}(\mu, t) = a \cdot \mu$ on the support gives an ordinary differential equation pinning down $\Lambda$ along the conjectured support.
	\item \textbf{Verify the off-support inequality.} The condition $l_{f, \Lambda}(\mu, t) \le a \cdot \mu$ off the support is checked globally, typically by showing that $\widehat{l}(\mu) := \sup_{f' \in \Delta_\mu} \E_{f'}[l_{f, \Lambda}]$ is concavified by $a \cdot \mu$ at $\mu_0$.
\end{enumerate}
We apply this recipe in \cref{sec:binary,sec:linear}; closed-form solutions emerge in both. The recipe also admits an efficient numerical implementation: a finite discretization of $T \times \Delta(\Theta)$ reduces the design problem to a $|\Theta \times T|$-dimensional gradient descent on the dual, developed in \cref{appendix:numerical}.

\section{Application 1: Binary Persuasion}\label{sec:binary}

We first apply the duality method to a canonical binary persuasion environment. The setting is sparse enough to admit closed-form solutions, yet rich enough to expose the three structural trade-offs that recur in dynamic persuasion: the persuasion gain determines the \emph{scope} of persuasion, the principal's time-risk preferences determine the \emph{duration} of suspense, and the complementarity or substitutability between persuasion and delay determines the \emph{direction} of targeting. The first-order condition of \cref{thm:foc} delivers all three through a transparent one-dimensional ODE in $\Lambda$.

\subsection{Setup}\label{ssec:binary_setup}

The state $\theta$ is either $L$ or $R$. The common prior is $\mu_0(R) \in (0, 1/2)$, and $\mu_t := \mathbb{P}(\theta = R \mid \mathcal{F}_t)$ is the agent's posterior on $R$ at time $t$. The agent takes a binary action $a \in \{\ell, r\}$ and wishes to match the state, with a linear delay cost:
\begin{align*}
	u(L, \ell, t) = u(R, r, t) = 1 - c(t), \qquad u(L, r, t) = u(R, \ell, t) = -c(t), \qquad c(t)=t.
\end{align*}
The agent's indirect utility is therefore $U(\mu, t) = \max\{\mu, 1 - \mu\} - t$. The principal's preferences over the agent's action are state-independent but evolve over time:
\begin{align*}
	\text{payoff from $\ell$ at $t$:} \quad v_\ell + h_\ell(t), \qquad \text{from $r$ at $t$:} \quad v_r + h_r(t),
\end{align*}
where $h_\ell, h_r$ are strictly increasing and differentiable, and we normalize $v_r \ge v_\ell$, $h_\ell(0) = h_r(0) = 0$.\footnote{Setting $v_r \ge v_\ell$ is without loss, since otherwise we relabel the states; $h_\ell(0) = h_r(0) = 0$ is without loss after a translation of $v_\ell, v_r$; and $c(t) = t$ is obtained by reparameterizing the time space $t \to c(t)$, which converts $h_\ell, h_r$ to $h_\ell \circ c^{-1}, h_r \circ c^{-1}$.} Two derived primitives organize the analysis: $\Delta v := v_r - v_\ell$ is the \emph{initial} persuasion gain at $t = 0$, and $\Delta h(t) := h_r(t) - h_\ell(t)$ is the \emph{additional} persuasion gain at time $t$. The marginal delay gains $h_\ell', h_r'$ and the marginal persuasion gain $\Delta h'$ may be arbitrary functions of time.

This binary environment is the workhorse of dynamic persuasion: it captures jury--prosecutor exchanges, advisor--politician communication, supervisor--worker feedback, and analyst--investor disclosure \citep{kamenica2011bayesian, morris2001political, fudenberg2019training, ely2023feedback}. In each, the principal cares about both the agent's eventual action and the timing of the decision, and the two motives may reinforce or oppose each other over time. Existing dynamic persuasion work in this setting typically imposes exogenous frictions on the principal's information technology \citep{che2023keeping, escude2023slow, henry2019research, mekonnen2023persuaded}. By contrast, persuasion here is fully frictionless---the principal can run any sequence of experiments---and delay arises endogenously from differential time-risk preferences between principal and agent.

\subsection{Optimal strategy}\label{ssec:binary_characterization}

We construct the optimal strategy from two building-block belief paths.

\paragraph{Suspense-generation paths.} A \emph{suspense-generation} stage induces interim beliefs on one of two paths: the $L$-path $\mu^L_t$ stays below $1/2$ and the $R$-path $\mu^R_t$ stays above $1/2$. In each period the principal runs an experiment that, with appropriate Poisson intensity, generates a \emph{plot twist}---the belief jumps from one path to the other; absent the twist, the belief drifts along its current path. The two paths are constructed so that the agent obtains \emph{zero continuation surplus} throughout the stage: he is indifferent between stopping immediately and waiting at every interim history. This indifference pins down the two paths uniquely, given the suspense duration $t_1$ and the direction of targeting that follows.

\paragraph{Action-targeting paths.} An \emph{action-targeting} stage runs a single Poisson experiment that, upon arrival, reveals one state conclusively. If $L$ is revealed, the agent stops and takes $\ell$ ($\ell$-targeting); if $R$ is revealed, he stops and takes $r$ ($r$-targeting). Absent revelation, the belief drifts gradually toward the unrevealed state. The Poisson intensity is calibrated to keep the conditional obedience constraint \eqref{eqn:OCC} \emph{binding} at every interim time.

\paragraph{Composite strategies.} We combine the two blocks into two-stage strategies parameterized by a triple $(t_1, a, t_2)$, where $t_1 \in [0, t_2)$ is the duration of suspense, $a \in \{\ell, r\}$ is the targeted action, and $t_2 > t_1$ is the terminal time at which any agent still continuing must stop:
\begin{itemize}[leftmargin=2em]
	\item \textbf{Suspense${}^{t_1}$-$\ell^{t_2}$}: During the suspense stage $[0, t_1)$, the principal induces $\mu^L_t = \mu_0 (t_1 - t)/t_1$ and $\mu^R_t = \mu_0 (1 - t_1 + t)/(2\mu_0 - t_1)$, which leave the agent indifferent throughout.\footnote{Implicitly, $t_1 < \mu_0$ is required to guarantee $\mu^R_{t_1} \in (0, 1)$.} At $t_1$, the agent either learns that the state is $L$ ($\mu^L_{t_1} = 0$) and stops with action $\ell$, or continues at the higher belief $\mu^R_{t_1} = \mu_0/(2\mu_0 - t_1)$. In the $\ell$-targeting stage $[t_1, t_2]$, the principal reveals $L$ at a Poisson rate calibrated to keep \eqref{eqn:OCC} binding, pinning down the continuing belief path $\mu^*_{t_1}(t) := \mu^R_{t_1} e^{t - t_1}$. If $L$ has not been revealed by $t_2$, the agent stops and takes $r$ at $\mu^*_{t_1}(t_2) \in [1/2, 1]$. See \cref{fig:alibi:belief}.
	\item \textbf{Suspense${}^{t_1}$-$r^{t_2}$}: Symmetric, with the roles of $L$ and $R$ (and $\ell$ and $r$) swapped. The suspense paths are $\mu^L_t = ((1 + t - t_1) \mu_0 - t)/(1 - t_1)$ and $\mu^R_t = (1 + t - (2 + t - t_1) \mu_0)/(1 + t_1 - 2\mu_0)$. At $t_1$, the agent either learns $R$ ($\mu^R_{t_1} = 1$) and stops with action $r$, or continues at the lower belief $\mu^L_{t_1} = (\mu_0 - t_1)/(1 - t_1)$. In the $r$-targeting stage, the principal reveals $R$ at a calibrated rate, pinning down $\mu^\dagger_{t_1}(t) := 1 - (1 - \mu^L_{t_1}) e^{t - t_1}$. If $R$ has not been revealed by $t_2$, the agent stops and takes $\ell$ at $\mu^\dagger_{t_1}(t_2) \in [0, 1/2]$. See \cref{fig:fingerprint:belief}.
\end{itemize}

\begin{figure}[htbp]
	\centering
	\begin{minipage}{0.49\linewidth}
		\centering
		\resizebox{0.95\linewidth}{!}{\tikzset{every picture/.style={line width=0.75pt}} 

\begin{tikzpicture}[x=0.75pt,y=0.75pt,yscale=-1,xscale=1]

\draw [color={rgb, 255:red, 0; green, 122; blue, 255 }  ,draw opacity=1 ]   (65.16,111.08) -- (145.23,111.08) ;
\draw   (145.9,139.7) .. controls (145.9,144.37) and (148.23,146.7) .. (152.9,146.7) -- (178.7,146.7) .. controls (185.37,146.7) and (188.7,149.03) .. (188.7,153.7) .. controls (188.7,149.03) and (192.03,146.7) .. (198.7,146.7)(195.7,146.7) -- (224.5,146.7) .. controls (229.17,146.7) and (231.5,144.37) .. (231.5,139.7) ;
\draw   (65.1,139.7) .. controls (65.1,144.37) and (67.43,146.7) .. (72.1,146.7) -- (94.8,146.7) .. controls (101.47,146.7) and (104.8,149.03) .. (104.8,153.7) .. controls (104.8,149.03) and (108.13,146.7) .. (114.8,146.7)(111.8,146.7) -- (137.5,146.7) .. controls (142.17,146.7) and (144.5,144.37) .. (144.5,139.7) ;
\draw  [draw opacity=0][fill={rgb, 255:red, 0; green, 0; blue, 0 }  ,fill opacity=0.17 ] (65.09,20) -- (314.82,20) -- (314.82,80) -- (65.09,80) -- cycle ;
\draw   (65.09,20) -- (314.82,20) -- (314.82,140) -- (65.09,140) -- cycle ;
\draw  [dash pattern={on 4.5pt off 4.5pt}]  (230.65,28.83) .. controls (238.53,26.17) and (245.2,24.17) .. (251.87,20.5) ;
\draw  [dash pattern={on 0.84pt off 2.51pt}]  (170.53,46.17) .. controls (177.59,77.49) and (178.71,110.1) .. (171.65,137.59) ;
\draw [shift={(171.21,139.27)}, rotate = 285.16] [fill={rgb, 255:red, 0; green, 0; blue, 0 }  ][line width=0.08]  [draw opacity=0] (7.2,-1.8) -- (0,0) -- (7.2,1.8) -- cycle    ;
\draw  [color={rgb, 255:red, 255; green, 255; blue, 255 }  ,draw opacity=1 ] (24.4,0) -- (334.4,0) -- (334.4,160) -- (24.4,160) -- cycle ;
\draw  [color={rgb, 255:red, 208; green, 2; blue, 27 }  ,draw opacity=1 ][fill={rgb, 255:red, 208; green, 2; blue, 27 }  ,fill opacity=1 ] (143.37,139.88) .. controls (143.37,138.73) and (144.17,137.8) .. (145.16,137.8) .. controls (146.15,137.8) and (146.95,138.73) .. (146.95,139.88) .. controls (146.95,141.03) and (146.15,141.96) .. (145.16,141.96) .. controls (144.17,141.96) and (143.37,141.03) .. (143.37,139.88) -- cycle ;
\draw  [dash pattern={on 0.84pt off 2.51pt}]  (204.2,37.5) .. controls (211.25,68.82) and (211.77,110) .. (204.69,137.84) ;
\draw [shift={(204.25,139.53)}, rotate = 285.16] [fill={rgb, 255:red, 0; green, 0; blue, 0 }  ][line width=0.08]  [draw opacity=0] (7.2,-1.8) -- (0,0) -- (7.2,1.8) -- cycle    ;
\draw [color={rgb, 255:red, 0; green, 122; blue, 255 }  ,draw opacity=1 ]   (145.14,52.2) .. controls (175.53,45.5) and (199.27,38.8) .. (230.65,28.83) ;
\draw  [color={rgb, 255:red, 208; green, 2; blue, 27 }  ,draw opacity=1 ][fill={rgb, 255:red, 208; green, 2; blue, 27 }  ,fill opacity=1 ] (228.86,28.75) .. controls (228.86,27.6) and (229.66,26.67) .. (230.65,26.67) .. controls (231.64,26.67) and (232.44,27.6) .. (232.44,28.75) .. controls (232.44,29.9) and (231.64,30.83) .. (230.65,30.83) .. controls (229.66,30.83) and (228.86,29.9) .. (228.86,28.75) -- cycle ;
\draw [color={rgb, 255:red, 208; green, 2; blue, 27 }  ,draw opacity=1 ][line width=1.5]    (145.16,139.88) -- (231.33,139.88) ;
\draw [color={rgb, 255:red, 189; green, 16; blue, 224 }  ,draw opacity=1 ] [dash pattern={on 2.25pt off 1.5pt}]  (145.14,52.2) -- (64.75,72.28) ;
\draw [color={rgb, 255:red, 65; green, 117; blue, 5 }  ,draw opacity=1 ] [dash pattern={on 2.25pt off 1.5pt}]  (145.16,139.88) -- (65.16,111.08) ;
\draw  [fill={rgb, 255:red, 255; green, 255; blue, 255 }  ,fill opacity=1 ] (63.37,111.08) .. controls (63.37,109.93) and (64.17,109) .. (65.16,109) .. controls (66.15,109) and (66.95,109.93) .. (66.95,111.08) .. controls (66.95,112.23) and (66.15,113.16) .. (65.16,113.16) .. controls (64.17,113.16) and (63.37,112.23) .. (63.37,111.08) -- cycle ;
\draw  [dash pattern={on 0.84pt off 2.51pt}]  (89.42,66.25) .. controls (96.05,81.31) and (97.42,101.05) .. (90.65,118.42) ;
\draw [shift={(90,120.03)}, rotate = 292.91] [fill={rgb, 255:red, 0; green, 0; blue, 0 }  ][line width=0.08]  [draw opacity=0] (7.2,-1.8) -- (0,0) -- (7.2,1.8) -- cycle    ;
\draw  [dash pattern={on 0.84pt off 2.51pt}]  (120.25,130.78) .. controls (126.56,115.26) and (128.63,79.98) .. (120.76,59.85) ;
\draw [shift={(120,58.03)}, rotate = 66.1] [fill={rgb, 255:red, 0; green, 0; blue, 0 }  ][line width=0.08]  [draw opacity=0] (7.2,-1.8) -- (0,0) -- (7.2,1.8) -- cycle    ;
\draw [color={rgb, 255:red, 189; green, 16; blue, 224 }  ,draw opacity=1 ] [dash pattern={on 2.25pt off 1.5pt}]  (264,54.78) -- (245.5,54.78) ;
\draw [color={rgb, 255:red, 65; green, 117; blue, 5 }  ,draw opacity=1 ] [dash pattern={on 2.25pt off 1.5pt}]  (264,74.53) -- (245.5,74.53) ;
\draw [color={rgb, 255:red, 0; green, 122; blue, 255 }  ,draw opacity=1 ]   (246.41,94.08) -- (263.75,94.08) ;
\draw [color={rgb, 255:red, 208; green, 2; blue, 27 }  ,draw opacity=1 ]   (246.41,113.83) -- (263.75,113.83) ;
\draw  [color={rgb, 255:red, 0; green, 122; blue, 255 }  ,draw opacity=1 ][fill={rgb, 255:red, 255; green, 255; blue, 255 }  ,fill opacity=1 ] (143.44,111.08) .. controls (143.44,109.93) and (144.24,109) .. (145.23,109) .. controls (146.22,109) and (147.02,109.93) .. (147.02,111.08) .. controls (147.02,112.23) and (146.22,113.16) .. (145.23,113.16) .. controls (144.24,113.16) and (143.44,112.23) .. (143.44,111.08) -- cycle ;

\draw (47.43,131.8) node [anchor=north west][inner sep=0.75pt]  [font=\footnotesize]  {$L$};
\draw (311.4,144.4) node [anchor=north west][inner sep=0.75pt]  [font=\small]  {$t$};
\draw (46.63,12) node [anchor=north west][inner sep=0.75pt]  [font=\footnotesize]  {$R$};
\draw (44.69,74.4) node [anchor=north west][inner sep=0.75pt]  [font=\small]  {$0.5$};
\draw (141.1,146.9) node [anchor=north west][inner sep=0.75pt]  [font=\small]  {$t_{1}$};
\draw (26.15,102.4) node [anchor=north west][inner sep=0.75pt]  [font=\small]  {$Prior$};
\draw (229.1,147.8) node [anchor=north west][inner sep=0.75pt]  [font=\small]  {$t_{2}$};
\draw (70.41,153.11) node [anchor=north west][inner sep=0.75pt]  [font=\footnotesize]  {$ \begin{array}{l}
Suspense-\\
generating
\end{array}$};
\draw (165.2,152.21) node [anchor=north west][inner sep=0.75pt]  [font=\footnotesize]  {$ \begin{array}{l}
\ell -\\
targeting
\end{array}$};
\draw (265.75,45.55) node [anchor=north west][inner sep=0.75pt]  [font=\footnotesize]  {$:\mu _{t}^{R}$};
\draw (265.5,64.8) node [anchor=north west][inner sep=0.75pt]  [font=\footnotesize]  {$:\mu _{t}^{L}$};
\draw (265.25,84.3) node [anchor=north west][inner sep=0.75pt]  [font=\footnotesize]  {$:\hat{\mu }_{t}$};
\draw (265.25,104.05) node [anchor=north west][inner sep=0.75pt]  [font=\footnotesize]  {$:supp( f)$};

\end{tikzpicture}}
		\caption{\emph{Suspense-$\ell$} strategy.}
		\label{fig:alibi:belief}
	\end{minipage}
	\begin{minipage}{0.49\linewidth}
		\centering
		\resizebox{0.95\linewidth}{!}{\tikzset{every picture/.style={line width=0.75pt}} 

\begin{tikzpicture}[x=0.75pt,y=0.75pt,yscale=-1,xscale=1]

\draw  [draw opacity=0][fill={rgb, 255:red, 0; green, 0; blue, 0 }  ,fill opacity=0.17 ] (65.09,20) -- (314.82,20) -- (314.82,80) -- (65.09,80) -- cycle ;
\draw [color={rgb, 255:red, 0; green, 122; blue, 255 }  ,draw opacity=1 ]   (66.48,103.51) -- (134.33,103.51) ;
\draw   (134.67,139.7) .. controls (134.67,144.37) and (137,146.7) .. (141.67,146.7) -- (160.58,146.7) .. controls (167.25,146.7) and (170.58,149.03) .. (170.58,153.7) .. controls (170.58,149.03) and (173.91,146.7) .. (180.58,146.7)(177.58,146.7) -- (199.49,146.7) .. controls (204.16,146.7) and (206.49,144.37) .. (206.49,139.7) ;
\draw   (65.1,139.7) .. controls (65.1,144.37) and (67.43,146.7) .. (72.1,146.7) -- (89.88,146.7) .. controls (96.55,146.7) and (99.88,149.03) .. (99.88,153.7) .. controls (99.88,149.03) and (103.21,146.7) .. (109.88,146.7)(106.88,146.7) -- (127.67,146.7) .. controls (132.34,146.7) and (134.67,144.37) .. (134.67,139.7) ;
\draw   (65.09,20) -- (314.82,20) -- (314.82,140) -- (65.09,140) -- cycle ;
\draw  [fill={rgb, 255:red, 255; green, 255; blue, 255 }  ,fill opacity=1 ] (62.9,103.51) .. controls (62.9,102.36) and (63.7,101.43) .. (64.69,101.43) .. controls (65.68,101.43) and (66.48,102.36) .. (66.48,103.51) .. controls (66.48,104.66) and (65.68,105.59) .. (64.69,105.59) .. controls (63.7,105.59) and (62.9,104.66) .. (62.9,103.51) -- cycle ;
\draw  [dash pattern={on 4.5pt off 4.5pt}]  (206.49,134.03) .. controls (212.49,135.17) and (219.91,137.46) .. (225.63,140.03) ;
\draw  [color={rgb, 255:red, 208; green, 2; blue, 27 }  ,draw opacity=1 ][fill={rgb, 255:red, 208; green, 2; blue, 27 }  ,fill opacity=1 ] (133.97,20.11) .. controls (133.97,18.96) and (134.77,18.03) .. (135.76,18.03) .. controls (136.75,18.03) and (137.55,18.96) .. (137.55,20.11) .. controls (137.55,21.26) and (136.75,22.19) .. (135.76,22.19) .. controls (134.77,22.19) and (133.97,21.26) .. (133.97,20.11) -- cycle ;
\draw [color={rgb, 255:red, 0; green, 0; blue, 0 }  ,draw opacity=1 ] [dash pattern={on 0.84pt off 2.51pt}]  (163.06,122.6) .. controls (170.11,91.64) and (172.07,49.64) .. (165.04,22.07) ;
\draw [shift={(164.6,20.4)}, rotate = 74.67] [fill={rgb, 255:red, 0; green, 0; blue, 0 }  ,fill opacity=1 ][line width=0.08]  [draw opacity=0] (7.2,-1.8) -- (0,0) -- (7.2,1.8) -- cycle    ;
\draw  [color={rgb, 255:red, 255; green, 255; blue, 255 }  ,draw opacity=1 ] (24.4,0) -- (334.4,0) -- (334.4,160) -- (24.4,160) -- cycle ;
\draw [color={rgb, 255:red, 0; green, 0; blue, 0 }  ,draw opacity=1 ] [dash pattern={on 0.84pt off 2.51pt}]  (197.34,130.6) .. controls (204.4,99.64) and (204.54,49.96) .. (197.44,22.08) ;
\draw [shift={(197,20.4)}, rotate = 74.67] [fill={rgb, 255:red, 0; green, 0; blue, 0 }  ,fill opacity=1 ][line width=0.08]  [draw opacity=0] (7.2,-1.8) -- (0,0) -- (7.2,1.8) -- cycle    ;
\draw [color={rgb, 255:red, 0; green, 122; blue, 255 }  ,draw opacity=1 ]   (133.06,116.89) .. controls (154.77,120.31) and (198.49,130.6) .. (206.49,134.03) ;
\draw [color={rgb, 255:red, 208; green, 2; blue, 27 }  ,draw opacity=1 ][line width=1.5]    (135.76,20.11) -- (207.06,20.11) ;
\draw [color={rgb, 255:red, 189; green, 16; blue, 224 }  ,draw opacity=1 ] [dash pattern={on 2.25pt off 1.5pt}]  (264,52.78) -- (245.5,52.78) ;
\draw [color={rgb, 255:red, 65; green, 117; blue, 5 }  ,draw opacity=1 ] [dash pattern={on 2.25pt off 1.5pt}]  (264,72.53) -- (245.5,72.53) ;
\draw [color={rgb, 255:red, 0; green, 122; blue, 255 }  ,draw opacity=1 ]   (246.41,92.08) -- (263.75,92.08) ;
\draw [color={rgb, 255:red, 208; green, 2; blue, 27 }  ,draw opacity=1 ]   (246.41,111.83) -- (263.75,111.83) ;
\draw [color={rgb, 255:red, 189; green, 16; blue, 224 }  ,draw opacity=1 ] [dash pattern={on 2.25pt off 1.5pt}]  (135.76,20.11) -- (65.32,45.71) ;
\draw [color={rgb, 255:red, 65; green, 117; blue, 5 }  ,draw opacity=1 ] [dash pattern={on 2.25pt off 1.5pt}]  (133.06,116.89) -- (64.69,103.51) ;
\draw  [dash pattern={on 0.84pt off 2.51pt}]  (86.85,37.96) .. controls (93.47,53.03) and (95.04,87.82) .. (88.28,106.1) ;
\draw [shift={(87.63,107.74)}, rotate = 292.91] [fill={rgb, 255:red, 0; green, 0; blue, 0 }  ][line width=0.08]  [draw opacity=0] (7.2,-1.8) -- (0,0) -- (7.2,1.8) -- cycle    ;
\draw  [dash pattern={on 0.84pt off 2.51pt}]  (110.77,112.31) .. controls (117.08,96.79) and (119.19,52.09) .. (111.33,31.39) ;
\draw [shift={(110.57,29.54)}, rotate = 66.1] [fill={rgb, 255:red, 0; green, 0; blue, 0 }  ][line width=0.08]  [draw opacity=0] (7.2,-1.8) -- (0,0) -- (7.2,1.8) -- cycle    ;
\draw  [color={rgb, 255:red, 208; green, 2; blue, 27 }  ,draw opacity=1 ][fill={rgb, 255:red, 208; green, 2; blue, 27 }  ,fill opacity=1 ] (204.69,134.03) .. controls (204.69,132.88) and (205.5,131.95) .. (206.49,131.95) .. controls (207.47,131.95) and (208.28,132.88) .. (208.28,134.03) .. controls (208.28,135.18) and (207.47,136.11) .. (206.49,136.11) .. controls (205.5,136.11) and (204.69,135.18) .. (204.69,134.03) -- cycle ;
\draw  [color={rgb, 255:red, 0; green, 122; blue, 255 }  ,draw opacity=1 ][fill={rgb, 255:red, 255; green, 255; blue, 255 }  ,fill opacity=1 ] (132.54,103.51) .. controls (132.54,104.64) and (133.34,105.56) .. (134.33,105.56) .. controls (135.32,105.56) and (136.12,104.64) .. (136.12,103.51) .. controls (136.12,102.37) and (135.32,101.45) .. (134.33,101.45) .. controls (133.34,101.45) and (132.54,102.37) .. (132.54,103.51) -- cycle ;

\draw (48.86,133.37) node [anchor=north west][inner sep=0.75pt]  [font=\footnotesize]  {$L$};
\draw (311.4,144.4) node [anchor=north west][inner sep=0.75pt]  [font=\small]  {$t$};
\draw (48.48,14) node [anchor=north west][inner sep=0.75pt]  [font=\footnotesize]  {$R$};
\draw (44.69,74.4) node [anchor=north west][inner sep=0.75pt]  [font=\small]  {$0.5$};
\draw (130.1,147.57) node [anchor=north west][inner sep=0.75pt]  [font=\small]  {$t_{1}$};
\draw (26.25,100.97) node [anchor=north west][inner sep=0.75pt]  [font=\small]  {$Prior$};
\draw (204.24,146.56) node [anchor=north west][inner sep=0.75pt]  [font=\small]  {$t_{2}$};
\draw (65.46,154.07) node [anchor=north west][inner sep=0.75pt]  [font=\footnotesize]  {$ \begin{array}{l}
Suspense-\\
generating
\end{array}$};
\draw (152.29,154.17) node [anchor=north west][inner sep=0.75pt]  [font=\footnotesize]  {$ \begin{array}{l}
r-\\
targeting
\end{array}$};
\draw (265.75,43.55) node [anchor=north west][inner sep=0.75pt]  [font=\footnotesize]  {$:\mu _{t}^{R}$};
\draw (265.5,62.8) node [anchor=north west][inner sep=0.75pt]  [font=\footnotesize]  {$:\mu _{t}^{L}$};
\draw (265.25,82.3) node [anchor=north west][inner sep=0.75pt]  [font=\footnotesize]  {$:\hat{\mu }_{t}$};
\draw (265.25,102.05) node [anchor=north west][inner sep=0.75pt]  [font=\footnotesize]  {$:supp( f)$};

\end{tikzpicture}}
		\caption{\emph{Suspense-$r$} strategy.}
		\label{fig:fingerprint:belief}
	\end{minipage}
\end{figure}

The triple $(t_1, a, t_2)$ admits a natural reading: $t_1$ is the \emph{duration} of suspense (larger $t_1$ pushes the agent's stopping mass further into the future), $a$ is the \emph{direction} of targeting, and $t_2$ is the \emph{scope} of persuasion (larger $t_2$ requires a more dispersed terminal belief and reveals more information). Rather than reducing these strategies to simple recommendations, we describe them in terms of the indirect messages that make them robust to the level of intertemporal commitment.

\begin{remark}[Dynamic consistency]\label{rmk:dynamic_consistency_binary}
	The plot twists in the suspense-generating stage are not simple recommendations to stop or continue; they are indirect interim messages that leave the agent with zero interim surplus at every continuation history. \citet{koh2024attention} establishes that zero interim surplus is exactly the condition needed for an outcome to be implementable as the equilibrium path of a subgame-perfect equilibrium of the dynamic experiment-selection game \emph{without} intertemporal commitment. The strategies derived below are therefore robust to whether the principal can commit.
\end{remark}

The next theorem characterizes when Suspense${}^{t_1}$-$\ell^{t_2}$ is optimal. Define
\begin{align}
	\Psi(t, t_2) := \int_t^{t_2} e^{t - s} h_\ell'(s) \, \d s + e^{t - t_2} h_r'(t_2), \label{eqn:Psi}
\end{align}
which aggregates the marginal benefit of extending the suspense window by $\d t_1$ over the remaining $\ell$-targeting interval $[t_1, t_2]$, exponentially weighted.

\begin{thm}[Optimality of Suspense-$\ell$]\label{thm:binary_suspense_L}
	Let $\Psi(t, t_2)$ be as in \eqref{eqn:Psi} and let $0 \le t_1 < t_2 < \mu_{t_1}^{*-1}(1)$. If the \emph{Suspense${}^{t_1}$-$\ell^{t_2}$} strategy is optimal, then
	\begin{enumerate}[(a), noitemsep]
		\item \textbf{FOC for $t_2$:} $\Delta v + \Delta h(t_2) = h_r'(t_2)$,
		\item \textbf{FOC for $t_1$:} $\Psi(t_1, t_2) \ge h_\ell'(t_1)$, with equality when $t_1 > 0$,
		\item \textbf{Local SOC:} $h_\ell''(t_1) \cdot t_1 \le 0$ and $h_r''(t_2) \le \Delta h'(t_2)$.
	\end{enumerate}
	Conversely, if conditions (a)--(c) hold and, in addition,
	\begin{enumerate}[(d), noitemsep]
		\item \textbf{Global SOC:}
		\(
			\begin{cases}
				h_\ell''(t), h_r''(t) \le 0 & \forall\, t < t_1, \\
				\max\{h_r''(t), 0\} \le \Psi_t(t, t_2) - h_\ell'(t) & \forall\, t \ge t_1,
			\end{cases}
		\)
	\end{enumerate}
	then the \emph{Suspense${}^{t_1}$-$\ell^{t_2}$} strategy is optimal.
\end{thm}
\begin{proof}
	See \cref{appendix:binary_proofs}.
\end{proof}

The conditions are exactly the first- and second-order conditions for the multiplier-augmented payoff $l_{f, \Lambda}$ to be concavified by a supporting hyperplane on the support of the conjectured strategy: (a) pins down the terminal time $t_2$, (b) pins down the suspense duration $t_1$, and (c)--(d) are second-order checks. By symmetry, an analogous characterization holds for Suspense${}^{t_1}$-$r^{t_2}$.

\begin{cor}[Optimality of Suspense-$r$]\label{cor:binary_suspense_R}
	Let $\Psi(t_1, t_2) = \int_{t_1}^{t_2} e^{-s + t_1} h_r'(s) \, \d s + e^{-t_2 + t_1} h_\ell'(t_2)$ and $0 \le t_1 < t_2 < \mu_{t_1}^{\dagger -1}(0)$. If the \emph{Suspense${}^{t_1}$-$r^{t_2}$} strategy is optimal, then
	\begin{enumerate}[(a), noitemsep]
		\item \textbf{FOC for $t_2$:} $\Delta v + \Delta h(t_2) = -h_\ell'(t_2)$,
		\item \textbf{FOC for $t_1$:} $\Psi(t_1, t_2) \ge h_r'(t_1)$, with equality when $t_1 > 0$,
		\item \textbf{Local SOC:} $h_r''(t_1) \cdot t_1 \le 0$ and $h_\ell''(t_2) \le -\Delta h'(t_2)$.
	\end{enumerate}
	Conversely, if conditions (a)--(c) hold and, in addition,
	\begin{enumerate}[(d), noitemsep]
		\item \textbf{Global SOC:}
		\(
			\begin{cases}
				h_\ell''(t), h_r''(t) \le 0 & \forall\, t < t_1, \\
				\max\{h_\ell''(t), 0\} \le e^{t - t_1}\left( h_r'(t_1) - \int_{t_1}^t e^{-s + t_1} h_r'(s) \, \d s \right) - h_r'(t) & \forall\, t \ge t_1,
			\end{cases}
		\)
	\end{enumerate}
	then the \emph{Suspense${}^{t_1}$-$r^{t_2}$} strategy is optimal.
\end{cor}
\begin{proof}
	See \cref{appendix:corollary_3_1}: \cref{cor:binary_suspense_R} follows from \cref{thm:binary_suspense_L} by switching state labels $L \leftrightarrow R$ and action labels $\ell \leftrightarrow r$.
\end{proof}

Together, \cref{thm:binary_suspense_L,cor:binary_suspense_R} pin down the three parameters $(t_1, a, t_2)$ and make the three structural trade-offs explicit. We discuss each in turn.

\paragraph{The persuasion gain shapes the scope of persuasion.} The terminal time $t_2$ is pinned down by condition (a), which equates the marginal additional persuasion gain to the marginal delay gain at $t_2$. The magnitude of $\Delta v$ relative to $\Delta h(t)$ governs its location. When $\Delta v$ is large, the principal maximizes the probability of action $r$, choosing an early $t_2$ with a tightly clustered terminal belief; in the limit $\Delta v \to \infty$, the optimum collapses to the static benchmark of \citet{kamenica2011bayesian}, with posteriors at $0$ and $1/2$. When $\Delta v$ is small relative to $\Delta h$, the principal delays $t_2$ and compensates the agent with a more dispersed terminal belief $\mu^*_{t_1}(t_2)$.

\paragraph{Time-risk preferences shape the duration of suspense.} The suspense duration $t_1$ is pinned down by condition (b). The function $\Psi(t_1, t_2)$ is the marginal loss from extending the suspense window: lengthening it by $\d t_1$ delays stopping but requires additional information over $[t_1, t_2]$ to compensate the agent. Fixing $t_2$ and the level $h_\ell'(t_1)$, suppose $h_\ell$ becomes more concave; then $h_\ell'$ is smaller for every $t > t_1$, which lowers $\Psi(t_1, t_2)$ and pushes $t_1$ upward. Hence \emph{more concave delay payoffs lengthen the suspense stage}. In the extreme of convex $h_\ell$, the local second-order condition rules out interior suspense: the principal sets $t_1 = 0$ and the optimum reduces to pure action-targeting. This is the binary manifestation of a much more general principle---in \cref{sec:time_risk} we show that the principal's time-risk attitude alone, independent of state and action spaces, determines whether suspense is ever optimal.

\paragraph{Persuasion-delay complementarity shapes the direction of targeting.} The choice between Suspense-$\ell$ and Suspense-$r$ hinges on whether persuasion and delay are complements or substitutes, as encoded by the monotonicity of $\Delta h$. Suspense-$\ell$ is optimal when $\Delta h$ is increasing (complementarity): delay enhances the marginal persuasion gain, so the principal benefits from \emph{positively correlating} action $r$ with late stopping. $\ell$-targeting achieves this: over $[t_1, t_2)$ the agent may receive conclusive news that $L$ is the state (early stopping, action $\ell$); absent such news, he stops at $t_2$ and takes $r$. The strategy maximizes the joint probability of late stopping \emph{and} action $r$. When $\Delta h$ is decreasing (substitutability)---delay erodes the marginal persuasion gain, although both $h_\ell, h_r$ remain increasing so the principal still enjoys delay---she instead \emph{negatively correlates} action $r$ with late stopping, which $r$-targeting achieves: action $r$ follows early Poisson arrival of conclusive news that $R$ is the state, while late stopping at $t_2$ yields $\ell$. The strategy maximizes the probability that \emph{either} action $r$ occurs \emph{or} stopping is late. \Cref{fig:direction_illustration} illustrates the two induced joint distributions. This endogenizes the ``good news''/``bad news'' arrival processes that the literature often assumes exogenously \citep{keller2005strategic,che2023keeping}, and delineates which one is optimal.

\begin{figure}[h]
	\centering
	\resizebox{0.7\linewidth}{!}{\tikzset{every picture/.style={line width=0.75pt}} 

\begin{tikzpicture}[x=0.75pt,y=0.75pt,yscale=-1,xscale=1]

\draw [color={rgb, 255:red, 208; green, 2; blue, 27 }  ,draw opacity=1 ]   (456.17,85.07) -- (505.74,85.07) ;
\draw    (270.48,183.92) -- (505.74,183.92) ;
\draw    (270.48,55.42) -- (270.48,183.92) ;
\draw [color={rgb, 255:red, 74; green, 144; blue, 226 }  ,draw opacity=1 ]   (270.48,181.92) .. controls (291.11,181.92) and (340.5,182.26) .. (340,182) .. controls (339.5,181.74) and (339.5,140.26) .. (340,140) .. controls (340.5,139.74) and (351.27,124.66) .. (380.39,112.2) .. controls (409.5,99.74) and (451.97,97.19) .. (455.33,97.19) .. controls (458.69,97.19) and (453.65,97.19) .. (505.74,97.19) ;
\draw [color={rgb, 255:red, 208; green, 2; blue, 27 }  ,draw opacity=1 ]   (270.48,182.51) .. controls (305.77,182.51) and (455.33,182.51) .. (455.33,182.51) .. controls (455.33,182.51) and (455.33,130.26) .. (455.33,85.07) ;
\draw  [color={rgb, 255:red, 208; green, 2; blue, 27 }  ,draw opacity=1 ][fill={rgb, 255:red, 208; green, 2; blue, 27 }  ,fill opacity=1 ] (452.81,85.07) .. controls (452.81,83.9) and (453.94,82.95) .. (455.33,82.95) .. controls (456.72,82.95) and (457.85,83.9) .. (457.85,85.07) .. controls (457.85,86.24) and (456.72,87.19) .. (455.33,87.19) .. controls (453.94,87.19) and (452.81,86.24) .. (452.81,85.07) -- cycle ;
\draw [color={rgb, 255:red, 74; green, 144; blue, 226 }  ,draw opacity=1 ]   (194.85,55.42) -- (244.43,55.42) ;
\draw    (10,182.51) -- (245.27,182.51) ;
\draw    (10,55.42) -- (10,182.51) ;
\draw [color={rgb, 255:red, 208; green, 2; blue, 27 }  ,draw opacity=1 ]   (10,181.1) .. controls (31.54,181.1) and (79.5,180.59) .. (80,181) .. controls (80.5,181.41) and (79.5,140.26) .. (80,140) .. controls (80.5,139.74) and (87.5,125.26) .. (117,109) .. controls (146.5,92.74) and (195.62,94.47) .. (198.98,94.47) .. controls (202.34,94.47) and (197.3,94.47) .. (249.4,94.47) ;
\draw [color={rgb, 255:red, 74; green, 144; blue, 226 }  ,draw opacity=1 ]   (10,181.1) .. controls (21.03,181.1) and (43.21,181.1) .. (68.5,181.1) .. controls (124.15,181.1) and (194.85,181.1) .. (194.85,181.1) .. controls (194.85,181.1) and (194.85,100.61) .. (194.85,55.42) ;
\draw  [color={rgb, 255:red, 74; green, 144; blue, 226 }  ,draw opacity=1 ][fill={rgb, 255:red, 74; green, 144; blue, 226 }  ,fill opacity=1 ] (192.33,55.42) .. controls (192.33,54.25) and (193.46,53.3) .. (194.85,53.3) .. controls (196.25,53.3) and (197.37,54.25) .. (197.37,55.42) .. controls (197.37,56.59) and (196.25,57.54) .. (194.85,57.54) .. controls (193.46,57.54) and (192.33,56.59) .. (192.33,55.42) -- cycle ;
\draw  [color={rgb, 255:red, 208; green, 2; blue, 27 }  ,draw opacity=1 ][fill={rgb, 255:red, 208; green, 2; blue, 27 }  ,fill opacity=1 ] (160,20) -- (180,20) -- (180,40) -- (160,40) -- cycle ;
\draw  [color={rgb, 255:red, 74; green, 144; blue, 226 }  ,draw opacity=1 ][fill={rgb, 255:red, 74; green, 144; blue, 226 }  ,fill opacity=1 ] (290,20) -- (310,20) -- (310,40) -- (290,40) -- cycle ;
\draw   (10.5,186) .. controls (10.5,190.67) and (12.83,193) .. (17.5,193) -- (33.75,193) .. controls (40.42,193) and (43.75,195.33) .. (43.75,200) .. controls (43.75,195.33) and (47.08,193) .. (53.75,193)(50.75,193) -- (70,193) .. controls (74.67,193) and (77,190.67) .. (77,186) ;
\draw   (79.5,186.5) .. controls (79.5,191.17) and (81.83,193.5) .. (86.5,193.5) -- (126,193.5) .. controls (132.67,193.5) and (136,195.83) .. (136,200.5) .. controls (136,195.83) and (139.33,193.5) .. (146,193.5)(143,193.5) -- (185.5,193.5) .. controls (190.17,193.5) and (192.5,191.17) .. (192.5,186.5) ;
\draw   (271,186.5) .. controls (271,191.17) and (273.33,193.5) .. (278,193.5) -- (294.25,193.5) .. controls (300.92,193.5) and (304.25,195.83) .. (304.25,200.5) .. controls (304.25,195.83) and (307.58,193.5) .. (314.25,193.5)(311.25,193.5) -- (330.5,193.5) .. controls (335.17,193.5) and (337.5,191.17) .. (337.5,186.5) ;
\draw   (340,187) .. controls (340,191.67) and (342.33,194) .. (347,194) -- (386.5,194) .. controls (393.17,194) and (396.5,196.33) .. (396.5,201) .. controls (396.5,196.33) and (399.83,194) .. (406.5,194)(403.5,194) -- (446,194) .. controls (450.67,194) and (453,191.67) .. (453,187) ;
\draw  [color={rgb, 255:red, 208; green, 2; blue, 27 }  ,draw opacity=1 ][fill={rgb, 255:red, 208; green, 2; blue, 27 }  ,fill opacity=1 ] (78.35,139.95) .. controls (78.35,138.85) and (79.24,137.95) .. (80.35,137.95) .. controls (81.45,137.95) and (82.35,138.85) .. (82.35,139.95) .. controls (82.35,141.06) and (81.45,141.95) .. (80.35,141.95) .. controls (79.24,141.95) and (78.35,141.06) .. (78.35,139.95) -- cycle ;
\draw  [color={rgb, 255:red, 74; green, 144; blue, 226 }  ,draw opacity=1 ][fill={rgb, 255:red, 74; green, 144; blue, 226 }  ,fill opacity=1 ] (338,140) .. controls (338,138.9) and (338.9,138) .. (340,138) .. controls (341.1,138) and (342,138.9) .. (342,140) .. controls (342,141.1) and (341.1,142) .. (340,142) .. controls (338.9,142) and (338,141.1) .. (338,140) -- cycle ;

\draw (334.72,188.81) node [anchor=north west][inner sep=0.75pt]  [font=\scriptsize] [align=left] {$\displaystyle t_{1}$};
\draw (450.67,189.22) node [anchor=north west][inner sep=0.75pt]  [font=\scriptsize] [align=left] {$\displaystyle t_{2}$};
\draw (497.72,190.63) node [anchor=north west][inner sep=0.75pt]  [font=\scriptsize] [align=left] {$\displaystyle \overline{t}$};
\draw (74.24,189.39) node [anchor=north west][inner sep=0.75pt]  [font=\scriptsize] [align=left] {$\displaystyle t_{1}$};
\draw (190.19,187.81) node [anchor=north west][inner sep=0.75pt]  [font=\scriptsize] [align=left] {$\displaystyle t_{2}$};
\draw (237.25,189.22) node [anchor=north west][inner sep=0.75pt]  [font=\scriptsize] [align=left] {$\displaystyle \overline{t}$};
\draw (21,172) node [anchor=north west][inner sep=0.75pt]  [font=\small] [align=left] {$ $};
\draw (11,54) node [anchor=north west][inner sep=0.75pt]  [font=\scriptsize] [align=left] {$\displaystyle  \begin{array}{{>{\displaystyle}l}}
persuasion-delay\ complementarity:\\
-\ late\ stopping\ on\ action\ r\\
-early\ stopping\ on\ action\ l
\end{array}$};
\draw (272,54) node [anchor=north west][inner sep=0.75pt]  [font=\scriptsize] [align=left] {$\displaystyle  \begin{array}{{>{\displaystyle}l}}
persuasion-delay\ substitutability:\\
-\ late\ stopping\ on\ action\ l\\
-\ early\ stopping\ on\ action\ r
\end{array}$};
\draw (189,22) node [anchor=north west][inner sep=0.75pt]  [font=\small] [align=left] {$\displaystyle \mathbb{P}( \tau \leq t,a=l)$};
\draw (318,22) node [anchor=north west][inner sep=0.75pt]  [font=\small] [align=left] {$\displaystyle \mathbb{P}( \tau \leq t,a=r)$};
\draw (13,201) node [anchor=north west][inner sep=0.75pt]  [font=\small] [align=left] {suspense};
\draw (103,199) node [anchor=north west][inner sep=0.75pt]  [font=\small] [align=left] {$\displaystyle \ell $-targeting};
\draw (277,201) node [anchor=north west][inner sep=0.75pt]  [font=\small] [align=left] {suspense};
\draw (367,199) node [anchor=north west][inner sep=0.75pt]  [font=\small] [align=left] {$\displaystyle r$-targeting};

\end{tikzpicture}}
	\caption{Persuasion-delay complementarity versus substitutability.}
	\label{fig:direction_illustration}
\end{figure}

\paragraph{Generalized completeness.} \Cref{thm:binary_suspense_L,cor:binary_suspense_R} characterize the two leading patterns; a natural question is whether they are exhaustive. The next proposition gives an affirmative answer; the formal statement, which extends the patterns to several corner cases, is in \cref{appendix:binary_completeness_formal}.
\begin{prop}[Informal]\label{prop:binary_completeness_informal}
	If $h_\ell$ and $h_r$ are concave, then a ``generalized'' Suspense-$\ell$ (Suspense-$r$) strategy is optimal whenever persuasion and delay are complements, $\Delta h' > 0$ (substitutes, $\Delta h' < 0$).
\end{prop}
The qualifier ``generalized'' refers to corner specifications in which the suspense stage or one targeting stage degenerates; the take-away is that, under concave delay payoffs and monotone $\Delta h$, the qualitative form of the optimum is determined entirely by the sign of $\Delta h'$.

\subsection{Discussion}\label{ssec:binary_discussion}

The three trade-offs combine in informative ways. Consider the prosecutor--jury setting. A conviction-seeking prosecutor in a high-profile case benefits from prolonging the trial---publicity raises the marginal payoff from conviction over time---so persuasion and delay are complements, and \cref{thm:binary_suspense_L} predicts $\ell$-targeting: the prosecutor searches for conclusive evidence of \emph{innocence} (action $\ell$, acquittal) to be used as a Poisson arrival that ends the trial; absent such evidence, the trial ends at $t_2$ with the preferred verdict of conviction (action $r$). Conversely, a prosecutor whose payoff is supplemented by inducing a plea bargain through delay faces persuasion and delay as substitutes, and \cref{cor:binary_suspense_R} predicts $r$-targeting: she searches for conclusive evidence of \emph{guilt} (action $r$) and accepts the plea outcome (action $\ell$) if the search drags on. The same preferences over the action produce qualitatively opposite arrival processes.

Finally, suspense generation here is conceptually related to, but distinct from, the suspense-optimal information of \citet{ely2015suspense}. There, the two belief paths are chosen so that the residual variance of the terminal belief declines at a constant rate, smoothing the agent's ``consumption'' of variance reduction. In our model, the two paths are chosen so that the agent's continuation value declines at a rate that exactly offsets his loss from delay, leaving zero interim surplus. The two notions coincide in special cases but rest on different motivations: ours is driven by tying the principal's hands intertemporally, theirs by smoothing variance reduction.\footnote{In frictional dynamic learning under time-risk aversion, the \citeauthor{ely2015suspense} pattern reemerges \citep[e.g.,][]{georgiadis2021information}; the time-risk attitude plays a structurally similar role, but the underlying strategy and economic force differ.}

\section{Application 2: General Time-Risk Preferences}\label{sec:time_risk}

The binary analysis identified time-risk preferences as the determinant of the \emph{duration} of suspense. We now turn the method on itself to show that this is a \emph{general structural feature} of the framework: regardless of the state and action spaces, the curvature in time of the principal's and agent's payoffs governs whether the optimal policy generates suspense at all. The two results below predict the qualitative shape of the solution before any computation, and together exhaust the curvature dichotomy. Throughout, $T = [0, \overline{T}]$ and $U, V$ are twice differentiable in $t$.

\subsection{Time-risk loving rules out suspense}\label{ssec:time_risk_loving}

Under time-risk loving preferences---convexity of $V$ and $U$ in time---the principal's marginal gain from delay accelerates and the agent's marginal loss decelerates. The duality recipe then leaves no slack in which to insert a suspense interval: the multiplier $\Lambda$ must charge the obedience constraint continuously throughout the support of $f$, equivalently \eqref{eqn:OCC} binds everywhere on that support. Any pre-stopping interval on which the agent retains strict surplus would be dominated by reallocating mass toward a stopping time at which $V$ is more convex.

\begin{thm}[Time-risk loving]\label{thm:time_risk_loving}
Suppose $(V''_t > 0,\ U''_t \geq 0)$ or $(V'_t > 0,\ V''_t \geq 0,\ U''_t > 0)$. If $f$, $\Lambda$ solve \eqref{eqn:dual} and $\overline{t} = \sup \mathrm{supp}(f) < \overline{T}$, then \eqref{eqn:OCC} must be binding for all $t \le \overline{t}$.
\end{thm}
\begin{proof}
See \cref{appendix:time_risk_loving_proof}.
\end{proof}

\Cref{thm:time_risk_loving} has two interlocking implications. First, the implementing simple recommendation is \emph{dynamically consistent}: \eqref{eqn:OCC} binding throughout the support is exactly the zero-interim-surplus property that, by \citet{koh2024attention}, suffices for subgame-perfect implementability under limited commitment---the principal loses nothing from being unable to commit, since the agent follows the direct ``continue'' instruction at every interim time. Second, the binary-case analog is a Suspense${}^{0}$ strategy: setting $t_1 = 0$ in \cref{thm:binary_suspense_L,cor:binary_suspense_R} recovers exactly the no-suspense action-targeting strategy. A canonical example is exponential discounting: if $V(\mu, t) = e^{-rt} v(\mu)$ and $U(\mu, t) = e^{-r't} u(\mu)$ with $u > 0$ and $v < 0$, then $V''_t > 0$ and $U''_t > 0$, so the optimal policy contains no suspense stage. Affine-in-time specifications are the borderline case ($V''_t = U''_t = 0$), where the theorem holds with weak inequality. Both classes are pervasive in macroeconomic and financial applications.

\subsection{Time-risk aversion bounds the persuasion window}\label{ssec:time_risk_averse}

The complementary case---concavity of $V$ and $U$ in time---reverses the geometry. The principal's marginal gain from delay diminishes; the agent's marginal cost of delay rises. The optimum cannot pack all stopping mass into a single instant, because the principal's payoff is concave in time, but \eqref{eqn:OCC} cannot remain tight indefinitely either. The persuasion window is therefore non-degenerate and, as the theorem shows, bounded. We need a mild regularity condition.

\begin{defi}\label{defi:regular}
Suppose $f$ and $\Lambda$ solve \eqref{eqn:dual}. Let $\overline{t} = \sup_t \mathrm{supp}(f)$. $f, \Lambda$ are \emph{regular} if (i) $\exists\, \mu^* = \lim_{t \to \overline{t}^-} \widehat{\mu}_t$, (ii) $(\mu^*, \overline{t}) \in \mathrm{supp}(f)$, and (iii) $\dfrac{\Lambda(\overline{t}) - \Lambda(t)}{\overline{t} - t}$ is bounded for $t \to \overline{t}^-$.
\end{defi}

This requires the continuation belief $\widehat{\mu}_t$ to converge to a belief-time pair in the support of $f$ as $t \to \overline t$, with the multiplier accumulating at a bounded rate; it holds in all our applications. Define
\begin{align*}
\begin{dcases}
\overline{J}_V(t) = \max_{\mu \in \Delta(\Theta)} V'_t(\mu, t) \\
\underline{J}_V(t) = \min_{\mu \in \Delta(\Theta)} V'_t(\mu, t)
\end{dcases}
\quad \text{and} \quad
\begin{dcases}
\overline{J}_U(t) = \max_{\mu \in \Delta(\Theta)} U'_t(\mu, t) \\
\underline{J}_U(t) = \min_{\mu \in \Delta(\Theta)} U'_t(\mu, t)
\end{dcases},
\end{align*}
the range of $V'_t$ and $U'_t$ across beliefs at $t$; when $V$ is separable in $\mu$ and $t$, $\overline{J}_V = \underline{J}_V$.

\begin{thm}[Time-risk averse]\label{thm:time_risk_averse}
Suppose $V''_t < 0$, $U''_t < 0$, and regular $f$ and $\Lambda$ (per \cref{defi:regular}) solve \eqref{eqn:dual} with $\underline{t} = \inf_t \mathrm{supp}(f) > 0$. Then
\begin{align*}
\overline{t} \le \max\left\{ \overline{J}_V^{-1} \circ \underline{J}_V(\underline{t}),\ \overline{J}_U^{-1} \circ \underline{J}_U(\underline{t}) \right\},
\end{align*}
where the max is taken over the image of the correspondences $\overline{J}_V^{-1}$ and $\overline{J}_U^{-1}$.
\end{thm}
\begin{proof}
See \cref{appendix:time_risk_averse_proof}.
\end{proof}

The bound measures the time-concavity of $V$ and $U$ relative to their variation across beliefs: the more concave $U$ and $V$ are in $t$, and the less their slopes vary across beliefs, the smaller the persuasion window; in the separable extreme it must be degenerate. \Cref{thm:time_risk_averse} again carries two implications. First, a non-degenerate suspense interval $[0, \underline{t})$ necessarily precedes any stopping; if the agent is strictly impatient, \eqref{eqn:OCC} is slack on $[0, \underline{t})$ and the implementing simple recommendation is dynamically \emph{inconsistent}. By \citet{koh2024attention}, the principal must then implement the commitment optimum via informative indirect interim messages that exhaust the agent's interim surplus, tying her hands on the continuation game. Second, the Suspense${}^{t_1}$ strategies of \cref{sec:binary} are exactly the binary manifestation: $t_1 > 0$ is the pre-stopping interval $[0, \underline{t})$, and the belief paths $\mu^L_t, \mu^R_t$ realize the indirect-message implementation in the binary state space.

\subsection{Discussion}\label{ssec:time_risk_discussion}

\Cref{thm:foc} and \cref{thm:time_risk_loving,thm:time_risk_averse} play complementary roles: the first tells us \emph{how} to solve the design problem (write the Lagrangian, derive the multiplier, concavify), the latter two tell us \emph{what form} the solution takes as a function of time preferences alone. The three trade-offs of \cref{sec:binary} are consequences of \cref{thm:time_risk_averse} plus the binary-action geometry: the ``time-risk shapes duration'' trade-off is exactly \cref{thm:time_risk_averse} specialized to a binary state space, with $t_1$ the binary realization of $[0, \underline{t})$. Time-risk preferences are not a binary phenomenon; the binary application merely makes them visible in closed form. \Cref{thm:time_risk_loving,thm:time_risk_averse} are stated for a finite state space, but a continuum analog is implicit in \cref{sec:linear}: the ``investor indifference'' property of the optimal tail-censorship policy is the continuum counterpart of \eqref{eqn:OCC} binding for a time-discounted impatient agent. The qualitative dichotomy---loving rules out suspense, aversion necessitates it---is robust, though the exact bound on $\overline{t}$ is state-space dependent.

\section{Application 3: Linear Persuasion}\label{sec:linear}

Our third application is a \emph{linear persuasion} environment, in which the state lies in a continuous interval and both players' indirect utilities depend on the belief only through its mean. The static counterpart is extensively studied \citep{gentzkow2016rothschild,dworczak2019simple,kolotilin2018optimal,kolotilin2022censorship,kleiner2021extreme,yang2024design}; we develop its dynamic version. Two results follow. First, the first-order characterization extends to the continuous state with a transparent structural feature: the dual \emph{decouples} the time-coupled obedience constraint from the belief-coupled mean-preservation constraint. Second, in a canonical central-bank application, the optimal disclosure is \emph{dynamic tail-censorship}.

\subsection{Continuous-state framework and the first-order characterization}\label{ssec:linear_framework}

\paragraph*{Setup.} The framework of \cref{sec:framework} extends naturally to $\Theta = [0, 1]$, with prior $\mu_0 \in \Delta(\Theta)$ a Borel probability measure on $[0,1]$. The flow utilities are linear in the state, so $U$ and $V$ depend on a belief $\mu$ only through its mean $\E_\mu[\theta]$; we index indirect utilities by the posterior mean, writing $\mm \in \MM := [0, 1]$ and treating $U(\mm, t), V(\mm, t)$ as functions on $\MM \times T$. Pooling continuation messages reduces the problem to choosing a joint distribution over posterior means and stopping times, $g \in \overline{\Delta}_{\mu_0} := \{g \in \Delta(\MM \times T) \mid \int \mm \, g(\d \mm, \d t) = \E_{\mu_0}[\theta]\}$. Two constraints structure feasibility: the conditional obedience constraint of \cref{ssec:relaxed}, reformulated in posterior means, and---new to the continuum---a Bayes-plausibility (mean-preserving-spread) requirement.

The relaxed problem is
\begin{align}
	\sup_{g \in \overline{\Delta}_{\mu_0}} & \int V(\mm, t) g(\d \mm, \d t) \tag{R1} \label{prob:cont:relaxed} \\
	\text{s.t.} \quad & \int_{y > t} U(\mm, y) g(\d \mm, \d y) \ge \int_{y > t} g(\d \mm, \d y) \cdot U\!\left( \frac{\int_{y > t} \mm \, g(\d \mm, \d y)}{\int_{y > t} g(\d \mm, \d y)}, t \right), \quad \forall t \in T^\circ, \tag{OC-C}\label{eqn:OCC-C} \\
	& \int_0^{\mm} \!\!\int_0^x \!\left( \mu_0(\d z) - \int_{t \in T} g(\d z, \d t) \right) \d x \ge 0, \quad \forall \mm \in [0, 1]. \tag{MPS}\label{eqn:MPS}
\end{align}
\eqref{eqn:OCC-C} is the continuous-state obedience constraint; \eqref{eqn:MPS} encodes mean-preserving-spread feasibility. Together they characterize implementability: a revelation-principle argument that simultaneously collapses continuation messages and continuation beliefs establishes the equivalence of \eqref{prob:1} and \eqref{prob:cont:relaxed}.\footnote{See \cref{appendix:reduction_continuum} for the continuum revelation argument; it generalizes \cref{lem:direct_communication} by additionally collapsing continuation beliefs, the extra reduction step needed in the linear environment.}

\paragraph*{The dual.} Define the Lagrangian $\mathcal{L} : \overline{\Delta}_{\mu_0} \times \mathcal{B}(T^\circ) \times \mathcal{C} \to \R$ by
\begin{align*}
	\mathcal{L}(g, \Lambda, \Gamma) := & \int V(\mm, \tau) g(\d \mm, \d \tau) + \int_{t \in T^\circ} \!\!\left( \int_{\tau > t} U(\mm, \tau) g(\d \mm, \d \tau) - \int_{\tau > t} g(\d \mm, \d \tau) \cdot U(\widehat{\mm}_t, t) \right) \d \Lambda(t) \\
	& + \int_0^1 \!\!\left( \int_0^\mm \!\!\int_0^x \!\left( \mu_0(\d y) - \int_{\tau \in T} g(\d y, \d \tau) \right) \d x \right) \d^2 \Gamma(\mm),
\end{align*}
where $\widehat{\mm}_t := \E_g[\mm \mid \tau > t]$, $\Lambda \in \mathcal{B}(T^\circ)$ is as in \cref{lem:duality}, and $\Gamma \in \mathcal{C}$ is a convex function on $[0, 1]$ whose second derivative (a positive measure) is the multiplier on \eqref{eqn:MPS}. The derivative of the Lagrangian along $(\mm, t)$ is
\begin{align*}
	l_{g, \Lambda, \Gamma}(\mm, t) := V(\mm, t) + \Lambda(t) U(\mm, t) - \int_{\tau < t} \!\left[ U(\widehat{\mm}_\tau, \tau) + U'_\mm(\widehat{\mm}_\tau, \tau) (\mm - \widehat{\mm}_\tau) \right] \d \Lambda(\tau) - \Gamma(\mm),
\end{align*}
adopting the convention that $U'_\mm(\widehat{\mm}_\tau, \tau)$ denotes any subgradient selection at non-differentiability. The Lagrangian specializes to that of \cref{lem:duality} when $\Gamma \equiv 0$; the additional $-\Gamma(\mm)$ term encodes \eqref{eqn:MPS}.

\begin{thm}[First-order characterization for the continuum]\label{thm:foc_continuum}
	A joint distribution $g \in \overline{\Delta}_{\mu_0}$ solves \eqref{prob:cont:relaxed} if there exist $\Lambda \in \mathcal{B}(T^\circ)$, a time-invariant convex function $\Gamma \in \mathcal{C}$, and a selection of subgradients such that
	\begin{align}
		l_{g, \Lambda, \Gamma}(\mm, t) \le 0, \quad \text{with equality on the support of } g, \tag{FOC-C} \label{eqn:foc-c}
	\end{align}
	$g$ satisfies \eqref{eqn:OCC-C} and \eqref{eqn:MPS}, and the complementary slackness condition $\mathcal{L}(g, \Lambda, \Gamma) = \E_g[V(\mm, \tau)]$ holds.
\end{thm}
\begin{proof}
	See \cref{appendix:foc_continuum_proof}.
\end{proof}

The structural feature worth emphasizing is that the two multipliers serve qualitatively different roles: $\Lambda$ is \emph{belief-invariant} (a function of time only) and $\Gamma$ is \emph{time-invariant} (a function of the posterior mean only). The dual thus \emph{decouples} the time-coupled obedience constraint from the belief-coupled mean-preservation constraint: $\Lambda$ takes the same shape as in \cref{thm:foc}, while $\Gamma$ is borrowed from the linear-programming approach to static linear persuasion \citep{kolotilin2018optimal,dworczak2019simple,kolotilin2022censorship,kleiner2021extreme,yang2024design}. This decoupling is the technical reason the dynamic problem inherits the tractability of its static counterpart: the recipe of \cref{ssec:recipe} applies separately along each dimension, with $\Lambda$ pinned down by a one-dimensional ODE in time and $\Gamma$ by the static-style concavification in $\mm$ at each cross-section.

\subsection{Optimal dynamic tail-censorship}\label{ssec:tail}

We apply the recipe of \cref{ssec:recipe} with \cref{thm:foc_continuum} to a canonical macroeconomic environment: dynamic disclosure of fundamentals to an impatient investor facing a real-option choice.

\paragraph*{The central-bank--investor problem.} The state $\theta \in [0, 1]$ is the economy's fundamental. A representative investor (the agent) faces a real-option choice between a safe asset yielding $1/2$ and a risky asset yielding the state-dependent return $\theta$. The investor is impatient and discounts at rate $r > 0$, so $U(\theta, t) = e^{-r t} \max\{\theta, 1/2\}$.\footnote{An equivalent real-options formulation has the agent be a firm choosing between a risky project with payout $\theta$ and a safe action (e.g., buybacks or dividends) paying $1/2$, with $r$ the opportunity cost of waiting in the sense of \cite{stokey2008economics}.} The investor's \emph{wait-and-see} motive is a canonical source of investment fluctuations \citep{bernanke1983irreversibility,mcdonald1986value,dixit1994investment,stokey2016wait}. The central bank (the principal) sequentially conducts stress tests to disclose information about $\theta$. Its preferences are countercyclical---it wishes to slow investment when the economy is overheating ($\theta > \underline{\theta}$) and accelerate it in contraction ($\theta < \underline{\theta}$)---which we capture by $V(\mm, t) = (\mm - \underline{\theta}) \cdot t$. Let $\mu(\cdot)$ denote the prior PDF and $\mathcal{M}(\cdot)$ its CDF.

\paragraph*{Tail-censorship.} A \emph{tail-censorship} policy reveals the state only if it lies within a deterministically evolving interval $[\alpha(t), \beta(t)]$; silence at $t$ conveys that the state is in one of the tails $[0, \alpha(t)) \cup (\beta(t), 1]$. The interval expands over time ($\alpha$ falls, $\beta$ rises), so persistent silence becomes increasingly informative about the tails (\cref{fig:tail:censor}).

\begin{figure}[htbp]
	\centering
	\tikzset{every picture/.style={line width=0.75pt}}
	\begin{tikzpicture}[x=0.75pt,y=0.75pt,yscale=-1,xscale=1]
	\draw   (75.75,55.33) -- (405.75,55.33) -- (405.75,166.85) -- (75.75,166.85) -- cycle ;
	\draw [color={rgb, 255:red, 0; green, 122; blue, 255 }  ,draw opacity=1 ]   (75.75,90.54) .. controls (161.25,84.85) and (262.5,73.11) .. (330.75,55.33) ;
	\draw [color={rgb, 255:red, 0; green, 122; blue, 255 }  ,draw opacity=1 ]   (75.75,143.37) .. controls (111.75,156.46) and (198.75,165.85) .. (323.25,166.85) ;
	\draw   (233.25,162.89) .. controls (237.92,162.89) and (240.25,160.56) .. (240.25,155.89) -- (240.25,129) .. controls (240.25,122.33) and (242.58,119) .. (247.25,119) .. controls (242.58,119) and (240.25,115.67) .. (240.25,109)(240.25,112) -- (240.25,82.1) .. controls (240.25,77.43) and (237.92,75.1) .. (233.25,75.1) ;
	\draw   (77.25,141.87) .. controls (81.92,141.87) and (84.25,139.54) .. (84.25,134.87) -- (84.25,126.88) .. controls (84.25,120.21) and (86.58,116.88) .. (91.25,116.88) .. controls (86.58,116.88) and (84.25,113.55) .. (84.25,106.88)(84.25,109.88) -- (84.25,98.89) .. controls (84.25,94.22) and (81.92,91.89) .. (77.25,91.89) ;
	\draw (402.75,164.61) node [anchor=north west][inner sep=0.75pt]    {$t$};
	\draw (64.5,159.25) node [anchor=north west][inner sep=0.75pt]    {$0$};
	\draw (64.5,47.73) node [anchor=north west][inner sep=0.75pt]    {$1$};
	\draw (255.75,109.6) node [anchor=north west][inner sep=0.75pt]  [font=\small] [align=left] {Revealed up to $\displaystyle t$};
	\draw (96.75,100.97) node [anchor=north west][inner sep=0.75pt]  [font=\small] [align=left] {Revealed \\at $\displaystyle t=0$};
	\draw (178,142.4) node [anchor=north west][inner sep=0.75pt]  [font=\small]  {$\alpha ( t)$};
	\draw (181,62.4) node [anchor=north west][inner sep=0.75pt]  [font=\small]  {$\beta ( t)$};
	\end{tikzpicture}
	\caption{Illustration of a tail-censorship policy.}
	\label{fig:tail:censor}
\end{figure}

\begin{thm}[Optimality of tail-censorship]\label{thm:tail_censorship}
	The optimal policy in the central-bank--investor problem is a tail-censorship policy with thresholds $\alpha$ and $\beta$ satisfying the ODE pair
	\begin{align}
		\begin{dcases}
			\alpha'(t) &= -\frac{\mathcal{M}(\alpha(t)) + 1 - \mathcal{M}(\beta(t))}{\mu(\alpha(t)) (1/2 - \alpha(t))} \cdot r \cdot \beta(t) \\
			\beta'(t) &= \frac{\beta(t) - \alpha(t)}{1/2 - \alpha(t)} \cdot r \cdot \beta(t),
		\end{dcases}
		\label{eqn:ODE_continuum}
	\end{align}
	with terminal conditions $\alpha(\overline{T}) = 0$, $\beta(\overline{T}) = 1$ and initial condition $\beta(0) = \max\{\underline{\theta}, \theta^*\}$ for some $\overline{T} > 0$ and $\theta^* \in [1/2, (1 + \E[\theta \mathbf{1}\{\theta < 1/2\}]) / (1 + \mathcal{M}(1/2))]$.
\end{thm}
\begin{proof}
	See \cref{appendix:tail_proof}.
\end{proof}

Two characterizing properties capture the economics of the policy. \emph{No upward surprise:} revelation at the upper threshold never induces an upward jump in the posterior mean; equivalently the interim mean equals the upper threshold, $\widehat{\mm}_t = \beta(t)$. Distinguishing states above $1/2$ has no informational value for the agent (it does not change his optimal action), so delaying or expediting their revelation affects his incentive only through the symmetric discounting loss; at the optimum the principal is indifferent across delays, expediting states $\theta < \widehat{\mm}_t$ and delaying $\theta > \widehat{\mm}_t$ until $\widehat{\mm}_t = \beta(t)$. \emph{Investor indifference:} the agent is indifferent between stopping at $t$ and continuing to $t + \d t$, which translates into
\begin{align*}
	\underbrace{(\mathcal{M}(\alpha(t)) + 1 - \mathcal{M}(\beta(t)))}_{\text{Continuation probability}} \cdot \underbrace{r \widehat{\mm}_t \, \d t}_{\text{Discounting loss}} + \underbrace{(1/2 - \alpha(t))}_{\text{Gain from info}} \cdot \underbrace{\d \mathcal{M}(\alpha(t))}_{\text{Prob.\ of arrival}} = 0,
\end{align*}
balancing the discounting loss against the instrumental value of the marginal revelation of $\theta = \alpha(t)$.

\begin{remark}[Dynamic consistency]\label{rmk:dynamic_consistency_linear}
	Investor indifference is exactly the dynamic-consistency property of \citet{koh2024attention}: it leaves the agent with zero interim surplus at every continuation time along the optimal path, so by the indirect-communication principle the tail-censorship policy is SPE-implementable under limited commitment.
\end{remark}

\paragraph*{Economic interpretation.} The principal would naively prefer the agent to stop late when $\theta > \underline{\theta}$ and early when $\theta < \underline{\theta}$. This first-best is incentive-incompatible: a recommendation to wait would reveal $\theta > \underline{\theta}$, after which information has no value and the agent stops. The principal must therefore \emph{stagger} the revelation of low states to sustain the incentive to wait. The lower threshold $\alpha(t)$ reveals low states with high informational value $|1/2 - \theta|$ latest, trading discounting loss against instrumental value at each instant; the upper threshold $\beta(t)$ does the dual job for high states, revealing those with the smallest delay gain (close to $\underline{\theta}$) earliest and those with the largest (close to $1$) latest.

\subsection{Real-world institutions}\label{ssec:linear_institutions}

The expanding-revelation-interval pattern resonates with disclosure practice in monetary policy and corporate finance, where designers visibly withhold information in extreme states and reveal in moderate ones. At the onset of COVID-19, the Federal Reserve \emph{withheld} its Summary of Economic Projections entirely in March 2020 and resumed only partially in June \citep{FOMC2020SEP,FOMC2020Minutes}; the Bank of Canada's April 2020 \emph{Monetary Policy Report} declined to present a base-case projection given the uncertainty \citep{BoC2020MPR}; and the Bank of England replaced point forecasts with an illustrative scenario \citep{BoE2020MPR,BoE2020MayMinutes}. Public companies similarly \emph{reaffirm} earnings guidance mid-quarter when results lie in the anticipated band, while silence is read as a material miss or beat \citep{billings2015guidance,wsj2024guidance}; a delayed or omitted dividend declaration is immediately interpreted as a negative tail event \citep{reuters2023dividends}. In each case the institution discloses in moderate states and stays silent in extreme states, with the revelation interval expanding over time---the pattern \cref{thm:tail_censorship} captures.

\subsection{Discussion}\label{ssec:linear_discussion}

The dual decoupling is what makes the dynamic linear-persuasion problem tractable: the cross-sections of the optimal joint distribution at each $t$ are static censorship policies whose thresholds evolve according to the dynamic multiplier. A natural variant has the principal prefer \emph{late} investment regardless of the state (uniform stimulus rather than business-cycle smoothing); we conjecture the optimum is then a \emph{dynamic upper-censorship} policy in which $\beta$ is flat and only $\alpha$ falls over time, and the same recipe applies with a single-threshold ODE for $\alpha$. More broadly, the duality method of \cref{thm:foc} and its continuum analog \cref{thm:foc_continuum} apply to any linear-persuasion environment with flexibly specified preferences over the posterior mean and time; the decoupling reduces the dynamic problem to two essentially one-dimensional sub-problems, opening a tractable path to a wide range of dynamic disclosure problems with continuous state.

\section{Extensions}\label{sec:extensions}

The framework and the duality method extend to several variations of the basic setup; each fits inside \eqref{prob:relaxed} after a localized modification of the feasibility set or the principal's effective payoff, and \cref{thm:foc} applies to the modified problem without change. We highlight two.

\subsection{Evolving state and public information}\label{ssec:evolving_state}

Suppose there are finitely many state variables $\{\theta_{t_i}\}$ and the principal can design experiments only about variables already realized ($\theta_{t_i}$ may enter an experiment at $t$ only if $t_i \le t$). This models gradual learning about a complex state, and accommodates the interpretation in which each $\theta_{t_i}$ is an increment of a process and $X_t = \sum_{t_i \le t} \theta_{t_i}$ is an evolving observable. Let $\widehat{\theta} = (\theta_{t_i})$, $\Theta$ its state space, and $\nu_t(\widehat{\theta})$ the posterior conditional on realizations up to $t$. When the $\theta_{t_i}$ are unobservable to both players, the belief process is restricted to $\overline{D} := \{(\mu, t) \in D \mid \mu \in \mathrm{conv}(\bigcup_\theta \{\nu_t(\theta)\})\}$; when all variables except $\theta_0$ are public (as in \citealp{orlov2020persuading,bizzotto2021dynamic}), it is restricted to $\underline{D} := \{(\mu, t) \in D \mid \mu \in \bigcup_{\theta_{t_i}} \mathrm{conv}(\bigcup_{\theta_0}\{\nu_t(\theta)\})\}$. Since the time-section is monotone in $t$ under inclusion, the direct-communication reduction of \cref{appendix:reduction} carries over and \cref{thm:foc} applies to the restricted problem directly.

\begin{eg}[Beeps]\label{eg:beeps}
For each $t_i$, $\theta_{t_i} \in \{0, 1\}$; conditional on $\theta_{t_{i-1}} = 0$ the probability that $\theta_{t_i} = 1$ is $1 - e^{-\gamma(t_i - t_{i-1})}$, and $\theta_{t_i} = 1$ absorbs. The variable describes whether an email has arrived by $t_i$ (Poisson rate $\gamma$), observable to the principal once $t \ge t_i$. The agent chooses when to stop working and check, with $U(\mu, t) = e^{-rt} \mathrm{Prob}_\mu(\sum_{t_i \le t} \theta_{t_i} > 0)$ and $V(\mu, t) = t$. This is a forward-looking-agent variant of \citet{ely2017beeps} and a special case of \eqref{prob:relaxed} over $\overline D$.
\end{eg}

\begin{eg}[Persuading the agent to wait]\label{eg:wait}
Let $T = \{0, \ldots, I\}$. $\theta_0 \in \{\theta_L, \theta_H\}$ is project quality; for $t \ge 1$, $\theta_t \sim N(0, \sigma)$ are i.i.d.\ public increments, $X_t = (m + \theta_t) X_{t-1}$ the observed market size. The agent decides whether to exercise a real option, paying $I_A$ for payoff $\theta_0 X_t$, with $U(\mu, t) = \E_\mu[\max_{a \in \{0,1\}} a \, e^{-rt}(\theta_0 X_t - I_A)]$ and $V(\mu, t) = \E_\mu[a(\mu, t) e^{-rt}(\theta_0 X_t - I_P)]$. This is the main model of \citet{orlov2020persuading} and a special case of \eqref{prob:relaxed} over $\underline D$.
\end{eg}

\subsection{Costly or constrained information generation}\label{ssec:costly_info}

\paragraph{Costly information.} Suppose the principal bears a separable flow cost $c_t := \d \E[H(\mu_t) \mid \mathcal{F}_t]/\d t$ of experimentation, where $H$ is convex on $\Delta(\Theta)$ (canonically, $-H$ is Shannon entropy). By uniform posterior separability, $\E[\int_0^\tau c_t \, \d t] = \E[H(\mu_\tau) - H(\mu_0)]$ \citep{steiner2017rational,zhong2022optimal,hebert2022engagement}, so redefining $\widehat V(\mu, t) := V(\mu, t) - H(\mu)$ embeds the cost into the baseline problem and \cref{thm:foc} applies to $\widehat V$ directly.

\paragraph{Constrained information.} A more delicate variant imposes a flow-information capacity constraint $\d \E[H(\mu_t) \mid \mathcal{F}_t]/\d t \le \chi$, the standard constraint in the rational-inattention literature. Because simple recommendations are typically not smooth in the flow of information, this renders the relaxed and original problems non-equivalent in general. Using the implementability characterization of \citet{sannikov2024embedding}, the constrained relaxed problem adds
\begin{align}
	\int_{y \le t} H(\mu) f(\d \mu, \d y) + H\!\left(\int_{y > t} \mu f(\d \mu, \d y)\right) - H(\mu_0) \le \chi \int \min\{t, y\} f(\d \mu, \d y), \quad \forall t \in T, \tag{ICC-C}\label{eqn:icc_c}
\end{align}
a convex constraint, so the problem remains a linear program and \cref{thm:foc} extends with an additional multiplier. If the optimum keeps \eqref{eqn:icc_c} binding for every $t$, the simple recommendation is feasible in the original dynamic problem, embedding the engagement-maximization setting of \citet{hebert2022engagement} into our framework.

\section{Conclusion}\label{sec:conclude}

We have developed a duality-based first-order approach to optimal dynamic persuasion in stopping problems. Strong duality (\cref{lem:duality}) and the first-order condition (\cref{thm:foc}) reduce the infinite-dimensional design problem to a one-dimensional ordinary differential equation for a Lagrange multiplier $\Lambda$ tracking the shadow value of continuation incentives; the recipe converts dynamic persuasion into ``concavifying'' a $\Lambda$-augmented payoff and is operational on a broad class of problems. We demonstrated the method in three applications. In binary persuasion, optimal strategies combine suspense generation with action-targeting, and three sharp trade-offs determine the scope, duration, and direction of persuasion. A structural result (\cref{thm:time_risk_loving,thm:time_risk_averse}) showed that the principal's time-risk preferences alone determine the qualitative form of the optimum---loving rules out suspense, aversion necessitates it. In linear persuasion, the dual decouples time and belief constraints, and the optimal disclosure is dynamic tail-censorship with an expanding revelation interval that resonates with the communication practices of monetary authorities and corporate financial disclosure. Non-common-prior extensions in the spirit of \citet{saeedi2024getting}, and the action-dependent variants of \cref{ssec:linear_discussion}, are natural next steps. Finally, \citet{koh2024attention} establishes that the commitment outcomes derived here survive the removal of intertemporal commitment via an indirect-communication principle; together, the duality method of this paper and that principle provide a complete toolkit for optimal dynamic persuasion in stopping problems.

\fontsize{10pt}{12pt}\selectfont
\setlength{\bibsep}{-2.5pt}
\bibliography{references}

\newgeometry{margin=0.8in}
\linespread{1.2}
\normalsize
\appendix
\setlength\abovedisplayskip{1pt}
\setlength\belowdisplayskip{1pt}


\section{Direct Communication and Strong Duality}\label{appendix:reduction}

This appendix collects the technical underpinnings of \cref{ssec:relaxed,ssec:foc}. \Cref{appendix:reduction_proof} formalizes the simple-recommendation construction sketched in \cref{ssec:relaxed} and proves the direct-communication reduction (\cref{lem:direct_communication}) establishing that the original dynamic problem \eqref{prob:1} and the relaxed problem \eqref{prob:relaxed} have the same value and the same outcome set. \Cref{appendix:duality_proof} proves strong duality (\cref{lem:duality}). \Cref{appendix:reduction_continuum} extends the direct-communication reduction to the continuum-state framework of \cref{sec:linear}.

\subsection{Simple recommendations and the direct-communication reduction}\label{appendix:reduction_proof}

We begin by formalizing the class of ``direct mechanisms'' alluded to in \cref{ssec:relaxed}. The construction takes a candidate joint distribution $f$ over stopping beliefs and stopping times and realizes $f$ via a particularly simple persuasion strategy: the principal continuously sends a pooled ``continue'' message and only releases informative messages at the instant of stopping. We then show that such simple recommendations exhaust all outcome distributions implementable by feasible obedient strategies in \eqref{prob:1}.

Let $D = \Delta(\Theta) \times T$. We call elements of $\Delta(D)$ belief-time distributions, and we recall from \cref{ssec:relaxed} that $\Delta_{\mu_0} = \{f \in \Delta(D) \mid \E_f[\mu] = \mu_0\}$ collects belief-time distributions consistent with the prior. Fixing $f \in \Delta_{\mu_0}$, let $\bar{t} = \sup \{t \mid (\mu, t) \in \mathrm{Supp}(f)\}$. For $t < \bar{t}$, define $\widehat{\mu}_t = \E_{f(\mu, \tau)}[\mu \mid \tau > t]$ as the continuation belief at time $t$ conditional on not stopping. For all $\mu \in \Delta(\Theta)$ and $t \le \bar{t}$, let $x^{(\mu, t)}$ denote the belief path
\begin{align*}
	x^{(\mu, t)}(s) =
	\begin{cases}
		\widehat{\mu}_s & s < t \\
		\mu & s \ge t.
	\end{cases}
\end{align*}
That is, we consider simple belief paths that jump from $\widehat{\mu}_t$ only once (which the agent reads as a recommendation to stop) and stay constant thereafter. Since each path $x^{(\mu, t)}$ leads to stopping at $(\mu, t)$, to achieve the belief-time distribution $f$, the persuasion strategy simply needs to randomize among all such paths according to distribution $f$. Define $\Phi : (\mu, t) \mapsto (x^{(\mu, t)}, t)$ as the map sending a belief-time pair to the path constructed above together with its jump time. The following lemma ensures that $\Phi$ is Borel-measurable, so the law of the induced belief process and stopping time is well defined.

\begin{lem}\label{lem:measurablepath}
	Let $\left\{ x^{(\mu, t)} \right\} \subset D_{\infty}$ be equipped with the Skorokhod topology. The mapping $\Phi : (\mu, t) \mapsto (x^{(\mu, t)}, t)$ is Borel measurable.
\end{lem}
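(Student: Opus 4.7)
The plan is to reduce Borel measurability on the Skorokhod space $D_{\infty}$ to Borel measurability of the one-dimensional evaluation maps, a reduction which is standard for cadlag path spaces and which sidesteps any delicate topological issues.

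The key classical fact I would invoke is that the Borel $\sigma$-algebra on $D_{\infty}$ generated by the Skorokhod topology coincides with the $\sigma$-algebra generated by the family of coordinate projections $\{\pi_s : s \in T\}$, where $\pi_s(x) := x(s)$; see for instance Theorem 12.5 in Billingsley (1999). Consequently, to establish Borel measurability of $\Phi_1:(\mu,t)\mapsto x^{(\mu,t)}$ into $D_{\infty}$, it is enough to verify that for every $s\in T$, the composition $\pi_s\circ \Phi_1$ is Borel measurable from $\Delta(\Theta)\times T$ into $\Delta(\Theta)$.

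First, I would observe that $s\mapsto \widehat{\mu}_s$ is a deterministic cadlag function of $s$ (determined only by the fixed $f$), and hence Borel measurable. This follows because both $s\mapsto \int_{\tau>s}\mu\, f(\d\mu,\d\tau)$ and $s\mapsto \int_{\tau>s} f(\d\mu,\d\tau)$ are right-continuous and of bounded variation in $s$ by dominated convergence; on $[\bar{t},\infty)$, $\widehat{\mu}$ can be extended to any fixed Borel value (e.g. $\mu_0$) without affecting the argument. Second, for each fixed $s$,
\[
\pi_s\bigl(x^{(\mu,t)}\bigr) \;=\; \widehat{\mu}_s\cdot \mathbf{1}\{s<t\} \;+\; \mu\cdot \mathbf{1}\{s\geq t\},
\]
which is Borel measurable in $(\mu,t)$ as a sum of products of Borel measurable functions: $(\mu,t)\mapsto \mu$ is continuous, $t\mapsto \mathbf{1}\{s<t\}$ is Borel, and $\widehat{\mu}_s$ is constant in $(\mu,t)$. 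Combining this with the trivial measurability of the second coordinate of $\Phi$ (the identity projection onto $T$) delivers Borel measurability of $\Phi$.

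I do not anticipate any serious obstacle in this proof. The only potentially delicate aspect is the behavior of $x^{(\mu,t)}$ in the Skorokhod topology at times when $\widehat{\mu}$ itself has jumps, or when $\mu=\widehat{\mu}_t$ so that the jump at $t$ collapses. However, these subtleties are precisely what the reduction to cylindrical sets is designed to bypass: the equality of the Borel and cylindrical $\sigma$-algebras on $D_{\infty}$ means that measurability can be checked one evaluation at a time, where no topological fineness issues arise.
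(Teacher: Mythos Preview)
Your argument is correct and takes a genuinely different route from the paper's own proof. The paper works directly with the Skorokhod metric: it bounds $\sigma(x^{(\mu,t)},x^{(\mu',t')})$ explicitly, shows the map $(\mu,t)\mapsto x^{(\mu,t)}$ is continuous wherever $\widehat{\mu}$ is continuous, and at discontinuity points of $\widehat{\mu}$ argues that preimages of small Skorokhod balls are half-open balls in $(\mu,t)$ and hence Borel. You instead invoke the standard identification of the Skorokhod Borel $\sigma$-algebra with the cylindrical $\sigma$-algebra and reduce everything to the Borel measurability of the one-dimensional evaluations $\pi_s(x^{(\mu,t)})=\widehat{\mu}_s\mathbf{1}\{s<t\}+\mu\mathbf{1}\{s\ge t\}$, which is immediate. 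Your approach is cleaner and more robust---it requires no metric estimates and no case analysis on the continuity points of $\widehat{\mu}$---while the paper's approach yields the slightly stronger byproduct that $\Phi$ is in fact continuous off the (countable) set of jump times of $\widehat{\mu}$. For the purpose of the lemma as stated, your proof is the more economical one; the only small caveat is that the Billingsley reference is for real-valued $D[0,1]$, so strictly speaking one should cite an extension to $\Delta(\Theta)$-valued paths on $[0,\infty)$ (e.g.\ Ethier--Kurtz, Ch.~3, or Jacod--Shiryaev), though this is entirely standard since $\Delta(\Theta)$ is Polish.
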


\begin{proof}
	Let $\sigma$ be the Skorokhod metric induced by a metric $d_\Theta$ on $\Delta(\Theta)$. Then $\forall (\mu, t), (\mu', t')$ where $t < t'$:
	\begin{align*}
		\sigma(x^{(\mu, t)}, x^{(\mu', t')}) \le \max\left\{ d_\Theta(\mu, \mu'), |t - t'|, \max_{s, s' \in [t, t')}\left\{ d_\Theta(\widehat{\mu}_s, \widehat{\mu}_{s'}) \right\} \right\},
	\end{align*}
	where the right-hand side is achieved via contracting time on $[t, t')$.

	If $\widehat{\mu}_t$ is continuous at $t$, then $(\mu_n, t_n) \to (\mu, t)$ implies $\sigma(x^{(\mu_n, t_n)}, x^{(\mu, t)}) \to 0$. Hence the mapping is Borel measurable (continuous) at $(\mu, t)$.

	If $\widehat{\mu}_t$ is discontinuous at $t$, then for all $t' > t$, $\sigma(x^{(\mu, t)}, x^{(\mu', t')}) \ge d_\Theta(\widehat{\mu}_t, \widehat{\mu}_{t+}) > 0$. Meanwhile, $(\mu_n, t_n) \to (\mu, t-)$ implies $\sigma(x^{(\mu_n, t_n)}, x^{(\mu, t)}) \to 0$. Then, for any sufficiently small open ball around $x^{(\mu, t)}$, the inverse image is an open ball around $(\mu, t)$ truncated by $t' \le t$; hence, a Borel set. Therefore, the mapping is Borel measurable at $(\mu, t)$.
\end{proof}

We now define the simple recommendation strategy associated with $f$.

\begin{defi}[Simple recommendation]\label{defi:simple_recommendation}
	For $f \in \Delta_{\mu_0}$, define the law $\mathcal{P}$ of the corresponding belief process $\rp{\mu^f_t}$ and stopping time $\tau^f$ as $\mathcal{P} \circ \Phi = f$, i.e., for all Borel sets $B_x \subset D_{\infty}$ and $B_t \subset T$:
	\begin{align*}
		\mathcal{P}\left( \mu^f_t \in B_x, \tau^f \in B_t \right) = \int_{\Phi^{-1}(B_x \times B_t)} f(\d \mu, \d t).
	\end{align*}
\end{defi}

In words, a simple recommendation tells the agent to continue and nothing else, which induces the deterministic continuation belief path $\widehat{\mu}_t$. The principal only releases information when the agent is meant to stop. The following lemma is the direct-communication reduction underlying \cref{ssec:relaxed}.

\begin{lem}[Direct communication]\label{lem:direct_communication}
	For every feasible and obedient strategy $(\rp{\mu_t}, \tau)$, let $f \sim (\mu_{\tau}, \tau)$. Then $f \in \Delta_{\mu_0}$ and $f$ satisfies \eqref{eqn:OCC}. Conversely, for every $f \in \Delta_{\mu_0}$ that satisfies \eqref{eqn:OCC}, the simple recommendation strategy $(\rp{\mu^f_t}, \tau^f)$ is feasible and obedient and satisfies $f \sim (\mu^f_{\tau^f}, \tau^f)$.
\end{lem}

\Cref{lem:direct_communication} states that the outcomes (belief-time distributions) of all feasible and obedient strategies are exactly the outcomes of all obedient simple recommendations. Therefore, \eqref{prob:1} is equivalent to \eqref{prob:relaxed}, and any solution to \eqref{prob:relaxed} is implemented in the original game by the corresponding simple recommendation.

\begin{proof}[Proof of \cref{lem:direct_communication}]
The first statement: Suppose \eqref{eqn:OCC} is violated, i.e.\ $\exists t$ such that
\begin{align*}
	\int_{y > t} U(\mu, y) f(\d \mu, \d y) < U\left( \frac{\int_{y > t} \mu f(\d \mu, \d y)}{\int_{y > t} f(\d \mu, \d y)}, t \right).
\end{align*}
Define $\tau' = \min\{\tau, t\}$.
\begin{align*}
	\E\left[ U(\mu_{\tau'}, \tau') \right] = & \mathrm{Prob}(\tau \le t) \E\left[ U(\mu_{\tau}, \tau) \mid \tau \le t \right] + \mathrm{Prob}(\tau > t) \E\left[ U(\mu_{t}, t) \mid \tau > t \right] \\
	\ge & \mathrm{Prob}(\tau \le t) \E\left[ U(\mu_{\tau}, \tau) \mid \tau \le t \right] + \mathrm{Prob}(\tau > t) U(\E\left[ \mu_{t} \mid \tau > t \right], t) \\
	& \text{(Jensen's inequality)} \\
	= & \mathrm{Prob}(\tau \le t) \E\left[ U(\mu_{\tau}, \tau) \mid \tau \le t \right] + \mathrm{Prob}(\tau > t) U(\E\left[ \mu_{\tau} \mid \tau > t \right], t) \\
	& \text{(Optional stopping theorem)} \\
	> & \mathrm{Prob}(\tau \le t) \E\left[ U(\mu_{\tau}, \tau) \mid \tau \le t \right] + \mathrm{Prob}(\tau > t) \E\left[ U(\mu_{\tau}, \tau) \mid \tau > t \right] \\
	= & \E[U(\mu_{\tau}, \tau)].
\end{align*}
Therefore \eqref{prob:1:oc} is violated. That $f \in \Delta_{\mu_0}$ is a direct implication of the optional stopping theorem.

The second statement: First, we show that $(\mu^f_{\tau^f}, \tau^f) \sim f$. For all Borel sets $B_{\mu} \subset \Delta(\Theta)$ and $B_t \subset T$,
\[
	(\mu^f_{\tau^f}, \tau^f) \in B_{\mu} \times B_t \iff t \in B_t \ \& \ x^{(\mu, t)}(t) \in B_{\mu} \iff (\mu, t) \in B_{\mu} \times B_t.
\]
Therefore, $\mathcal{P}\left( (\mu^f_{\tau^f}, \tau^f) \in B_{\mu} \times B_t \right) = \int_{B_{\mu} \times B_t} f(\d \mu, \d t)$.

Second, we show that $\rp{\mu^f_t}$ is a martingale. Take any Borel subset $B \subset \mathcal{F}_t$ such that $\mathcal{P}(B) > 0$. The set $B$ can be further divided into two events (which are elements of $\mathcal{F}_t$ since $\tau^f$ is a stopping time).
\begin{itemize}
	\item Case 1: $\tau^f \le t$. For all $t' > t$,
	\begin{align*}
		\E\left[ \mu^f_{t'} \mid B \right] = & \E\left[ \mu^f_{t'} \mid B, \tau^f \le t \right] \\
		= & \frac{1}{\mathcal{P}(B)} \int_{(x^{(\mu, s)}, s) \in B} x^{(\mu, s)}(t') \, \d \mathcal{P} \\
		= & \frac{1}{\mathcal{P}(B)} \int_{(x^{(\mu, s)}, s) \in B} x^{(\mu, s)}(t) \, \d \mathcal{P} \\
		= & \E\left[ \mu^f_t \mid B \right].
	\end{align*}
	The third line follows from the definition of $x^{(\mu, s)}$ and $t' > t \ge s$.
	\item Case 2: $\tau^f > t$. In this case, for all $\mu$ and $s > t$, $(x^{(\mu, s)}, s) \in B$ since all such $x^{(\mu, s)} = \widehat{\mu}_t$ up to period $t$. Therefore,
	\begin{align*}
		\E\left[ \mu^f_{t'} \mid B \right] = & \E\left[ \mu^f_{t'} \mid \tau^f > t \right] \\
		= & \frac{1}{\mathcal{P}(B)} \int_{s > t} x^{(\mu, s)}(t') \, \d \mathcal{P} \\
		= & \frac{1}{\mathcal{P}(B)} \left( \int_{s > t'} x^{(\mu, s)}(t') \, \d \mathcal{P} + \int_{t < s \le t'} x^{(\mu, s)}(t') \, \d \mathcal{P} \right) \\
		= & \frac{1}{\mathcal{P}(B)} \left( \int_{s > t'} \widehat{\mu}_{t'} \, \d \mathcal{P} + \int_{t < s \le t'} \mu f(\d \mu, \d s) \right) \\
		= & \frac{1}{\int_{s > t} f(\d \mu, \d s)} \left( \widehat{\mu} \cdot \int_{s > t'} f(\d \mu, \d s) + \int_{t < s \le t'} \mu f(\d \mu, \d s) \right) \\
		= & \widehat{\mu}_t \\
		= & \E[\mu^f_t \mid \tau^f > t].
	\end{align*}
\end{itemize}

Third, we verify \eqref{prob:1:oc}. By the definition of $\rp{\mu^f_t}$, conditional on the event $\tau^f \le t$, $\mu^f_t$ is constant in $t$. Let $\widetilde{\tau} := \min\{\tau^f, \tau'\}$ and $\delta \tau := (\tau' - \tau^f)^+$. Then $(\mu_{\widetilde{\tau}}, \tau')$ has the same distribution as $(\mu_{\tau'}, \tau')$.
\begin{align*}
	\E\left[ U(\mu_{\tau'}, \tau') \right] = & \mathcal{P}(\tau^f > \tau') \E\left[ U(\mu_{\tau'}, \tau') \mid \tau^f > \tau' \right] + \mathcal{P}(\tau^f \le \tau') \E\left[ U(\mu_{\tau^f}, \tau^f + \delta \tau) \mid \tau^f \le \tau' \right] \\
	\le & \mathcal{P}(\tau^f > \tau') \E\left[ U(\mu_{\tau'}, \tau') \mid \tau^f > \tau' \right] + \mathcal{P}(\tau^f \le \tau') \E\left[ U(\mu_{\tau^f}, \tau^f) \mid \tau^f \le \tau' \right] \\
	& \text{(Because $\delta \tau \ge 0$.)} \\
	= & \mathcal{P}(\tau^f > \tau') \E\left[ U(\widehat{\mu}_{\tau'}, \tau') \mid \tau^f > \tau' \right] + \mathcal{P}(\tau^f \le \tau') \E\left[ U(\mu_{\tau^f}, \tau^f) \mid \tau^f \le \tau' \right] \\
	\le & \mathcal{P}(\tau^f > \tau') \E\left[ \E_f\left[ U(\mu, t) \mid t > \tau' \right] \big| \tau^f > \tau' \right] + \mathcal{P}(\tau^f \le \tau') \E\left[ U(\mu_{\tau^f}, \tau^f) \mid \tau^f \le \tau' \right] \\
	& \text{(Implied by \eqref{eqn:OCC}.)} \\
	= & \mathcal{P}(\tau^f > \tau') \E\left[ U(\mu_{\tau^f}, \tau^f) \mid \tau^f > \tau' \right] + \mathcal{P}(\tau^f \le \tau') \E\left[ U(\mu_{\tau^f}, \tau^f) \mid \tau^f \le \tau' \right] \\
	= & \E\left[ U(\mu_{\tau^f}, \tau^f) \right]. \qedhere
\end{align*}
\end{proof}

\subsection{Proof of strong duality (\texorpdfstring{\cref{lem:duality}}{Lemma 1})}\label{appendix:duality_proof}

We restate \cref{lem:duality} for convenience.

\begin{lem}[Strong duality, restated]
	Under \cref{ass:positive,ass:cont}, strong duality holds:
	\begin{align*}
		\sup_{f} \min_{\Lambda} \mathcal{L}(f, \Lambda) = \min_{\Lambda} \sup_{f} \mathcal{L}(f, \Lambda).
	\end{align*}
	Furthermore, the minimum is achieved by some $\Lambda^* \in \mathcal{B}(T^{\circ})$ and the supremum is achieved by some $f^* \in \Delta_{\mu_0}$.
\end{lem}

\begin{proof}
The lemma is trivial when $T$ is finite. We prove the case when $T$ is a continuum. We first prove strong duality. Define the mapping $G$:
\begin{align*}
	G(f)(t) = \int_{\tau > t} U(\mu, \tau) f(\d \mu, \d \tau) - U\left( \int_{\tau > t} \mu f(\d \mu, \d \tau), t \right).
\end{align*}
Since $U$ is convex, $G(\cdot, t)$ is concave. Since $U$ is bounded, $G$ maps $\Delta_{\mu_0}$ into $L^{\infty}(T^{\circ})$.

Next, we verify that there exists some $f \in \Delta_{\mu_0}$ such that $G(f)(\cdot)$ is an interior point (with respect to the $L^{\infty}$ norm) of the positive cone. Let $\widehat{U}(\mu, t)$ denote $\E_{\mu}[U(\delta_\theta, t)]$. Since $T$ is compact and $U$ is continuous, \cref{ass:positive} implies that there exists $\epsilon > 0$ such that $\widehat{U}(\mu_0, t) - U(\mu_0, t) \ge 3\epsilon$ for all $t \in T$. Since $U$ is continuous, there exists a finite partition $\{t_i\}_{i=0}^{I} \subset T$ with $t_0 = 0$ such that $|U(\mu_0, t) - U(\mu_0, t_i)| \le \epsilon$ when $t \in T$ and $t_i$ are adjacent. Therefore, for all $t \in (t_i, t_{i+1}] \cap T$,
\begin{align*}
	U(\mu_0, t) \le \widehat{U}(\mu_0, t_{i+1}) - 2\epsilon.
\end{align*}
Let $M = \sup_t U(\mu_0, t)$. Without loss of generality, we can pick $\epsilon < \frac{1}{3} M$. Next, we define $f$:
\begin{align*}
	\text{for } i = 1, \dots, I - 1,
	& \begin{dcases}
		f(t_i) = \left( \frac{\epsilon}{M} \right)^{i-1} \left( 1 - \frac{\epsilon}{M} \right); \\
		f(\mu \mid t_i) \text{ achieves } \widehat{U}(\mu_0, t_i).
	\end{dcases} \\
	& \begin{dcases}
		f(t_I) = \left( \frac{\epsilon}{M} \right)^{I}; \\
		f(\mu \mid t_I) \text{ achieves } \widehat{U}(\mu_0, t_I).
	\end{dcases}
\end{align*}
By definition of $\widehat{U}$, $\forall i$, $f(\mu \mid t_i)$ is well-defined since the upper concave hull can be achieved by a finite distribution per Caratheodory's theorem. Then, for any $i = 0, \dots, |I| - 1$ and $t \in (t_i, t_{i+1}]$,
\begin{align*}
	G(f)(t) = & \sum_{j = i+1}^{I-1} \left( \left( \frac{\epsilon}{M} \right)^{j-1} \left( 1 - \frac{\epsilon}{M} \right) \right) \cdot \widehat{U}(\mu_0, t_j) \\
	& + \left( \frac{\epsilon}{M} \right)^{I} \widehat{U}(\mu_0, t_I) - U(\mu_0, t) \cdot \left( \sum_{j = i+1}^{I-1} \left( \left( \frac{\epsilon}{M} \right)^{j-1} \left( 1 - \frac{\epsilon}{M} \right) \right) + \left( \frac{\epsilon}{M} \right)^{I} \right) \\
	\ge & \left( \frac{\epsilon}{M} \right)^{i} \left( 1 - \frac{\epsilon}{M} \right) \widehat{U}(\mu_0, t_{i+1}) - \left( \frac{\epsilon}{M} \right)^{i} \cdot U(\mu_0, t) \\
	\ge & \left( \frac{\epsilon}{M} \right)^{i} \left( 1 - \frac{\epsilon}{M} \right) (U(\mu_0, t_{i+1}) + 2\epsilon) - \left( \frac{\epsilon}{M} \right)^{i} \cdot U(\mu_0, t) \\
	= & \left( \frac{\epsilon}{M} \right)^{i} \left( 2\epsilon - \frac{2\epsilon^2}{M} - \frac{\epsilon}{M} U(\mu_0, t) \right) \\
	\ge & \left( \frac{\epsilon}{M} \right)^{I} \cdot \frac{\epsilon}{3}.
\end{align*}
Therefore, $G(f)(\cdot)$ is bounded away from $0$ by at least $\left( \frac{\epsilon}{M} \right)^{I} \cdot \frac{\epsilon}{3}$ under the $L^{\infty}$ norm.

The following lemma establishes that for any target joint distribution $f$, we can find a ``nearby'' continuous distribution which does not alter the value of $G$ too much. Let $\Delta^{C}_{\mu_0}$ be the subset of $\Delta_{\mu_0}$ such that $G(f)(t)$ is uniformly continuous on $T^{\circ}$.

\begin{lem}\label{lem:cont}
	Given \cref{ass:cont}, for all $f \in \Delta_{\mu_0}$ and all $\epsilon$, there exists $f' \in \Delta^{C}_{\mu_0}$ such that $\E_{f'}[V] \ge \E_{f}[V] - \epsilon$ and $G(f')(t) \ge G(f)(t) - \epsilon$.
\end{lem}

We now take stock of the conditions to apply Theorem 1, Chapter 8.6 of \cite{luenberger1997optimization}. The objective $\int V(\mu, \tau) f(\d \mu, \d \tau)$ is a linear functional. $G$ is a concave mapping, and by \cref{lem:cont} there exists $f \in \Delta^{C}_{\mu_0}$ such that $G(f)(\cdot)$ is in the interior of the positive cone. Further, since $f \in \Delta^{C}_{\mu_0}$, $G(f)(\cdot)$ is uniformly continuous on $T^{\circ}$, hence $\mathcal{B}(T^{\circ})$ is the appropriate dual space. We then have
\begin{align*}
	\sup_{f \in \Delta^{C}_{\mu_0}} \inf_{\Lambda \in \mathcal{B}(T^{\circ})} \mathcal{L}(f, \Lambda) = \min_{\Lambda \in \mathcal{B}(T^{\circ})} \sup_{f \in \Delta^{C}_{\mu_0}} \mathcal{L}(f, \Lambda),
\end{align*}
where the minimum is achieved by $\Lambda \in \mathcal{B}(T^{\circ})$.

Then note that for all $f \in \Delta_{\mu_0}$, $\Lambda \in \mathcal{B}(T^{\circ})$ and $\epsilon > 0$, \cref{lem:cont} implies that there exists $f' \in \Delta^{C}_{\mu_0}$ such that $\mathcal{L}(f', \Lambda) \ge \mathcal{L}(f, \Lambda) - \epsilon$. Therefore,
\begin{align*}
	\sup_{f \in \Delta^{C}_{\mu_0}} \inf_{\Lambda \in \mathcal{B}(T^{\circ})} \mathcal{L}(f, \Lambda) \le \sup_{f \in \Delta_{\mu_0}} \inf_{\Lambda \in \mathcal{B}(T^{\circ})} \mathcal{L}(f, \Lambda) \le \min_{\Lambda \in \mathcal{B}(T^{\circ})} \sup_{f \in \Delta_{\mu_0}} \mathcal{L}(f, \Lambda) = \min_{\Lambda \in \mathcal{B}(T^{\circ})} \sup_{f \in \Delta^{C}_{\mu_0}} \mathcal{L}(f, \Lambda).
\end{align*}
The inner inequality must be an equality since the smallest term equals the largest term.

Finally, we verify that there exists $f$ achieving the maximum when $V$ is upper semicontinuous. This implies that $\E_f[V]$ is upper semicontinuous. Then, it is sufficient to verify that the set of $f$ such that $G(f) \ge 0$ is closed. Suppose not; then there exist $f_n \to f$ such that $G(f_n) \ge 0$ but $G(f)(t) < 0$. Consider
\begin{align*}
	\gamma_{\epsilon}(\tau) =
	\begin{cases}
		0 & \tau \le t \\
		\frac{\tau - t}{\epsilon} & \tau \in (t, t + \epsilon) \\
		1 & \tau > t + \epsilon.
	\end{cases}
\end{align*}
Then $\int U \cdot \gamma_{\epsilon} \, \d f \to \int_{\tau > t} U \, \d f$ and $\int \mu \cdot \gamma_{\epsilon} \, \d f \to \int_{\tau > t} \mu \, \d f$ as $\epsilon \to 0$. For $\epsilon$ sufficiently small, $\int U \cdot \gamma_{\epsilon} \, \d f - U\left( \int \mu \cdot \gamma_{\epsilon} \, \d f, t \right) < 0$. However, since $U \cdot \gamma_{\epsilon}$ and $\mu \cdot \gamma_{\epsilon}$ are bounded and continuous functions,
\begin{align*}
	& \int U \cdot \gamma_{\epsilon} \, \d f - U\left( \int \mu \cdot \gamma_{\epsilon} \, \d f, t \right) \\
	= & \lim_{n \to \infty} \int U \cdot \gamma_{\epsilon} \, \d f_n - U\left( \int \mu \cdot \gamma_{\epsilon} \, \d f_n, t \right).
\end{align*}
Then there exists $n$ such that $\int U \cdot \gamma_{\epsilon} \, \d f_n - U\left( \int \mu \cdot \gamma_{\epsilon} \, \d f_n, t \right) < 0$. Note that $f_n \cdot \gamma_{\epsilon}$ is a convex combination of $\bm{1}_{t > s} f_n$ for $s \in [t, t + \epsilon]$ and $G(\bm{1}_{t > s} f_n)(T) = G(f_n)(s) \ge 0$. Then $\int U \cdot \gamma_{\epsilon} \, \d f_n - U\left( \int \mu \cdot \gamma_{\epsilon} \, \d f_n, t \right) = G(f_n \cdot \gamma_{\epsilon}, T) \ge 0$ since $G$ is concave. Contradiction.
\end{proof}

\begin{proof}[Proof of \cref{lem:cont}]
For every $f \in \Delta_{\mu_0}$, $G(f)(t)$ has bounded variation and only jumps down. Therefore, $G(f)(t)$ can be decomposed into $g(t) + h(t)$, where $g$ is bounded and continuous and $h$ is bounded and decreasing. Define the ``delayed'' measure
\begin{align*}
	f^{s}(\mu, t) :=
	\begin{cases}
		0 & t < s \\
		f(\mu, t - s) & t \in [s, \sup(T)) \\
		f(\mu, [t - s, \sup(T)]) & t = \sup(T).
	\end{cases}
\end{align*}
In words, $f^s$ delays the distribution of $f$ by $s$. Pick $\delta > 0$ as a continuity parameter corresponding to $\frac{1}{2} \epsilon$ for $U$, $V$ and $g$ (all the continuity is uniform since $T$ and $D$ are compact). Then,
\begin{align*}
	G(f^s)(t) = & \int_{\tau > t - s} U \, \d f - U\left( \int_{\tau > t - s} \mu \, \d f, t \right) \\
	\ge & \int_{\tau > t - s} U \, \d f - U\left( \int_{\tau > t - s} \mu \, \d f, t - s \right) - \frac{1}{2} \epsilon \\
	= & g(t - s) + h(t - s) - \frac{1}{2} \epsilon \\
	\ge & g(t) + h(t) - \frac{1}{2} \epsilon \\
	= & G(f)(t) - \frac{1}{2} \epsilon; \\
	\E_{f^s}[V] = & \int_0^{\sup(T) - s} V(\mu, t + s) \, \d f + \int_{\sup(T) - s}^{\sup(T)} V(\mu, \sup(T)) \, \d f \\
	\ge & \E_f[V] - \frac{1}{2} \epsilon.
\end{align*}
Let $\widehat{f}$ be the uniform randomization of $f^s$ for $s \in [0, \delta]$. Then, since $\E_f[V]$ is a linear operator in $f$, $\E_{\widehat{f}}[V] \ge \E_f[V] - \frac{1}{2} \epsilon$. Since $G$ is a concave operator in $f$, $G(\widehat{f})(t) \ge G(f)(t) - \epsilon$.

Next, we prove the uniform continuity of $G(\widehat{f})$. Note that for all $t < t' < t + \delta$,
\begin{align*}
	\widehat{f}((t, t']) = & \frac{1}{\delta} \int_{s = 0}^{\delta} (F(t' - s) - F(t - s)) \, \d s \\
	\le & \frac{1}{\delta} \int_{(t, t'] \cup (t - \delta, t' - \delta]} F(s) \, \d s \\
	\le & \frac{2 |t - t'|}{\delta}.
\end{align*}
Therefore,
\begin{align*}
	\left| \int_{\tau > t} U \, \d \widehat{f} - \int_{\tau > t'} U \, \d \widehat{f} \right| = \left| \int_{(t, t']} U \, \d \widehat{f} \right| \le |U| \widehat{f}((t, t']) \le |U| \cdot \frac{2 |t - t'|}{\delta}.
\end{align*}
Then, for all $\epsilon$, let $|U| \cdot \frac{2 |t - t'|}{\delta} < \frac{\epsilon}{2}$ and $2 |t - t'| / \delta$ be less than the continuity parameter of $U$ for $\frac{\epsilon}{2}$. Then $\left| U\left( \int_{\tau > t} \mu \, \d \widehat{f}, t \right) - U\left( \int_{\tau > t'} \mu \, \d \widehat{f}, t \right) \right| < \frac{\epsilon}{2}$. To sum up, $|G(\widehat{f})(t) - G(\widehat{f})(t')| < \epsilon$.
\end{proof}

\subsection{Continuum-state direct-communication reduction}\label{appendix:reduction_continuum}

In \cref{sec:linear} we extend the framework to a continuous state $\Theta = [0, 1]$ with mean-measurable indirect utilities. The direct-communication reduction of \cref{lem:direct_communication} generalizes to this setting: the continuum analog \eqref{prob:cont:relaxed} of the relaxed problem \eqref{prob:relaxed} is equivalent to the original dynamic problem \eqref{prob:1}. We first restate the result and then prove it.

\begin{lem}[Continuum direct communication]\label{lem:direct_communication_continuum}
	Under the continuum-state framework of \cref{ssec:linear_framework}, \eqref{prob:1} and \eqref{prob:cont:relaxed} have the same optimal value, and any solution to \eqref{prob:cont:relaxed} is implemented in the original game by a simple recommendation strategy in the sense of \cref{defi:simple_recommendation}.
\end{lem}

\begin{proof}
We can still define a simple recommendation as we did in \cref{defi:simple_recommendation}, and the proof of \cref{lem:direct_communication} still goes through under the continuum-state setting. Therefore, \eqref{prob:1} is equivalent to \eqref{prob:relaxed} (reformulated over $\Delta(\Theta) \times T$ with $\Theta = [0, 1]$). It remains to show that \eqref{prob:cont:relaxed} is equivalent to \eqref{prob:relaxed}, which then implies the equivalence of \eqref{prob:1} and \eqref{prob:cont:relaxed}.

We begin with a useful lemma in the same spirit as the fact that a distribution of posterior means is induced by some information structure if and only if it is a mean-preserving contraction of the prior \citep{kolotilin2018optimal,gentzkow2016rothschild,dworczak2019simple}.

\begin{lem}\label{lem:MPC_equivalence}
	Fix a distribution $f$ over $\Delta(\Theta) \times T$. Define the corresponding distribution $g$ over $\MM \times T$ such that, for every pair of Borel sets $B_M \subset \MM$ and $B_T \subset T$,
	\begin{align}\label{eqn:MPC_equivalence}
		\mathbb{P}_{(m, t) \sim g}(m \in B_M, t \in B_T) = \mathbb{P}_{(\mu, t) \sim f}( \E_\mu[\theta] \in B_M, t \in B_T).
	\end{align}
	Then $g \in \overline{\Delta}_{\mu_0}$, and $g$ satisfies the constraint \eqref{eqn:MPS}. Conversely, for every distribution $g \in \overline{\Delta}_{\mu_0}$ that satisfies the constraint \eqref{eqn:MPS}, there exists a distribution $f$ over $\Delta(\Theta) \times T$ such that \cref{eqn:MPC_equivalence} holds.
\end{lem}

\begin{proof}[Proof of \cref{lem:MPC_equivalence}]
The forward direction is clear since the marginal distribution $g_M$ is a mean-preserving contraction of $\mu_0$. For the converse, fix a distribution $g \in \overline{\Delta}_{\mu_0}$ that satisfies the constraint \eqref{eqn:MPS}. Thus $g_M$ is a mean-preserving contraction of $\mu_0$, so there exists a random variable $\tilde{\mu} \in \Delta(\Theta)$ such that $\mathbb{E}[\tilde{\mu}] = \mu_0$ and the distribution of $\mathbb{E}_{\tilde{\mu}}[\theta]$ is $g_M$. We then construct a random variable $\tau \in T$ so that $g$ is the joint distribution $(\mathbb{E}_{\tilde{\mu}}[\theta], \tau)$. Finally, we define the law of $f$ as the joint distribution of $(\tilde{\mu}, \tau)$. We can see that, for every pair of Borel sets $B_M \subset \MM$ and $B_T \subset T$,
\begin{align*}
	\mathbb{P}_{(\mu, t) \sim f}( \E_\mu[\theta] \in B_M, t \in B_T) = \mathbb{P}(\mathbb{E}_{\tilde{\mu}}[\theta] \in B_M, \tau \in B_T) = g(m \in B_M, t \in B_T),
\end{align*}
so \cref{eqn:MPC_equivalence} holds, as required.
\end{proof}

Given \cref{eqn:MPC_equivalence}, we can rewrite the objective function and the obedience constraint of \eqref{prob:relaxed} as follows:
\begin{align*}
	\int V(\mu, t) f(\d \mu, \d t) & = \int V(\E_{\mu}[\theta], t) f(\d \mu, \d t) \tag{$V$ depends on $\mu$ via $\E_\mu[\theta]$} \\
	& = \int V(m, t) g(\d m, \d t), \tag{\cref{eqn:MPC_equivalence}}
\end{align*}
and
\begin{align*}
	& \mathbb{E}\big[ U(\mu_{\tau}, \tau) \mid \tau > t \big] \ge U\big( \mathbb{E}[\mu_\tau \mid \tau > t], t \big) \\
	\iff & \frac{\E\big[ U(\E_{\mu_\tau}[\theta], \tau) \mathbf{1}\{\tau > t\} \big]}{\mathbb{P}(\tau > t)} \ge U\left( \E[\E_{\mu_\tau}[\theta] \mid \tau > t], t \right) \tag{$U$ depends on $\mu$ via $\E_\mu[\theta]$} \\
	\iff & \int_{y > t} U(m, y) g(\d m, \d y) \ge \int_{y > t} g(\d m, \d y) \cdot U\bigg( \frac{\int_{y > t} m \, g(\d m, \d y)}{\int_{y > t} g(\d m, \d y)}, t \bigg), \tag{\cref{eqn:MPC_equivalence}}
\end{align*}
which are exactly the objective function and the \eqref{eqn:OCC-C} constraint in \eqref{prob:cont:relaxed}, respectively. Thus \eqref{prob:cont:relaxed} is equivalent to \eqref{prob:relaxed}, as desired.
\end{proof}


\section{Proof of the First-Order Characterization and Numerical Algorithm}\label{appendix:foc_section}

This appendix carries the proof of \cref{thm:foc} and the numerical algorithm that operationalizes the recipe when closed-form solutions are unavailable. Both pieces build on the strong-duality result of \cref{lem:duality}, which closes the duality gap and guarantees existence of an optimal multiplier $\Lambda$.

\subsection{Proof of \cref{thm:foc}}\label{appendix:foc_proof}

\Cref{thm:foc} characterizes the optimal joint distribution $f$ via a sufficient first-order condition and, by appealing to the strong-duality result of \cref{lem:duality}, a near-necessary converse: for every solution of \eqref{prob:relaxed}, there exist a multiplier $\Lambda$, a vector $a \in \R^{|\Theta|}$, and a subgradient selection certifying \cref{eqn:foc}.

\begin{proof}[Proof of \cref{thm:foc}]
	\emph{Sufficiency.} Suppose for the purpose of contradiction that $(f, a, \Lambda)$ satisfies \cref{eqn:foc}, \eqref{eqn:OCC}, and the complementary slackness condition $\mathcal{L}(f, \Lambda) = \E_f[V]$ but $f$ is suboptimal in \eqref{prob:relaxed}. Then there exists $f' \in \Delta_{\mu_0}$ such that $\mathcal{L}(f, \Lambda) < \mathcal{L}(f', \Lambda)$. Since $\mathcal{L}$ is concave, for all $\alpha \in (0, 1)$,
	\begin{align*}
		&\frac{\mathcal{L}(\alpha f' + (1 - \alpha) f, \Lambda) - \mathcal{L}(f, \Lambda)}{\alpha} \ge \mathcal{L}(f', \Lambda) - \mathcal{L}(f, \Lambda) > 0 \\
		\implies & \varliminf_{\alpha \to 0} \frac{\mathcal{L}(\alpha f' + (1 - \alpha) f, \Lambda) - \mathcal{L}(f, \Lambda)}{\alpha} \ge \mathcal{L}(f', \Lambda) > 0 \\
		\iff & \int V(\mu, t) (f' - f)(\d \mu, \d t) + \int_{t \in T^{\circ}} \int_{\tau > t} U(\mu, \tau) (f' - f)(\d \mu, \d \tau) \d \Lambda(t) \\
			& - \varlimsup_{\alpha \to 0} \int_{t \in T^{\circ}} \underbrace{\frac{U\left( \int_{\tau > t} \mu (\alpha f' + (1 - \alpha) f)(\d \mu, \d \tau), t \right) - U\left( \int_{\tau > t} \mu f(\d \mu, \d \tau), t \right)}{\alpha}}_{\ge \max_{\gamma \in \nabla_\mu U(\widehat{\mu}_t)} \gamma \cdot \int_{\tau > t} \mu (f' - f)(\d \mu, \d \tau)} \d \Lambda(t) > 0 \\
		\implies & \int V(\mu, t) (f' - f)(\d \mu, \d t) + \int_{t \in T^{\circ}} U(\mu, t) \Lambda(t) (f' - f)(\d \mu, \d t) \\
			& - \int_{t \in T^{\circ}} \nabla_{\mu} U\left( \int_{\tau > t} \mu f(\d \mu, \d \tau), t \right) \cdot \int_{\tau > t} \mu (f' - f)(\d \mu, \d \tau) \d \Lambda(t) > 0 \\
		\implies & \int l_{f, \Lambda} (f' - f)(\d \mu, \d t) > 0 \\
		\implies & \E_{f'}[l_{f, \Lambda}] > a \cdot \mu_0.
	\end{align*}
	The selection of subgradients in the fourth inequality can be arbitrary. The last inequality violates \cref{eqn:foc}.

	\emph{Necessity.} Suppose $f$ solves \eqref{prob:relaxed}. By \cref{lem:duality}, let $\Lambda$ be the minimizer in \eqref{eqn:dual}. When defining $l_{f, \Lambda}(\mu, t)$, make the following selection: for all $(\mu, t) \in D$, if $t < \overline{t}$, $\nabla_{\mu} U(\widehat{\mu}_t, t) = \arg\max_{\gamma \in \nabla_{\mu} U(\widehat{\mu}_t, t)} \gamma \cdot \mu$; if $t \ge \overline{t}$, $\nabla_{\mu} U(0, t)$ is chosen so that $\nabla_{\mu} U(0, t) \cdot \mu = U(\mu, t)$. Define
	\begin{align*}
		\widehat{l}(\mu) := \sup_{f' \in \Delta_{\mu}} \E_{f'}[l_{f, \Lambda}(\nu, t)].
	\end{align*}
	Let $a \cdot \mu$ be the supporting hyperplane of $\widehat{l}$ at $\mu_0$. Then $l_{f, \Lambda}(\mu, t) \le a \cdot \mu$. Next, we prove that $\E_f[l_{f, \Lambda}(\mu, t)] = a \cdot \mu_0$. Suppose for the purpose of contradiction that $\E_f[l_{f, \Lambda}(\mu, t)] \le a \cdot \mu_0 - \epsilon$ for some $\epsilon > 0$. There exists a finitely supported $f' \in \Delta_{\mu_0}$ such that $\E_{f'}[l_{f, \Lambda}] > a \cdot \mu_0 - \tfrac{1}{2} \epsilon$. Then,
	\begin{align*}
		& \varliminf_{\alpha \to 0} \frac{\mathcal{L}(\alpha f' + (1 - \alpha) f, \Lambda) - \mathcal{L}(f, \Lambda)}{\alpha} \\
		= & \int V(\mu, t) (f' - f)(\d \mu, \d t) + \int_{t \in T^{\circ}} \int_{\tau > t} U(\mu, t) (f' - f)(\d \mu, \d \tau) \d \Lambda(t) \\
			& - \varlimsup_{\alpha \to 0} \int_{t \in T^{\circ}} \frac{U\left( \int_{\tau > t} \mu (\alpha f' + (1 - \alpha) f)(\d \mu, \d \tau), t \right) - U\left( \int_{\tau > t} \mu f(\d \mu, \d \tau), t \right)}{\alpha} \d \Lambda(t) \\
		\ge & \int V(\mu, t) (f' - f)(\d \mu, \d t) + \int_{t \in T^{\circ}} \int_{\tau > t} U(\mu, t) (f' - f)(\d \mu, \d \tau) \d \Lambda(t) \\
			& - \sup_{\gamma_t \in \nabla_{\mu} U(\widehat{\mu}_t, t)} \int_{t < \overline{t}} \gamma_t \cdot \int_{\tau > t} \mu (f' - f)(\d \mu, \d \tau) \d \Lambda(t) \\
			& - \int_{t \ge \overline{t}} U\left( \int_{\tau > t} \mu f'(\d \mu, \d \tau), t \right) \d \Lambda(t) \\
		\ge & \int V(\mu, t) (f' - f)(\d \mu, \d t) + \int_{t \in T^{\circ}} \int_{\tau > t} U(\mu, t) (f' - f)(\d \mu, \d \tau) \d \Lambda(t) \\
			& - \int_{t < \overline{t}} \int_{\tau > t} \sup_{\gamma \in \nabla_{\mu} U(\widehat{\mu}_t, t)} \gamma \cdot \mu (f' - f)(\d \mu, \d \tau) \d \Lambda(t) \\
			& - \int_{t \ge \overline{t}} U\left( \int_{\tau > t} \mu f'(\d \mu, \d \tau), t \right) \d \Lambda(t) \\
		\ge & \int V(\mu, t) (f' - f)(\d \mu, \d t) + \int_{t \in T^{\circ}} \int_{\tau > t} U(\mu, t) (f' - f)(\d \mu, \d \tau) \d \Lambda(t) \\
			& - \int_{t < \overline{t}} \int_{\tau > t} \sup_{\gamma \in \nabla_{\mu} U(\widehat{\mu}_t, t)} \gamma \cdot \mu (f' - f)(\d \mu, \d \tau) \d \Lambda(t) \\
			& - \int_{t \ge \overline{t}} \int_{\tau > t} U(\mu, t) f'(\d \mu, \d t) \d \Lambda(t) \\
		= & \int V(\mu, t) (f' - f)(\d \mu, \d t) + \int_{t \in T^{\circ}} \int_{\tau > t} U(\mu, t) (f' - f)(\d \mu, \d \tau) \d \Lambda(t) \\
			& - \int_{t \in T^\circ} \left( \int_{\tau < t} \nabla_{\mu} U(\widehat{\mu}_{\tau}, \tau) \cdot \mu \d \Lambda(\tau) \right) (f' - f)(\d \mu, \d t) \\
		= & \E_{f' - f}[l_{f, \Lambda}(\mu, t)] > \tfrac{1}{2} \epsilon.
	\end{align*}
	This contradicts the optimality of $f$.
\end{proof}

\subsection{An efficient algorithm for computing the relaxed problem \eqref{prob:relaxed}}\label{appendix:numerical}

When closed-form solutions to \eqref{prob:relaxed} are unavailable, the recipe articulated in the body admits an efficient numerical implementation. A finite discretization of $T \times \Delta(\Theta)$ reduces the dual problem to a $|\Theta \times T|$-dimensional gradient descent: the inner problem (a finite-dimensional concavification step) is solved for fixed multipliers, and the outer problem (an affine function of the multipliers) is solved by gradient descent. The construction is grounded in \cref{thm:foc} and provides a saddle-point reformulation of the dual that is convex in the multiplier variables.

Define the function $\Phi$ by
\begin{align*}
	\Phi(\Lambda, b, \nu, \mu, \tau) := V(\mu, \tau) + \Lambda(\tau) U(\mu, \tau) - \Lambda(0) U(\mu_0, 0) - b_{\tau} \mu + b_0 \mu_0 + \int_{t \in (0, \tau)} \left( \beta_t \nu_t - U(\nu_t, t) \right) \d \Lambda(t),
\end{align*}
where $\Lambda \in \mathbb{L}$, $\nu \in \mathbb{D}(T)$ (the set of c\`adl\`ag paths on $T$), $(\mu, \tau) \in D$, and $b_t = b_0 + \int_{s < t} \beta_s \d \Lambda(s)$ for $\beta \in L^1(T)$.

\begin{thm}\label{thm:numerical}
	Suppose $(\Lambda^*, f, a^*)$ together with a selection of $\nabla U$ satisfies \eqref{eqn:foc}. Let $\beta_t^* = \nabla U(\widehat{\mu}_t, t)$ and $b_0^* = a^*$. Then
	\begin{align*}
		(\Lambda^*, b^*) \in \arg\min_{\Lambda, b} \max_{\nu, \mu, \tau} \Phi(\Lambda, b, \nu, \mu, \tau).
	\end{align*}
	Moreover, for all $(\mu^*, \tau^*) \in \mathrm{supp}(f)$,
	\begin{align*}
		(\widehat{\mu}, \mu^*, \tau^*) \in \arg\max_{\nu, \mu, \tau} \Phi(\Lambda^*, b^*, \nu, \mu, \tau).
	\end{align*}
\end{thm}

In the special case where $T \subset \mathbb{N}$ is a discrete and finite set, \cref{thm:numerical} yields a simple algorithm for computing the shadow prices, given by
\begin{align*}
	(\Lambda^*, b^*) \in \arg\min_{\Lambda, b} \max_{\nu, \mu, \tau} \bigg( & V(\mu, \tau) + \Lambda(\tau) U(\mu, \tau) - \Lambda(0) U(\mu_0, 0) - b_{\tau} \mu + b_0 \mu_0 \\
		& + \sum_{t = 1}^{\tau - 1} \left( (b_{t + 1} - b_t) \cdot \nu_t - (\Lambda(t + 1) - \Lambda(t)) U(\nu_t, t) \right) \bigg).
\end{align*}

\begin{itemize}
	\item \emph{The inner problem.} For each given $\tau$, the inner problem can be solved by maximizing $\mu$ and $\nu_t$ period by period; then $\tau$ is solved by enumerating finitely many values in $T$.
	\item \emph{The outer problem.} Fixing any $(\nu, \mu, \tau)$, the objective is an affine function of $(\Lambda, b)$. Therefore, the objective is convex in $(\Lambda, b)$ and can be solved efficiently via gradient descent.
\end{itemize}

\begin{proof}[Proof of \cref{thm:numerical}]
	For each $f \in \Delta_{\mu_0}$, consider two auxiliary objects.
	\begin{itemize}
		\item $\phi^f$ is a probability measure in $\Delta(\mathbb{D}(T) \times D)$ defined as follows. The marginal measures are $\phi^f|_{\mathbb{D}(T)} = \delta_{\widehat{\mu}}$ and $\phi^f|_{D} = f$.
		\item $(\rp{\mu_t^f}, \tau^f)$ is the simple recommendation defined by $f$.
	\end{itemize}
	We first prove the claim that $(\Lambda^*, b^*)$ and $\phi^f$ constitute a saddle point of
	\begin{align*}
		\min_{\Lambda, b} \max_{\phi \in \Delta(\mathbb{D}(T) \times D)} \E_{\phi}[\Phi(\Lambda, b, \nu, \mu, \tau)].
	\end{align*}
	\begin{itemize}
		\item Fixing $(\Lambda^*, b^*)$, we show the optimality of $\phi^f$. It is equivalent to show that for all $(\mu, \tau) \in \mathrm{supp}(f)$,
		\begin{align*}
			(\widehat{\mu}, \mu, \tau) \in \arg\max_{\nu, \mu, \tau} \Phi(\Lambda^*, b^*, \nu, \mu, \tau).
		\end{align*}
		Observe that
		\begin{align*}
			\Phi(\Lambda^*, b^*, \nu, \mu, \tau) = & \underbrace{V(\mu, \tau) + \Lambda(\tau) U(\mu, \tau) - b_{\tau} \mu}_{\text{maximized by } \mathrm{supp}(f)} - \Lambda(0) U(\mu_0, 0) + b_0 \mu_0 \\
				& + \int_{t \in (0, \tau)} \underbrace{(\beta_t \nu_t - U(\nu_t, t))}_{\text{maximized if } \nabla U(\nu_t, t) = \beta_t} \d \Lambda(t).
		\end{align*}
		The first maximizer is an implication of \cref{thm:foc}. The second maximizer is achieved by $\widehat{\mu}$.
		\item Fixing $\phi^f$, we show the optimality of $(\Lambda^*, b^*)$.
		\begin{align*}
			& \E_{\phi^f}[\Phi(\Lambda, b, \nu, \mu, \tau)] = \E_f[\Phi(\Lambda, b, \widehat{\mu}, \mu, \tau)] \\
			= & \int \left[ V(\mu, \tau) + \Lambda(\tau) U(\mu, \tau) - \Lambda(0) U(\mu_0, 0) - b_{\tau} \mu + b_0 \mu_0 + \int_{t \in (0, \tau)} (\beta_t \widehat{\mu}_t - U(\widehat{\mu}_t, t)) \d \Lambda(t) \right] f(\d \mu, \d \tau) \\
			= & \E_f[V(\mu, \tau)] - \underbrace{\int \left( \int_{t \in (0, \tau)} b_t \d \widehat{\mu}_t + b_{\tau} (\mu - \widehat{\mu}_t) \right) f(\d \mu, \d \tau)}_{= \E_{\mathcal{P}}[\int_0^{\tau^f} b_t \d \mu_t^f] = 0} \\
				& + \int \left( \int_{t \in (0, \tau)} (U(\mu, 0) - U(\widehat{\mu}_t, t)) \Lambda(\d t) \right) f(\d \mu, \d \tau) + \Lambda(0) \E_f[U(\mu, \tau) - U(\mu_0, 0)] \\
			= & \E_f[V(\mu, \tau)] + \underbrace{\int G(f)(t) \d \Lambda(t)}_{\text{minimized by } \Lambda^*}.
		\end{align*}
		The second equality is by integration by parts. The first underbrace is from $\rp{\mu^f_t}$ being a martingale. The last equality is from switching the order of integration. The second underbrace is from the complementary slackness condition.
	\end{itemize}
	Therefore, we have proved that
	\begin{align*}
		(\Lambda^*, b^*) \in & \arg\min_{\Lambda, b} \max_{\phi} \E_{\phi}[\Phi(\Lambda, b, \widehat{\mu}, \mu, \tau)] \\
		= & \arg\min_{\Lambda, b} \max_{\nu, \mu, \tau} \Phi(\Lambda, b, \widehat{\mu}, \mu, \tau).
	\end{align*}
\end{proof}


\section{Proofs for the Binary Application and Time-Risk Preferences}\label{appendix:binary_section}

This appendix collects the proofs of the results stated in \cref{sec:binary,sec:time_risk}, together with the formal version of the generalized binary-completeness statement (\cref{prop:binary_completeness_informal}). All four subsections apply the first-order characterization of \cref{thm:foc}: each proof identifies a candidate joint distribution, constructs an explicit Lagrange multiplier $\Lambda$, and verifies the concavification inequality \eqref{eqn:foc} on and off the support.

\subsection{Proof of \texorpdfstring{\cref{thm:binary_suspense_L}}{Theorem 2}}\label{appendix:binary_proofs}

The argument has two parts. We first establish that conditions (a)--(c) of \cref{thm:binary_suspense_L} are necessary for the optimality of the Suspense${}^{t_1}$-$\ell^{t_2}$ strategy; we then construct a multiplier $\Lambda$ that, under the additional condition (d), certifies the concavification inequality \eqref{eqn:foc} of \cref{thm:foc} and hence establishes sufficiency.

\begin{proof}[Proof of necessity]
    Under the $\text{Suspense}^{t_1}\text{-}\ell^{t_2}$ strategy, the corresponding joint distribution of stopping time and action is
    \begin{align*}
        \mathbb{P}(\tau \leq t, a = \ell) = 1-(2\mu_0 - t_1)e^{t_1-t}, \quad \quad \mathbb{P}(\tau = t_2, a = r) =(2\mu_0 - t_1)e^{t_1-t_2}
    \end{align*}
    for every $t \in [t_1,t_2].$ The principal's payoff under the $\text{Suspense}^{t_1}\text{-}\ell^{t_2}$ strategy is
    \begin{align*}
        V^*_\ell(t_1,t_2) \coloneqq (1-(2\mu_0-t_1))(v_\ell + h_\ell(t_1)) + (2\mu_0-t_1) \bigg(\int_{t_1}^{t_2} e^{-t+t_1} (v_\ell + h_\ell(t)) \d t + e^{-t_2+t_1}(v_r+h_r(t_2)) \bigg).
    \end{align*}
    Optimality implies the first order condition w.r.t. $t_2$:
    \begin{align*}
        \frac{\d V^*_\ell(t_1,t_2)}{\d t_2} = (2\mu_0 - t_1)e^{-t_2+t_1}(\Delta v + \Delta h(t_2) - h'_r(t_2)) = 0,
    \end{align*}
    which implies condition (a): $\Delta v + \Delta h(t_2) = h'_r(t_2)$ and the local SOC: $h''_r(t_2) \leq \Delta h'(t_2)$ in condition (c). Let $\Delta = t_2 - t_1$, and consider the FOC with respect to shifting $t_1$ and $t_2$ jointly:
    \begin{align*}
        \frac{\d V_\ell^{*}(t_1,t_1+\Delta)}{\d t_1} = \underbrace{(2\mu_0 - t_1-1)}_{< 0} \Big( \Psi(t_1,t_2) - h_\ell'(t_1) \Big) + e^{-t_2+t_1} \underbrace{\big(\Delta v + \Delta h(t_2) - h'_r(t_2)\big)}_{=0 \text{ from (a)}},
    \end{align*}
    which gives condition (b): $\Psi(t_1,t_2) \geq h_\ell'(t_1)$ with equality when $t_1 > 0$, and the local SOC: $h''_\ell(t_1) \cdot t_1 \leq 0$ in condition (c), as desired.
\end{proof}

\begin{proof}[Proof of sufficiency]
By the construction of the strategy, \eqref{eqn:OCC} is slack for $0 < t < t_1$. Therefore, the complementary slackness condition implies $\Lambda(t)$ must be constant prior to $t_1$, i.e., $\Lambda(0+) = \Lambda(t_1)$. The principal's derivative-of-Lagrangian from choosing stopping belief $\nu$ at time $t > t_1$ is
\begin{align*}
    l_{f,\Lambda}(\nu,t)=
    \begin{cases}
    v_\ell+h_\ell(t)+\Lambda(t)-\Lambda(t)c(t)+\int_{t_1}^t c(s)\d \Lambda(s)-\Lambda(t_1) - 2\nu(\Lambda(t) - \Lambda(t_1)) & \nu < 0.5, \\
    v_r+h_r(t)-\Lambda(t)c(t)+\int_{t_1}^t c(s)\d \Lambda(s) - \Lambda(t_1) + 2\nu \Lambda(t_1) & \nu \ge 0.5.
    \end{cases}
\end{align*}
Select $\nabla U(0,t)$ to be $\nabla U(\mu^*_{t_1}(t^*),t)$. Define $l^*_{f,\Lambda}(\nu,t) = l_{f,\Lambda}(\nu,t) - 2\nu \Lambda(t_1)$ as an affine transformation of $l_{f,\Lambda}$. The key observation is that $l^*_{f,\Lambda}$ as a function of $\nu$ takes a simple piecewise linear structure: it linearly decreases on $[0,0.5)$ at rate $2\Lambda(t)$ and stays constant on $[0.5,1]$, as is depicted by \cref{fig:alibi}.

The concavification method suggests that the following three conditions are sufficient for the optimality of $f$: (i) $l^*_{f,\Lambda}(0,t)$ is a constant function of $t$ for $t \ge t_1$, (ii) $l^*_{f,\Lambda}(0,t) \le l_{f,\Lambda}(0,t_1)$ for $t < t_1$, and (iii) $l^*_{f,\Lambda}(\mu^*_{t_1}(t_2),t)$ is maximized at $t_2$. We construct $\Lambda$:
\begin{align*}
    \Lambda(t)=
    \begin{dcases}
            \Psi(t,t_2):=\int_t^{t_2}e^{t-s}h'_\ell(s)\d s +e^{t-t_2}h'_r(t_2), & \forall t \ge t_1, \\
            \Psi(t_1,t_2), & \forall t \in (0,t_1),
    \end{dcases}
\end{align*}
where $t_2$ is given by $\Delta v+\Delta h(t_2) = h'_r(t_2)$. When $t_1 > 0$, condition (b) guarantees that $\Lambda'(t_1) = 0$, i.e.\ $\Lambda$ is a smooth function. Note that $\Lambda$ is monotonically increasing if for all $t \ge t_1$,
\begin{align*}
    \Lambda'(t)=\Psi'_t(t,t_2)\ge 0,
\end{align*}
which is given by condition (d). Next, we verify the three conditions.

\emph{Condition (i)} is satisfied if for all $t \ge t_1$, $\Lambda$ satisfies
\begin{align*}
    \frac{\partial l^*_{f,\Lambda}(0,t)}{\partial t}=h'_\ell(t)+\Lambda'(t)-\Lambda(t)=0,
\end{align*}
which can be verified by direct calculation.

\emph{Condition (ii)} has bite only when $t_1 > 0$, in which case, it is satisfied if $\frac{\partial l^*_{f,\Lambda}(0,t)}{\partial t}|_{t=t_1-} \ge 0$ and $\frac{\partial^2 l^*_{f,\Lambda}(0,t)}{\partial t^2} \le 0$ for $t < t_1$. The former condition is implied by condition (b). The latter condition is implied by (d).

\emph{Condition (iii)}: we verify the optimality of $\mu^*_{t_1}(t_2)$ and $t_2$ separately.
\begin{itemize}[noitemsep]
    \item $\mu^*_{t_1}(t_2) \in (0.5,1)$ is optimal $\iff$ $l^*_{f,\Lambda}(\cdot,t_2)$ attains the same value at $0$ and $\mu^*_{t_1}(t_2)$ $\iff$ $\Delta v + \Delta h(t_2) = \Lambda(t_2) = h'_r(t_2)$ (condition (a)). See the illustration in \cref{fig:alibi}, where the colored lines are $l^*_{f,\Lambda}(\nu,\widehat{t})$ and the dashed line is the concave envelope.
    \item $t_2$ is optimal if $\frac{\partial l^*_{f,\Lambda}(1,t)}{\partial t}|_{t=t_2}= 0$ and $\frac{\partial^2 l^*_{f,\Lambda}(1,t)}{\partial t^2} \le 0$. The former is equivalent to $h'_r(t_2) - \Lambda(t_2) = 0$, which is implied by the definition of $\Lambda$. The latter is implied by $h''_r(t) - \Lambda'(t) \le 0$, i.e.,
    \begin{align*}
        \begin{dcases}
            h''_r(t) \le 0, & \forall t < t_1, \\
            \Psi'_t(t,t_2) - h''_r(t) \ge 0, & \forall t \ge t_1,
        \end{dcases}
    \end{align*}
    which are given by condition (d).
\end{itemize}

\begin{figure}[htbp]
\centering
\vspace{-1em}
\tikzset{every picture/.style={line width=0.75pt}} 

\begin{tikzpicture}[x=0.75pt,y=0.75pt,yscale=-1,xscale=1]

\draw   (100,145.29) -- (440,145.29) -- (440,237.06) -- (100,237.06) -- cycle ;
\draw [color={rgb, 255:red, 0; green, 122; blue, 255 }  ,draw opacity=1 ][line width=2.25]    (100,183.53) -- (240,237.06) ;
\draw [color={rgb, 255:red, 208; green, 2; blue, 27 }  ,draw opacity=1 ][line width=2.25]    (280,183.53) -- (440,183.53) ;
\draw  [dash pattern={on 4.5pt off 4.5pt}]  (100,183.53) -- (280,183.53) ;
\draw    (360,168.24) -- (360,180.53) ;
\draw [shift={(360,183.53)}, rotate = 270] [fill={rgb, 255:red, 0; green, 0; blue, 0 }  ][line width=0.08]  [draw opacity=0] (8.93,-4.29) -- (0,0) -- (8.93,4.29) -- cycle    ;
\draw    (120,168.24) -- (102.38,181.71) ;
\draw [shift={(100,183.53)}, rotate = 322.59] [fill={rgb, 255:red, 0; green, 0; blue, 0 }  ][line width=0.08]  [draw opacity=0] (8.93,-4.29) -- (0,0) -- (8.93,4.29) -- cycle    ;

\draw (91,244.99) node [anchor=north west][inner sep=0.75pt]  [font=\small]  {$0$};
\draw (271,246.52) node [anchor=north west][inner sep=0.75pt]  [font=\small]  {$0.5$};
\draw (428,246.52) node [anchor=north west][inner sep=0.75pt]  [font=\small]  {$1$};
\draw (68,130.28) node [anchor=north west][inner sep=0.75pt]  [font=\small]  {$FOC$};
\draw (344,149.16) node [anchor=north west][inner sep=0.75pt]  [font=\small]  {$\mu^{*}_{t_1}(t_2)$};
\draw (113,153.99) node [anchor=north west][inner sep=0.75pt]  [font=\small]  {$\nu =0$};

\end{tikzpicture}
\vspace{-0.5em}
\caption{Concavification condition for \emph{Suspense}${}^{t_1}$-$\ell^{t_2}$ at $t=t_2$.}
\label{fig:alibi}
\end{figure}
\end{proof}

\subsection{Proof of \texorpdfstring{\cref{cor:binary_suspense_R}}{Corollary 2.1}}\label{appendix:corollary_3_1}

\Cref{cor:binary_suspense_R} follows from \cref{thm:binary_suspense_L} by exchanging the labels of the two states and the two actions, $L \leftrightarrow R$ and $\ell \leftrightarrow r$. Under this relabeling, the prior $\mu_0(R) \in (0, 1/2)$ becomes $\mu_0(L) \in (0, 1/2)$, the role of $\Delta v$ is reversed, and the marginal delay gains $h_\ell$ and $h_r$ swap. The Suspense${}^{t_1}$-$r^{t_2}$ strategy is the image of the Suspense${}^{t_1}$-$\ell^{t_2}$ strategy under this relabeling, with belief paths $\mu^L_t, \mu^R_t$ replaced by their symmetric counterparts and the calibrated Poisson reveal directed at state $R$ rather than state $L$. The conditions (a)--(d) of \cref{cor:binary_suspense_R} are precisely the conditions of \cref{thm:binary_suspense_L} after the relabeling: condition (a) becomes $\Delta v + \Delta h(t_2) = -h'_\ell(t_2)$, condition (b) involves the relabeled $\Psi(t_1,t_2) = \int_{t_1}^{t_2} e^{-s+t_1} h'_r(s) \,\d s + e^{-t_2+t_1} h'_\ell(t_2)$, and conditions (c) and (d) inherit the same structure with $h_\ell$ and $h_r$ swapped. The necessity argument of \cref{appendix:binary_proofs} and the construction of the multiplier $\Lambda$ apply verbatim to the relabeled environment, so the conclusion of \cref{thm:binary_suspense_L} transfers to \cref{cor:binary_suspense_R}.

\subsection{Proof of \texorpdfstring{\cref{thm:time_risk_loving}}{Theorem 3}}\label{appendix:time_risk_loving_proof}

\begin{proof}
    Let $\Lambda$ be the multiplier that solves \eqref{eqn:dual}. We claim that $\Lambda(t)$ must be strictly increasing on $[0,\overline{t}]$. Suppose for the purpose of contradiction that this is not true, i.e., there exists an interval $[t_1,t_2]$ such that $\Lambda(t) = \kappa$ on the interval. Without loss of generality, let $[t_1,t_2]$ be a maximal interval such that this is true. Since $\overline{t} < \overline{T}$, $t_2 < \overline{T}$.\footnote{In the special case where $\overline{T} = \infty$, $t_2 = \infty$ is already ruled out because this would imply $t_1 \ge \overline{t}$.}

    Define $\xi(t) = \max_{\mu}(l_{f,\Lambda}(\mu,t) - a \cdot \mu)$. Then \eqref{eqn:foc} implies that $\xi(t) \le 0$ and $\xi(t_2) = 0$. Note that $l_{f,\Lambda}(\mu,t)$ is left continuous and only jumps up in $t$. The envelope theorem implies
    \begin{align*}
        \xi'(t_2^-)&=V'_t(\mu,t_2)+\kappa U'_t(\mu,t_2),
    \end{align*}
    where $\mu$ attains $\xi(t_2)$. Next, we calculate a lower bound for ``$\xi'(t_2^+)$''.
    \begin{align*}
        &\xi(t_2+\delta)-\xi(t_2)\\
        \ge& \, l_{f,\Lambda}(\mu,t_2+ \delta)-l_{f,\Lambda}(\mu,t_2)\\
        =& \, V(\mu,t_2+\delta)-V(\mu,t_2)+\Big(\Lambda(t_2+\delta) - \kappa + \kappa \Big)U(\mu,t_2+\delta)-\kappa U(\mu,t_2) -\int_{t_2}^{t_2+\delta} \nabla_{\mu}U(\widehat{\mu}_{t},t)\cdot \mu \, \Lambda(\d t)\\
        =& \, V(\mu,t_2+\delta)-V(\mu,t_2)+\kappa\Big(U(\mu,t_2+\delta)-U(\mu,t_2)\Big)\\
        &+ \Big(\Lambda(t_2 + \delta) - \kappa \Big) U(\mu,t_2 + \delta) -\int_{t_2}^{t_2+\delta} \nabla_{\mu}U(\widehat{\mu}_{t},t)\cdot \mu \, \Lambda(\d t) \\
        \ge& \, V(\mu,t_2+\delta)-V(\mu,t_2)+\kappa\Big(U(\mu,t_2+\delta)-U(\mu,t_2)\Big) +\int_{t_2}^{t_2+\delta} \Big(U(\mu,t)-\nabla_{\mu}U(\widehat{\mu}_{t},t)\cdot \mu \Big) \Lambda(\d t)\\
        \ge& \, V(\mu,t_2+\delta)-V(\mu,t_2)+\kappa\Big(U(\mu,t_2+\delta)-U(\mu,t_2)\Big)\\
        =& \, \xi'(t_2^-)\delta+o(\delta),
    \end{align*}
    where the second inequality uses $U(\mu,t_2+\delta) \leq U(\mu,t_2)$, and the third inequality uses the convexity of $U$ in $\mu$.

    Suppose $V''_t > 0$ and $U''_t \ge 0$. Then $V + \kappa U$ is strictly convex in $t$. Suppose instead $V'_t > 0$. We claim that $\kappa > 0$. If not, $\xi(t_2+\delta) \ge V'_t(\mu,t_2)\delta + o(\delta) > 0$, violating \eqref{eqn:foc} around $t_2$. Then, since $\kappa > 0$, $V + \kappa U$ is strictly convex in $t$.

    Therefore, $l_{f,\Lambda}(\mu,t) - a \cdot \mu$ is strictly convex in $t$ on $[t_1,t_2]$ for fixed $\mu$. Since $\xi(t) := \max_{\mu}(l_{f,\Lambda}(\mu,t) - a \cdot \mu)$ is the upper envelope of a collection of strictly convex functions, it is also strictly convex on $[t_1,t_2]$. Hence, $\xi'(t_2^-) > 0$, which violates \eqref{eqn:foc} at $t_2^+$, leading to a contradiction.

    Since $\Lambda(t)$ must be strictly increasing, the complementary slackness condition implies that \eqref{eqn:OCC} must be binding throughout the support of $f$.
\end{proof}

\subsection{Proof of \texorpdfstring{\cref{thm:time_risk_averse}}{Theorem 4}}\label{appendix:time_risk_averse_proof}

\begin{proof}
    Let $\xi(t) = \max_{\mu}(l_{f,\Lambda}(\mu,t) - a \cdot \mu)$. Then \eqref{eqn:foc} implies that $\xi(t) \le 0$ and $\xi(\underline{t}) = \xi(\overline{t}) = 0$. The envelope theorem implies $\xi'(\underline{t}^-) = V'_t(\mu,\underline{t}) + \Lambda(0) U'_t(\mu,\underline{t}) \ge 0$, where $\mu$ is the maximizer that attains $\xi(\underline{t})$. Suppose for the purpose of contradiction that $\overline{t} > \max\left\{\overline{J}_V^{-1} \circ \underline{J}_V(\underline{t}), \overline{J}_U^{-1} \circ \underline{J}_U(\underline{t})\right\}$. Then there exists $\epsilon > 0$ such that for all $\mu \in \Delta(\Theta)$ and all $t > \overline{t} - \epsilon$,
    \begin{align*}
        V'_t(\mu,t) + \Lambda(t) U'_t(\mu,t) < -\epsilon.
    \end{align*}
    Since $\widehat{\mu}_t \to \mu^*$ and $\frac{\Lambda(\overline{t}) - \Lambda(t)}{\overline{t} - t}$ is bounded as $t \to \overline{t}^-$ (by \cref{defi:regular}), there exists $\delta > 0$ such that $\delta < \epsilon$ and for all $t \ge \overline{t} - \delta$,
    \begin{align*}
        \frac{\Lambda(\overline{t}) - \Lambda(\overline{t} - \delta)}{\delta}\Big(U(\mu^*,t) - \nabla_{\mu}U(\widehat{\mu}_t,t)\cdot \mu^*\Big) < \frac{1}{2}\epsilon.
    \end{align*}
    Then, for all $t_1, t_2$ such that $\overline{t} - \delta < t_1 < t_2 < \overline{t}$,
    \begin{align*}
        &(l_{f,\Lambda}(\mu^*,t_2) - a \cdot \mu^*)\\
        = & \, (l_{f,\Lambda}(\mu^*,t_1) - a \cdot \mu^*) + \int_{t_1}^{t_2} \big(V'_t(\mu^*,t) + \Lambda(t) U'_t(\mu^*,t)\big) \d t + \int_{t_1}^{t_2}\Big(U(\mu^*,t) - \nabla_{\mu}U(\widehat{\mu}_t,t)\cdot \mu^*\Big)\Lambda(\d t)\\
        <& \, -\epsilon (t_2 - t_1) + \frac{\epsilon \delta}{2(\Lambda(\overline{t}) - \Lambda(\overline{t} - \delta))}\int_{t_1}^{t_2}\Lambda(\d t)\\
        & \to -\frac{\epsilon \delta}{2} \quad \text{as } t_2 \to \overline{t},\ t_1 = \overline{t} - \delta.
    \end{align*}
    The analysis above implies that $l_{f,\Lambda}(\mu^*,t)$ is bounded away from $0$ as $t \to \overline{t}^-$, contradicting $(\mu^*, \overline{t}) \in \mathrm{supp}(f)$.
\end{proof}

\subsection{Formal version of \texorpdfstring{\cref{prop:binary_completeness_informal}}{Proposition 1}}\label{appendix:binary_completeness_formal}

We state and prove the formal version of the informal completeness \cref{prop:binary_completeness_informal} introduced in \cref{ssec:binary_characterization}. The formal statement requires a slight generalization of the Suspense-$\ell/r$ strategies of \cref{sec:binary} that accommodates several corner cases and a close variant. We first define the variants:

\begin{itemize}
    \item \textbf{Suspense${}^{t_1}$-Inconclusive-$\ell^{t_2}$} runs in two stages and is depicted in \cref{fig:inconclusive_L}. During the first stage $[0,t_1)$, the principal generates suspense with the two belief paths $\mu^L_t$ and $\mu^R_t$.\footnote{They are given by $\mu^L_t = \mu_{t_1}^L + \frac{t_1 - t}{A}(\frac{1}{2} - \mu_{t_1}^L)$ and $\mu^R_t = \mu_{t_1}^R - \frac{t_1 - t}{A}(\mu_{t_1}^R - \frac{1}{2})$, where $A = t_1 \Big(\frac{1/2 - \mu^L_{t_1}}{\mu_0 - \mu_{t_1}^L}\Big)$. $\mu^{L}_{t_1}$ and $\mu^R_{t_1}$ will be determined shortly.} The two paths are chosen such that the agent obtains no continuation surplus over the duration of suspense generation. At the end of the suspense stage $t = t_1$, either the agent is more confident that the state is $L$ ($\mu^L_{t_1} < \mu_0$) and will not be provided further information, which induces stopping and choosing $\ell$, or $\mu^R_{t_1} = e^{-t_2 + t_1}$, which leads to the second stage.\footnote{$\mu_{t_1}^L$ is computed from the agent's indifference condition between stopping at time $0$ and time $t_1$. We obtain $\mu_{t_1}^L = \frac{1}{2}\big(\frac{2\mu_0 - t_1 - p_+}{1 - p_+}\big)$, where $p_+ = \frac{t_1}{2 e^{-t_2 + t_1} - 1}$. Implicitly, we require $t_1 < \mu_0$ to guarantee there exists $t_2$ such that $\mu^R_{t_1} \in (0,1)$.}

    In the second stage $[t_1,t_2]$, the strategy reveals state $L$ at a rate chosen such as to keep the agent's continuation incentive \eqref{eqn:OCC} binding. This, in turn, pins down a unique continuing belief path $\mu^*_{t_1}(t) := \mu_{t_1}^R e^{t - t_1}$. By time $t_2$, state $L$ will be fully revealed. Thus, at time $t_2$ the agent is certain the state is $R$, which induces stopping and choosing $r$.

    \item \textbf{Suspense${}^{t_1}$-Inconclusive-$r^{t_2}$} is defined analogously. The strategy gradually reveals state $R$ in the second stage $[t_1,t_2]$, as depicted in \cref{fig:inconclusive_R}.
\end{itemize}

\begin{figure}[htbp]
    \centering
    \begin{minipage}{0.49\linewidth}
        \tikzset{every picture/.style={line width=0.75pt}} 

\begin{tikzpicture}[x=0.75pt,y=0.75pt,yscale=-1,xscale=1]

\draw [color={rgb, 255:red, 65; green, 117; blue, 5 }  ,draw opacity=1 ] [dash pattern={on 2.25pt off 1.5pt}]  (141,123.59) -- (60.16,109.4) ;
\draw [color={rgb, 255:red, 0; green, 122; blue, 255 }  ,draw opacity=1 ]   (60.16,111.48) -- (140.23,111.48) ;
\draw   (140.9,140.1) .. controls (140.9,144.77) and (143.23,147.1) .. (147.9,147.1) -- (173.7,147.1) .. controls (180.37,147.1) and (183.7,149.43) .. (183.7,154.1) .. controls (183.7,149.43) and (187.03,147.1) .. (193.7,147.1)(190.7,147.1) -- (219.5,147.1) .. controls (224.17,147.1) and (226.5,144.77) .. (226.5,140.1) ;
\draw   (60.1,140.1) .. controls (60.1,144.77) and (62.43,147.1) .. (67.1,147.1) -- (89.8,147.1) .. controls (96.47,147.1) and (99.8,149.43) .. (99.8,154.1) .. controls (99.8,149.43) and (103.13,147.1) .. (109.8,147.1)(106.8,147.1) -- (132.5,147.1) .. controls (137.17,147.1) and (139.5,144.77) .. (139.5,140.1) ;
\draw  [draw opacity=0][fill={rgb, 255:red, 0; green, 0; blue, 0 }  ,fill opacity=0.17 ] (60.09,20.4) -- (309.82,20.4) -- (309.82,80.4) -- (60.09,80.4) -- cycle ;
\draw   (60.09,20.4) -- (309.82,20.4) -- (309.82,140.4) -- (60.09,140.4) -- cycle ;
\draw  [dash pattern={on 0.84pt off 2.51pt}]  (165.53,46.57) .. controls (172.59,77.89) and (173.71,110.5) .. (166.65,137.99) ;
\draw [shift={(166.21,139.67)}, rotate = 285.16] [fill={rgb, 255:red, 0; green, 0; blue, 0 }  ][line width=0.08]  [draw opacity=0] (7.2,-1.8) -- (0,0) -- (7.2,1.8) -- cycle    ;
\draw  [color={rgb, 255:red, 255; green, 255; blue, 255 }  ,draw opacity=1 ] (19.4,0.4) -- (329.4,0.4) -- (329.4,160.4) -- (19.4,160.4) -- cycle ;
\draw  [color={rgb, 255:red, 208; green, 2; blue, 27 }  ,draw opacity=1 ][fill={rgb, 255:red, 208; green, 2; blue, 27 }  ,fill opacity=1 ] (139.21,123.59) .. controls (139.21,122.44) and (140.01,121.51) .. (141,121.51) .. controls (141.99,121.51) and (142.79,122.44) .. (142.79,123.59) .. controls (142.79,124.74) and (141.99,125.67) .. (141,125.67) .. controls (140.01,125.67) and (139.21,124.74) .. (139.21,123.59) -- cycle ;
\draw  [dash pattern={on 0.84pt off 2.51pt}]  (199.2,37.9) .. controls (206.25,69.22) and (206.77,110.4) .. (199.69,138.24) ;
\draw [shift={(199.25,139.93)}, rotate = 285.16] [fill={rgb, 255:red, 0; green, 0; blue, 0 }  ][line width=0.08]  [draw opacity=0] (7.2,-1.8) -- (0,0) -- (7.2,1.8) -- cycle    ;
\draw [color={rgb, 255:red, 0; green, 122; blue, 255 }  ,draw opacity=1 ]   (140.14,52.6) .. controls (174,40.67) and (193,33.67) .. (226,20.67) ;
\draw  [color={rgb, 255:red, 208; green, 2; blue, 27 }  ,draw opacity=1 ][fill={rgb, 255:red, 208; green, 2; blue, 27 }  ,fill opacity=1 ] (223.42,20.67) .. controls (223.42,19.52) and (224.22,18.59) .. (225.21,18.59) .. controls (226.2,18.59) and (227,19.52) .. (227,20.67) .. controls (227,21.82) and (226.2,22.75) .. (225.21,22.75) .. controls (224.22,22.75) and (223.42,21.82) .. (223.42,20.67) -- cycle ;
\draw [color={rgb, 255:red, 208; green, 2; blue, 27 }  ,draw opacity=1 ][line width=1.5]    (140.16,140.28) -- (226.33,140.28) ;
\draw [color={rgb, 255:red, 189; green, 16; blue, 224 }  ,draw opacity=1 ] [dash pattern={on 2.25pt off 1.5pt}]  (140.14,52.6) -- (59.75,72.68) ;
\draw  [fill={rgb, 255:red, 255; green, 255; blue, 255 }  ,fill opacity=1 ] (58.37,111.48) .. controls (58.37,110.33) and (59.17,109.4) .. (60.16,109.4) .. controls (61.15,109.4) and (61.95,110.33) .. (61.95,111.48) .. controls (61.95,112.63) and (61.15,113.56) .. (60.16,113.56) .. controls (59.17,113.56) and (58.37,112.63) .. (58.37,111.48) -- cycle ;
\draw  [dash pattern={on 0.84pt off 2.51pt}]  (84.42,66.65) .. controls (91.05,81.71) and (94.3,96.75) .. (87.65,113.84) ;
\draw [shift={(87,115.43)}, rotate = 292.91] [fill={rgb, 255:red, 0; green, 0; blue, 0 }  ][line width=0.08]  [draw opacity=0] (7.2,-1.8) -- (0,0) -- (7.2,1.8) -- cycle    ;
\draw  [dash pattern={on 0.84pt off 2.51pt}]  (115.25,119.18) .. controls (121.56,103.66) and (123.63,79.67) .. (115.76,60.22) ;
\draw [shift={(115,58.43)}, rotate = 66.1] [fill={rgb, 255:red, 0; green, 0; blue, 0 }  ][line width=0.08]  [draw opacity=0] (7.2,-1.8) -- (0,0) -- (7.2,1.8) -- cycle    ;
\draw [color={rgb, 255:red, 189; green, 16; blue, 224 }  ,draw opacity=1 ] [dash pattern={on 2.25pt off 1.5pt}]  (259,55.18) -- (240.5,55.18) ;
\draw [color={rgb, 255:red, 65; green, 117; blue, 5 }  ,draw opacity=1 ] [dash pattern={on 2.25pt off 1.5pt}]  (259,74.93) -- (240.5,74.93) ;
\draw [color={rgb, 255:red, 0; green, 122; blue, 255 }  ,draw opacity=1 ]   (241.41,94.48) -- (258.75,94.48) ;
\draw [color={rgb, 255:red, 208; green, 2; blue, 27 }  ,draw opacity=1 ]   (241.41,114.23) -- (258.75,114.23) ;
\draw  [color={rgb, 255:red, 0; green, 122; blue, 255 }  ,draw opacity=1 ][fill={rgb, 255:red, 255; green, 255; blue, 255 }  ,fill opacity=1 ] (138.44,111.48) .. controls (138.44,110.33) and (139.24,109.4) .. (140.23,109.4) .. controls (141.22,109.4) and (142.02,110.33) .. (142.02,111.48) .. controls (142.02,112.63) and (141.22,113.56) .. (140.23,113.56) .. controls (139.24,113.56) and (138.44,112.63) .. (138.44,111.48) -- cycle ;
\draw  [color={rgb, 255:red, 208; green, 2; blue, 27 }  ,draw opacity=1 ][fill={rgb, 255:red, 255; green, 255; blue, 255 }  ,fill opacity=1 ] (139.16,140.28) .. controls (139.16,139.13) and (139.97,138.2) .. (140.95,138.2) .. controls (141.94,138.2) and (142.75,139.13) .. (142.75,140.28) .. controls (142.75,141.43) and (141.94,142.36) .. (140.95,142.36) .. controls (139.97,142.36) and (139.16,141.43) .. (139.16,140.28) -- cycle ;

\draw (42.43,132.2) node [anchor=north west][inner sep=0.75pt]  [font=\footnotesize]  {$L$};
\draw (306.4,144.8) node [anchor=north west][inner sep=0.75pt]  [font=\small]  {$t$};
\draw (39.69,74.8) node [anchor=north west][inner sep=0.75pt]  [font=\small]  {$0.5$};
\draw (136.1,147.3) node [anchor=north west][inner sep=0.75pt]  [font=\small]  {$t_{1}$};
\draw (21.15,102.8) node [anchor=north west][inner sep=0.75pt]  [font=\small]  {$Prior$};
\draw (224.1,148.2) node [anchor=north west][inner sep=0.75pt]  [font=\small]  {$t_{2}$};
\draw (65.41,153.51) node [anchor=north west][inner sep=0.75pt]  [font=\footnotesize]  {$ \begin{array}{l}
Suspense-\\
generating
\end{array}$};
\draw (160.2,152.61) node [anchor=north west][inner sep=0.75pt]  [font=\footnotesize]  {$ \begin{array}{l}
\ell -\\
targeting
\end{array}$};
\draw (260.75,45.95) node [anchor=north west][inner sep=0.75pt]  [font=\footnotesize]  {$:\mu _{t}^{R}$};
\draw (260.5,65.2) node [anchor=north west][inner sep=0.75pt]  [font=\footnotesize]  {$:\mu _{t}^{L}$};
\draw (260.25,84.7) node [anchor=north west][inner sep=0.75pt]  [font=\footnotesize]  {$:\hat{\mu }_{t}$};
\draw (260.25,104.45) node [anchor=north west][inner sep=0.75pt]  [font=\footnotesize]  {$:supp( f)$};
\draw (42.63,17.4) node [anchor=north west][inner sep=0.75pt]  [font=\footnotesize]  {$R$};

\end{tikzpicture}
        \vspace{-2em}
        \caption{\emph{Suspense-Inconclusive-$\ell$} strategy.}
        \label{fig:inconclusive_L}
    \end{minipage}
    \begin{minipage}{0.49\linewidth}
        \tikzset{every picture/.style={line width=0.75pt}} 

\begin{tikzpicture}[x=0.75pt,y=0.75pt,yscale=-1,xscale=1]

\draw  [draw opacity=0][fill={rgb, 255:red, 0; green, 0; blue, 0 }  ,fill opacity=0.17 ] (62.09,11.23) -- (311.82,11.23) -- (311.82,71.23) -- (62.09,71.23) -- cycle ;
\draw [color={rgb, 255:red, 0; green, 122; blue, 255 }  ,draw opacity=1 ]   (63.48,94.74) -- (131.33,94.74) ;
\draw   (131.67,130.93) .. controls (131.67,135.6) and (134,137.93) .. (138.67,137.93) -- (157.58,137.93) .. controls (164.25,137.93) and (167.58,140.26) .. (167.58,144.93) .. controls (167.58,140.26) and (170.91,137.93) .. (177.58,137.93)(174.58,137.93) -- (196.49,137.93) .. controls (201.16,137.93) and (203.49,135.6) .. (203.49,130.93) ;
\draw   (62.1,130.93) .. controls (62.1,135.6) and (64.43,137.93) .. (69.1,137.93) -- (86.88,137.93) .. controls (93.55,137.93) and (96.88,140.26) .. (96.88,144.93) .. controls (96.88,140.26) and (100.21,137.93) .. (106.88,137.93)(103.88,137.93) -- (124.67,137.93) .. controls (129.34,137.93) and (131.67,135.6) .. (131.67,130.93) ;
\draw   (62.09,11.23) -- (311.82,11.23) -- (311.82,131.23) -- (62.09,131.23) -- cycle ;
\draw  [fill={rgb, 255:red, 255; green, 255; blue, 255 }  ,fill opacity=1 ] (59.9,94.74) .. controls (59.9,93.59) and (60.7,92.66) .. (61.69,92.66) .. controls (62.68,92.66) and (63.48,93.59) .. (63.48,94.74) .. controls (63.48,95.89) and (62.68,96.82) .. (61.69,96.82) .. controls (60.7,96.82) and (59.9,95.89) .. (59.9,94.74) -- cycle ;
\draw  [color={rgb, 255:red, 208; green, 2; blue, 27 }  ,draw opacity=1 ][fill={rgb, 255:red, 208; green, 2; blue, 27 }  ,fill opacity=1 ] (130.97,25.34) .. controls (130.97,24.19) and (131.77,23.26) .. (132.76,23.26) .. controls (133.75,23.26) and (134.55,24.19) .. (134.55,25.34) .. controls (134.55,26.49) and (133.75,27.42) .. (132.76,27.42) .. controls (131.77,27.42) and (130.97,26.49) .. (130.97,25.34) -- cycle ;
\draw [color={rgb, 255:red, 0; green, 0; blue, 0 }  ,draw opacity=1 ] [dash pattern={on 0.84pt off 2.51pt}]  (160.06,113.83) .. controls (167.11,82.87) and (169.07,40.87) .. (162.04,13.3) ;
\draw [shift={(161.6,11.63)}, rotate = 74.67] [fill={rgb, 255:red, 0; green, 0; blue, 0 }  ,fill opacity=1 ][line width=0.08]  [draw opacity=0] (7.2,-1.8) -- (0,0) -- (7.2,1.8) -- cycle    ;
\draw  [color={rgb, 255:red, 255; green, 255; blue, 255 }  ,draw opacity=1 ] (21.4,-8.77) -- (331.4,-8.77) -- (331.4,151.23) -- (21.4,151.23) -- cycle ;
\draw [color={rgb, 255:red, 0; green, 0; blue, 0 }  ,draw opacity=1 ] [dash pattern={on 0.84pt off 2.51pt}]  (194.34,124.83) .. controls (201.4,93.87) and (201.54,41.31) .. (194.44,13.32) ;
\draw [shift={(194,11.63)}, rotate = 74.67] [fill={rgb, 255:red, 0; green, 0; blue, 0 }  ,fill opacity=1 ][line width=0.08]  [draw opacity=0] (7.2,-1.8) -- (0,0) -- (7.2,1.8) -- cycle    ;
\draw [color={rgb, 255:red, 0; green, 122; blue, 255 }  ,draw opacity=1 ]   (130.06,108.12) .. controls (153,114) and (195.49,126.83) .. (203.49,130.26) ;
\draw [color={rgb, 255:red, 208; green, 2; blue, 27 }  ,draw opacity=1 ][line width=1.5]    (132.76,11.34) -- (204.06,11.34) ;
\draw [color={rgb, 255:red, 189; green, 16; blue, 224 }  ,draw opacity=1 ] [dash pattern={on 2.25pt off 1.5pt}]  (261,44.01) -- (242.5,44.01) ;
\draw [color={rgb, 255:red, 65; green, 117; blue, 5 }  ,draw opacity=1 ] [dash pattern={on 2.25pt off 1.5pt}]  (261,63.76) -- (242.5,63.76) ;
\draw [color={rgb, 255:red, 0; green, 122; blue, 255 }  ,draw opacity=1 ]   (243.41,83.31) -- (260.75,83.31) ;
\draw [color={rgb, 255:red, 208; green, 2; blue, 27 }  ,draw opacity=1 ]   (243.41,103.06) -- (260.75,103.06) ;
\draw [color={rgb, 255:red, 189; green, 16; blue, 224 }  ,draw opacity=1 ] [dash pattern={on 2.25pt off 1.5pt}]  (131.76,26.34) -- (61.32,47.94) ;
\draw [color={rgb, 255:red, 65; green, 117; blue, 5 }  ,draw opacity=1 ] [dash pattern={on 2.25pt off 1.5pt}]  (130.06,108.12) -- (61.69,94.74) ;
\draw  [dash pattern={on 0.84pt off 2.51pt}]  (83.85,40.2) .. controls (90.47,55.26) and (92.04,79.71) .. (85.28,97.36) ;
\draw [shift={(84.63,98.98)}, rotate = 292.91] [fill={rgb, 255:red, 0; green, 0; blue, 0 }  ][line width=0.08]  [draw opacity=0] (7.2,-1.8) -- (0,0) -- (7.2,1.8) -- cycle    ;
\draw  [dash pattern={on 0.84pt off 2.51pt}]  (107.77,103.55) .. controls (114.08,88.03) and (116.6,55.76) .. (108.76,35.81) ;
\draw [shift={(108,34)}, rotate = 66.1] [fill={rgb, 255:red, 0; green, 0; blue, 0 }  ][line width=0.08]  [draw opacity=0] (7.2,-1.8) -- (0,0) -- (7.2,1.8) -- cycle    ;
\draw  [color={rgb, 255:red, 208; green, 2; blue, 27 }  ,draw opacity=1 ][fill={rgb, 255:red, 208; green, 2; blue, 27 }  ,fill opacity=1 ] (201.69,130.26) .. controls (201.69,129.11) and (202.5,128.18) .. (203.49,128.18) .. controls (204.47,128.18) and (205.28,129.11) .. (205.28,130.26) .. controls (205.28,131.41) and (204.47,132.34) .. (203.49,132.34) .. controls (202.5,132.34) and (201.69,131.41) .. (201.69,130.26) -- cycle ;
\draw  [color={rgb, 255:red, 0; green, 122; blue, 255 }  ,draw opacity=1 ][fill={rgb, 255:red, 255; green, 255; blue, 255 }  ,fill opacity=1 ] (129.54,94.74) .. controls (129.54,95.88) and (130.34,96.8) .. (131.33,96.8) .. controls (132.32,96.8) and (133.12,95.88) .. (133.12,94.74) .. controls (133.12,93.61) and (132.32,92.69) .. (131.33,92.69) .. controls (130.34,92.69) and (129.54,93.61) .. (129.54,94.74) -- cycle ;
\draw  [color={rgb, 255:red, 208; green, 2; blue, 27 }  ,draw opacity=1 ][fill={rgb, 255:red, 255; green, 255; blue, 255 }  ,fill opacity=1 ] (130.97,11.34) .. controls (130.97,10.19) and (131.77,9.26) .. (132.76,9.26) .. controls (133.75,9.26) and (134.55,10.19) .. (134.55,11.34) .. controls (134.55,12.49) and (133.75,13.42) .. (132.76,13.42) .. controls (131.77,13.42) and (130.97,12.49) .. (130.97,11.34) -- cycle ;

\draw (45.86,124.6) node [anchor=north west][inner sep=0.75pt]  [font=\footnotesize]  {$L$};
\draw (308.4,135.63) node [anchor=north west][inner sep=0.75pt]  [font=\small]  {$t$};
\draw (45.48,5.23) node [anchor=north west][inner sep=0.75pt]  [font=\footnotesize]  {$R$};
\draw (41.69,65.63) node [anchor=north west][inner sep=0.75pt]  [font=\small]  {$0.5$};
\draw (127.1,138.8) node [anchor=north west][inner sep=0.75pt]  [font=\small]  {$t_{1}$};
\draw (23.25,92.2) node [anchor=north west][inner sep=0.75pt]  [font=\small]  {$Prior$};
\draw (201.24,137.8) node [anchor=north west][inner sep=0.75pt]  [font=\small]  {$t_{2}$};
\draw (62.46,145.3) node [anchor=north west][inner sep=0.75pt]  [font=\footnotesize]  {$ \begin{array}{l}
Suspense-\\
generating
\end{array}$};
\draw (149.29,145.4) node [anchor=north west][inner sep=0.75pt]  [font=\footnotesize]  {$ \begin{array}{l}
r-\\
targeting
\end{array}$};
\draw (262.75,34.79) node [anchor=north west][inner sep=0.75pt]  [font=\footnotesize]  {$:\mu _{t}^{R}$};
\draw (262.5,54.04) node [anchor=north west][inner sep=0.75pt]  [font=\footnotesize]  {$:\mu _{t}^{L}$};
\draw (262.25,73.54) node [anchor=north west][inner sep=0.75pt]  [font=\footnotesize]  {$:\hat{\mu }_{t}$};
\draw (262.25,93.29) node [anchor=north west][inner sep=0.75pt]  [font=\footnotesize]  {$:supp( f)$};

\end{tikzpicture}
        \vspace{-2em}
        \caption{\emph{Suspense-Inconclusive-$r$} strategy.}
        \label{fig:inconclusive_R}
    \end{minipage}
\end{figure}

The difference between Suspense${}^{t_1}$-Inconclusive-$\ell^{t_2}$ and Suspense${}^{t_1}$-$\ell^{t_2}$ is that they have imperfectly revealing signals at the beginning and the end of the persuasion window, respectively. We also extend the definitions of Suspense${}^{t_1}$-$\ell/r^{t_2}$ to accommodate the following corner cases:
\begin{itemize}
    \item $t_1 = t_2 = 0$: in this case, all strategies reduce to the static persuasion strategy of \citet{kamenica2011bayesian}, where beliefs $\{0,0.5\}$ are induced at $t = 0$ if $\Delta v > 0$, and beliefs $\{0.5,1\}$ are induced if $\Delta v < 0$.
    \item $t_2 = \mu^{*-1}_{t_1}(1)$ or $\mu^{\dagger-1}_{t_1}(1)$: in these cases, the terminal belief of the Suspense${}^{t_1}$-$\ell^{t_2}$ upon no revelation is either $0$ or $1$. In these cases, the Suspense${}^{t_1}$-$\ell/r^{t_2}$ strategy coincides with the Suspense${}^{t_1}$-Inconclusive-$\ell/r^{t_2}$ strategy.
\end{itemize}

In short, we say a strategy is a \emph{generalized Suspense}-$\ell/r$ strategy if it satisfies any of the above definitions for some $t_1,t_2$. The following result provides sufficient conditions for the optimality of each generalized \emph{Suspense}-$\ell/r$ strategy, including the case of \cref{thm:binary_suspense_L}.

\begin{prop}\label{prop:binary_completeness_sufficient}
    Suppose $t_1, t_2 \in [0, \mu^{*-1}_{t_1}(1)]$ with $t_1 < t_2$. Let conditions (b)--(d) from \cref{thm:binary_suspense_L} hold. Then the following statements are true:
    \begin{enumerate}
        \item If $t_1 < t_2 < \mu^{*-1}_{t_1}(1)$ and $\Delta v + \Delta h(t_2) = h'_r(t_2)$, then \emph{Suspense}${}^{t_1}$-$\ell^{t_2}$ is optimal.
        \item If $t_1 < t_2 = \mu^{*-1}_{t_1}(1)$ and $\Delta v + \Delta h(t_2) \in [h'_r(t_2) - 2\Psi(t_1,t_2), h'_r(t_2)]$, then \emph{Suspense}${}^{t_1}$-$\ell^{t_2}$ is optimal.
        \item If $t_2 < \mu^{*-1}_{t_1}(1)$ and $\Delta v + \Delta h(t_2) = h'_r(t_2) - 2\Psi(t_1,t_2)$, then \emph{Suspense}${}^{t_1}$-Inconclusive-$\ell^{t_2}$ is optimal.
    \end{enumerate}
    Moreover, if $h_\ell$ and $h_r$ are concave and $|\Delta v| \geq \max\{h'_r(0), h'_\ell(0)\}$, then the $\text{Suspense}^{0}\text{-}\ell^{0}$ strategy is optimal.
\end{prop}

\begin{proof}
Part 1 is the same as \cref{thm:binary_suspense_L}.

\textbf{Proof of part 2.} We use the same $\Lambda$ as in the proof of \cref{thm:binary_suspense_L} in \cref{appendix:binary_proofs}. It remains to verify that $\max_t l^*_{f,\Lambda}(\nu,t)$ is concavified at $0$ and $1$. If $\nu \geq 0.5$, $\max_t l^*_{f,\Lambda}(\nu,t) = \max_t l^*_{f,\Lambda}(1,t) = l^*_{f,\Lambda}(1,t_2)$. If $\nu < 0.5$,
\begin{align*}
    \frac{\partial l^*_{f,\Lambda}(\nu,t)}{\partial t}\bigg|_{t=t_1-} &= h'_\ell(t_1) - \Lambda(t_1) \geq 0, \\
    \frac{\partial l^*_{f,\Lambda}(\nu,t)}{\partial t}\bigg|_{t=t_1+} &= h'_\ell(t_1) + (1 - 2\nu)\Lambda'(t_1) - \Lambda(t_1) \leq h'_\ell(t_1) + \Lambda'(t_1) - \Lambda(t_1) = 0.
\end{align*}
Since $l^*_{f,\Lambda}(\nu,\cdot)$ is concave, we have $\max_t l^*_{f,\Lambda}(\nu,t) = l^*_{f,\Lambda}(\nu,t_1)$ if $\nu < 0.5$. Thus, the statement that $\max_t l^*_{f,\Lambda}(\nu,t)$ is concavified at $0$ and $1$ is equivalent to
\[l^*_{f,\Lambda}(0,t_2) - 2\Lambda(t_1) \leq l^*_{f,\Lambda}(1,t_2) \leq l^*_{f,\Lambda}(0,t_2).\]
This inequality is equivalent to $\Delta v + \Delta h(t_2) \in [h'_r(t_2) - 2\Psi(t_1,t_2), h'_r(t_2)]$, as required.

\textbf{Proof of part 3.} We use the same $\Lambda$ as before. We showed earlier that $\max_t l^*_{f,\Lambda}(\mu^*(t_1),t) = l^*_{f,\Lambda}(\mu^*(t_1),t_1)$. It remains to verify that $\max_t l^*_{f,\Lambda}(\nu,t)$ is concavified at $0$, $1$, and $\mu^*(t_1)$. Since $\mu^*(t_1) < 0.5$, the concavification condition is equivalent to
\begin{align*}
    & l^*_{f,\Lambda}(\mu^*(t_1),t_1) = \mu^*(t_1) l^*_{f,\Lambda}(0,t_1) + (1 - \mu^*(t_1)) l^*_{f,\Lambda}(1,t_2) \\
    \iff & \Delta v + \Delta h(t_2) = h'_r(t_2) - 2\Psi(t_1,t_2),
\end{align*}
as required.

\textbf{Proof of optimality of the $\text{Suspense}^{0}\text{-}\ell^{0}$ strategy.} Under the $\text{Suspense}^{0}\text{-}\ell^{0}$ strategy with $\Lambda(t) = |\Delta v|$, the principal's derivative-of-Lagrangian from choosing stopping belief $\nu$ at time $t$ is
\begin{align*}
    l_{f,\Lambda}(\nu,t) = \begin{cases}
        v_\ell + h_\ell(t) - |\Delta v| t, & \nu < 0.5, \\
        v_r + h_r(t) + |\Delta v|(2\nu - 1) - |\Delta v| t, & \nu > 0.5.
    \end{cases}
\end{align*}
Since $h_\ell$ and $h_r$ are concave and $|\Delta v| \geq \max\{h'_r(0), h'_\ell(0)\}$, we have $l_{f,\Lambda}(\nu,t) \leq l_{f,\Lambda}(\nu,0)$ for every $\nu \in [0,1]$. It is easy to check that $l^*_{f,\Lambda}(\cdot,0)$ is concavified at $0$, $0.5$, and $1$, as desired.
\end{proof}

We finally provide a full characterization of optimal information structures when the principal's value is concave and either complementary or substitutable with the agent's action.

\begin{prop}\label{prop:binary_completeness_formal}
    If $h_\ell$ and $h_r$ are concave and $\Delta v > 0$, then:
    \begin{enumerate}
        \item (Complements) If $\Delta h' > 0$, there exist $t_1$ and $t_2$ such that the $\text{Suspense}^{t_1}\text{-}\ell^{t_2}$ strategy is optimal. Moreover, $t_1, t_2$ are decreasing in $\Delta v$.
        \item (Substitutes) If $\Delta h' < 0$, there exist $t_1$ and $t_2$ such that either the $\text{Suspense}^{t_1}\text{-}r^{t_2}$ strategy or the Suspense${}^{t_1}$-Inconclusive-$r^{t_2}$ strategy is optimal.
    \end{enumerate}
\end{prop}

\begin{proof}
We prove the following equivalent statement (obtained by flipping labels): if $h_\ell$ and $h_r$ are concave and $\Delta h' > 0$, then
    \begin{enumerate}
        \item (Complements) If $\Delta v > 0$, there exist $t_1$ and $t_2$ such that the $\text{Suspense}^{t_1}\text{-}\ell^{t_2}$ strategy is optimal.
        \item (Substitutes) If $\Delta v < 0$, there exist $t_1$ and $t_2$ such that either the $\text{Suspense}^{t_1}\text{-}\ell^{t_2}$ strategy or the Suspense${}^{t_1}$-Inconclusive-$\ell^{t_2}$ strategy is optimal.
    \end{enumerate}

Since $h_\ell$ and $h_r$ are concave and $\Delta h' > 0$, conditions (c) and (d) in \cref{thm:binary_suspense_L} always hold. Let $F(t_1,t_2) = \Psi(t_1,t_2) - h'_\ell(t_1)$. Define $t_{1,\ell}^*(t_2) \coloneqq \min\{t_1 \geq 0 : F(t_1,t_2) \geq 0\}$. This is well-defined because $F(t_2,t_2) = \Delta h'(t_2) > 0$. Consider that
\[e^{-t_1} F(t_1,t_2) = \int_{t_1}^{t_2} e^{-s} h''_\ell(s) \d s + e^{-t_2} \Delta h'(t_2)\]
is strictly increasing and continuous in $t_1$. This implies (1) $F(t_1,t_2) \geq 0$ if and only if $t_1 \geq t^*_{1,\ell}(t_2)$, and (2) $t^*_{1,\ell}(t_2)$ is continuous.
\begin{itemize}
    \item $t^*_{1,\ell}(t_2)$ is increasing in $t_2$. To see this, $\frac{\partial F}{\partial t_2} = e^{t_1 - t_2}(-\Delta h'(t_2) + h''_r(t_2)) < 0$. Consider any $t_2 < t'_2$. If $t_1 > t^*_{1,\ell}(t'_2)$, then $F(t_1,t_2) > F(t_1,t'_2) \geq 0$ since $F$ is decreasing in $t_2$. This implies $t^*_{1,\ell}(t'_2) > t^*_{1,\ell}(t_2)$, as desired.
\end{itemize}

Define $\hat{t}_2 \coloneqq \max\{t_2 \geq 0 : t_2 \leq \mu^{*-1}_{t^*_{1,\ell}(t_2)}(1)\}$. This is well-defined because $\mu^{*-1}_t(1)$ is bounded above. Let $\hat{t}_1 \coloneqq t^*_{1,\ell}(\hat{t}_2)$. From the definition of $\hat{t}_2$ we must have $\hat{t}_2 = \mu^{*-1}_{\hat{t}_1}(1)$. Note that if $t_2 \leq \hat{t}_2$, then $t_2 \leq \mu^{*-1}_{t^*_{1,\ell}(t_2)}(1)$. This is because (1) $t^*_{1,\ell}(t_2)$ is increasing in $t_2$, and (2) $\mu^{*-1}_{t_1}(1) = t_1 + \log\big(\frac{2\mu_0 - t_1}{\mu_0}\big)$ is decreasing in $t_1$.

We consider the following three cases of $\Delta v$:
\begin{itemize}
    \item Case 1: $\Delta v \in [h'_r(\hat{t}_2) - 2\Psi(\hat{t}_1,\hat{t}_2) - \Delta h(\hat{t}_2), h'_r(\hat{t}_2) - \Delta h(\hat{t}_2)]$. \Cref{prop:binary_completeness_sufficient} part 2 implies that the $\text{Suspense}^{\hat{t}_1}\text{-}\ell^{\hat{t}_2}$ strategy is optimal. Note that $\text{Suspense}^{\hat{t}_1}\text{-}\ell^{\hat{t}_2}$ coincides with $\text{Suspense}^{\hat{t}_1}\text{-Inconclusive-}\ell^{\hat{t}_2}$ because $\hat{t}_2 = \mu^{*-1}_{\hat{t}_1}(1)$.
    \item Case 2: $\Delta v > h'_r(\hat{t}_2) - \Delta h(\hat{t}_2)$. We have the following subcases.
    \begin{itemize}
        \item Case 2.1: $\Delta v \geq h'_r(0)$. The last part of \cref{prop:binary_completeness_sufficient} implies that the $\text{Suspense}^{0}\text{-}\ell^{0}$ strategy is optimal.
        \item Case 2.2: $\Delta v < h'_r(0)$. This implies there exists $t^*_2 \in (0, \hat{t}_2)$ such that $\Delta v = h'_r(t^*_2) - \Delta h(t^*_2)$. Set $t^*_1 = t^*_{1,\ell}(t^*_2)$. \Cref{prop:binary_completeness_sufficient} part 1 implies that the $\text{Suspense}^{t^*_1}\text{-}\ell^{t^*_2}$ strategy is optimal.
    \end{itemize}
    \item Case 3: $\Delta v < h'_r(\hat{t}_2) - 2\Psi(\hat{t}_1,\hat{t}_2) - \Delta h(\hat{t}_2)$. This implies $\Delta v < 0$.\footnote{This is because $h'_r(t_2) - 2\Psi(t_1,t_2) \leq (1 - 2 e^{t_1 - t_2}) h'_r(t_2) \leq 0$ since $t_2 < \mu^{*-1}_{t_1}(1) = t_1 - \log \mu^R_{t_1} < t_1 + \log 2$.} We have the following subcases:
    \begin{itemize}
        \item Case 3.1: $\Delta v \leq -h'_r(0)$. The last part of \cref{prop:binary_completeness_sufficient} implies that the $\text{Suspense}^{0}\text{-Inconclusive-}\ell^{0}$ strategy is optimal.
        \item Case 3.2: $\Delta v > -h'_r(0)$. Note that $h'_r(0) - 2\Psi(0,0) - \Delta h(0) = -h'_r(0)$. This implies there exists $t^*_2 \in (0, \hat{t}_2)$ such that $\Delta v = h'_r(t^*_2) - \Psi(t^*_1,t^*_2) - \Delta h(t^*_2)$, where $t^*_1 = t^*_{1,\ell}(t^*_2)$. \Cref{prop:binary_completeness_sufficient} part 3 implies that the $\text{Suspense}^{t^*_1}\text{-}$Inconclusive-$\ell^{t^*_2}$ strategy is optimal.
    \end{itemize}
\end{itemize}
\end{proof}

\begin{figure}[htbp]
\centering
\vspace{-2em}
\tikzset{every picture/.style={line width=0.75pt}} 

\begin{tikzpicture}[x=0.75pt,y=0.75pt,yscale=-1,xscale=1]

\draw    (43.94,54.82) -- (533.66,53.71) ;
\draw [shift={(536.66,53.7)}, rotate = 179.87] [fill={rgb, 255:red, 0; green, 0; blue, 0 }  ][line width=0.08]  [draw opacity=0] (5.36,-2.57) -- (0,0) -- (5.36,2.57) -- cycle    ;
\draw [shift={(40.94,54.83)}, rotate = 359.87] [fill={rgb, 255:red, 0; green, 0; blue, 0 }  ][line width=0.08]  [draw opacity=0] (5.36,-2.57) -- (0,0) -- (5.36,2.57) -- cycle    ;
\draw    (288.8,42.02) -- (288.8,65.76) ;
\draw    (363.16,42.02) -- (363.16,65.76) ;
\draw    (439.42,42.02) -- (439.42,65.76) ;
\draw    (136.27,42.02) -- (136.27,65.76) ;
\draw    (212.53,42.02) -- (212.53,65.76) ;
\draw   (213.2,78.73) .. controls (213.2,83.4) and (215.53,85.73) .. (220.2,85.73) -- (278.53,85.73) .. controls (285.2,85.73) and (288.53,88.06) .. (288.53,92.73) .. controls (288.53,88.06) and (291.86,85.73) .. (298.53,85.73)(295.53,85.73) -- (356.87,85.73) .. controls (361.54,85.73) and (363.87,83.4) .. (363.87,78.73) ;
\draw   (364.43,78.43) .. controls (364.47,83.1) and (366.82,85.41) .. (371.49,85.36) -- (391.99,85.18) .. controls (398.66,85.12) and (402.01,87.42) .. (402.05,92.09) .. controls (402.01,87.42) and (405.32,85.06) .. (411.99,84.99)(408.99,85.02) -- (432.49,84.81) .. controls (437.16,84.77) and (439.47,82.42) .. (439.42,77.75) ;
\draw   (136.53,78.73) .. controls (136.53,83.4) and (138.86,85.73) .. (143.53,85.73) -- (164.2,85.73) .. controls (170.87,85.73) and (174.2,88.06) .. (174.2,92.73) .. controls (174.2,88.06) and (177.53,85.73) .. (184.2,85.73)(181.2,85.73) -- (204.87,85.73) .. controls (209.54,85.73) and (211.87,83.4) .. (211.87,78.73) ;
\draw   (439.87,78.07) .. controls (439.88,82.74) and (442.22,85.06) .. (446.89,85.04) -- (472.57,84.95) .. controls (479.24,84.92) and (482.58,87.24) .. (482.6,91.91) .. controls (482.58,87.24) and (485.9,84.9) .. (492.57,84.87)(489.57,84.88) -- (518.25,84.78) .. controls (522.92,84.76) and (525.24,82.42) .. (525.22,77.75) ;
\draw   (47.93,79.56) .. controls (47.97,84.23) and (50.32,86.54) .. (54.99,86.49) -- (81.96,86.24) .. controls (88.63,86.17) and (91.98,88.47) .. (92.03,93.14) .. controls (91.98,88.47) and (95.29,86.11) .. (101.96,86.05)(98.96,86.08) -- (128.93,85.79) .. controls (133.6,85.75) and (135.91,83.4) .. (135.87,78.73) ;

\draw (204.46,17.04) node [anchor=north west][inner sep=0.75pt]  [font=\normalsize] [align=left] {$\displaystyle time\ 0\ persuasion\ gain\ \Delta v$};
\draw (283.04,65.93) node [anchor=north west][inner sep=0.75pt]   [align=left] {$\displaystyle 0$};
\draw (250.97,97.6) node [anchor=north west][inner sep=0.75pt]  [font=\footnotesize]  {$ \begin{array}{l}
Suspense-\ell \\
with\ certain\ \\
stopping\ belief
\end{array}$};
\draw (364.52,98.86) node [anchor=north west][inner sep=0.75pt]  [font=\footnotesize]  {$ \begin{array}{l}
Suspense-\ell \\
\end{array}$};
\draw (142.75,98.18) node [anchor=north west][inner sep=0.75pt]  [font=\footnotesize]  {$ \begin{array}{l}
Suspense-\\
inconclusive-\ell \\
\end{array}$};
\draw (49.83,99.46) node [anchor=north west][inner sep=0.75pt]  [font=\footnotesize]  {$ \begin{array}{l}
static\ info.\ \\
to\ maximize\ \\
probability\ of\ \\
\ell 
\end{array}$};
\draw (457.71,95.99) node [anchor=north west][inner sep=0.75pt]  [font=\footnotesize]  {$ \begin{array}{l}
static\ info.\ \\
to\ maximize\ \\
probability\ of\ \\
r
\end{array}$};

\end{tikzpicture}
\caption{Strategies as $\Delta v$ varies (fix $\Delta h' > 0$).}
\label{fig:vary_deltav}
\end{figure}

Another way of thinking about \cref{prop:binary_completeness_formal} is to fix $h_r, h_\ell$ such that $\Delta h' > 0$ and vary $\Delta v$, the time-$0$ gain from the agent taking action $r$ over $\ell$; this is illustrated in \cref{fig:vary_deltav}. For the case in which $\Delta v > 0$, we are in case 1 of \cref{prop:binary_completeness_formal}, so the Suspense-$\ell$ strategy is optimal.\footnote{Note that time-$0$ static persuasion to maximize the probability the agent chooses $r$ is a special case.} Now consider the case in which $\Delta v < 0$; this corresponds to case 2 of \cref{prop:binary_completeness_formal} with the labels flipped, so either Suspense$^{t_1}$-Inconclusive-$\ell^{t_2}$ or Suspense$^{t_1}$-$\ell^{t_2}$ is optimal.\footnote{Even though $\Delta v < 0$, Suspense$^{t_1}$-$\ell^{t_2}$ can be optimal when $h'_r(\hat{t}_2) - \Delta h(\hat{t}_2) < 0$.}


\section{Proofs for the Linear Application}\label{appendix:linear_section}

This appendix collects the proofs of the two results stated in \cref{sec:linear}: the first-order characterization for the continuum (\cref{thm:foc_continuum}) and the optimality of dynamic tail-censorship (\cref{thm:tail_censorship}).

\subsection{Proof of \cref{thm:foc_continuum}}\label{appendix:foc_continuum_proof}

The proof of \cref{thm:foc_continuum} mirrors that of \cref{thm:foc}. It exploits the strong-duality argument of \cref{lem:duality}, augmented by the \eqref{eqn:MPS} constraint on the marginal over posterior means: the dual decouples into a time-dependent multiplier $\Lambda$ on the conditional obedience constraint and a time-invariant convex multiplier $\Gamma$ on the mean-preservation constraint. The complementary slackness condition $\mathcal{L}(g, \Lambda, \Gamma) = \E_g[V]$ together with the support equality $l_{g,\Lambda,\Gamma}(\mm, t) = 0$ on $\mathrm{supp}(g)$ then sandwiches the Lagrangian value between primal and dual feasible levels, certifying optimality.

\begin{proof}[Proof of \cref{thm:foc_continuum}]
Suppose $(g,\Lambda,\Gamma)$ satisfies FOC, OC-C, and the complementary slackness condition $\mathcal{L}(g,\Gamma,\Lambda) = \mathbb{E}_g[V]$. Note that
\[\mathcal{L}(f,\Lambda,\Gamma) \geq \int V(m ,t) f(\d m, \d t), \]
for every $f \in \overline{\Delta}_{\mu_0}$ that satisfies (OC-C) and (MPS) because
\begin{align*}
    \int_{0}^1 \bigg( \int_0^m \int_0^x \bigg(\mu_0(\d y) - \int_{\tau \in T} g(\d y,\d t) \bigg) \d x \bigg) \d^2 \Gamma(m) &= \int \Gamma(m) \mu_0(\d m)-\int \Gamma(m) g(\d m, \d t) \\
    &\geq 0,
\end{align*}
where the inequality follows the constraint that $g$ is a mean-preserving contraction of $\mu_0$ and $\Gamma$ is convex.

We show that $\mathcal{L}(g,\Gamma, \Lambda) \geq  \mathcal{L}(g',\Gamma, \Lambda )$ for  every $g' \in \Delta_{\mu_0}.$ Suppose for the purpose of contradiction that such $g'$ exists. Let $\Delta  \coloneqq g'-g$. Then, since $\mathcal{L}$ is concave in $g$, for every $\alpha \in (0,1),$
\begin{align}
&\frac{\mathcal{L}(\alpha f' + (1-\alpha)f,\Gamma,\Lambda) - \mathcal{L}(f,\Gamma,\Lambda)}{\alpha} \geq \mathcal{L}(f',\Gamma,\Lambda) - \mathcal{L}(f,\Gamma,\Lambda) > 0 \nonumber\\
\Longrightarrow & \underline{\lim}_{\alpha \to 0} \frac{\mathcal{L}(\alpha f' + (1-\alpha)f,\Gamma,\Lambda) - \mathcal{L}(f,\Gamma,\Lambda)}{\alpha} > 0 \nonumber\\
\Longrightarrow & \int V(m, t)\Delta(\d m,\d t)    +\int_{t \in T^\circ} \int_{\tau>t} U(m,\tau) \Delta(\d m, \d \tau ) \d\Lambda(t) \nonumber
\\
&\quad- \overline{\lim_{\alpha\to 0}} \int_{t \in T^\circ} \left( \mathcal{S}(g +  \Delta,t) - \mathcal{S}(g,t) \right) d \Lambda(t) - \int \Gamma(m) \Delta(\d m,\d t) > 0, \label{eqn: lagrangian_suff}
\end{align}
where
\begin{align*}
    \mathcal{S}(g,t) \coloneqq \int g (\d m,\d s) \cdot U \left( \widehat{m}_t(g),t \right).
\end{align*}
Consider that
\begin{align}
    \int_{t \in T^\circ} \int_{\tau>t} U(m,\tau) \Delta(\d m, \d \tau ) \d\Lambda(t)  = \int U(m,t) \Lambda(t) \Delta (\d m, \d t). \label{eqn: lagrangian: U}
\end{align}
Also,
\begin{align*}
    \widehat{m}_t(g + \alpha\Delta) = \widehat{m}_t(g) +
    \frac{ \int_{s>t} (m - \widehat{m}_t(g))  \Delta(\d m, \d s)}{\int g(\d m,\d s ) } \alpha +\mathcal{o}(\alpha),
\end{align*}
which implies
\begin{align*}
    U(\widehat{m}_t(g + \alpha\Delta)) \geq  U(\widehat{m}_t(g)) + \frac{ \int_{s>t} (m - \widehat{m}_t(g))  \Delta(\d m, \d s)}{\int g(\d m,\d s ) } U'_m(\widehat{m}_t(g),t)\alpha + \mathcal{o}(\alpha)
\end{align*}
for any selection of a subgradient $U'_m.$ Therefore,
\begin{align}
    \mathcal{S}(g+\alpha\Delta,t) \geq \mathcal{S}(g,t) + \int_{s>t} \left( U(\widehat{m}_t(g),t) + \left(  m - \widehat{m}_t(g) \right) U'_m(\hat{m}_t(g),t) \right)  \Delta(\d m,\d s) \cdot \alpha + \mathcal{o}(\alpha). \label{eqn: lagrangian: OC}
\end{align}
From \eqref{eqn: lagrangian_suff}, \eqref{eqn: lagrangian: U}, and \eqref{eqn: lagrangian: OC}, we obtain
\begin{align}
    &\int l_{g,\Lambda,\Gamma}(m,t)(g'-g) (\d m, \d t) > 0. \label{eqn: sufficiency_contradiction}
\end{align}
Since $g' \in \overline{\Delta}_{\mu_0}$ and $l_{g,\Lambda,\Gamma}(m,t) = 0$ on the support of $g$, we have
\begin{align*}
    \int l_{g,\Lambda,\Gamma}(m,t) g'(\d m,\d t) = 0 \geq \int l_{g,\Lambda,\Gamma}(m,t) g'(\d m,\d t),
\end{align*}
which contradicts \cref{eqn: sufficiency_contradiction}, as required.

\noindent Consider any $g' \in \overline{\Delta}_{\mu_0}$ that satisfies all constraints in (OC-C) and (MPS). Then,
\begin{align*}
    \int V(m,t) g'(\d m,\d t) \leq \mathcal{L}(g',\gamma ,\Lambda) \leq \mathcal{L}(g,\gamma,\Lambda) = \int V(m,t) g(\d m, \d t),
\end{align*}
where the last equality is implied by the complementary slackness condition.
\end{proof}

\subsection{Proof of \cref{thm:tail_censorship}}\label{appendix:tail_proof}

The proof of \cref{thm:tail_censorship} applies the first-order characterization \cref{thm:foc_continuum} to the central-bank--investor environment. The conjectured optimum is a tail-censorship policy with thresholds $\alpha(t)$ and $\beta(t)$ satisfying \cref{eqn:ODE_continuum}; the construction below specifies the multipliers $\Lambda$ and $\Gamma$ and verifies the FOC equality on the support and the FOC inequality off the support. The verification rests on the two characterizing properties highlighted in the body: the no-upward-surprise identity $\widehat{m}_t = \beta(t)$ and the investor indifference condition. We begin with a useful lemma stating the existence of a solution of \cref{eqn:ODE_continuum}.

\begin{lem} \label{lem: existence of ODE}
Suppose $a(\cdot)$ and $b(\cdot)$ solves the following system of equations
\begin{align}\label{eqn: backward ODE continuum}
   \begin{dcases}a'(t)&=\frac{\mathcal{M}(a(t))+1-\mathcal{M}(b(t))}{\mu(a(t))(0.5-a(t))} \times rb(t) \eqqcolon x(a(t),b(t))  \\
    b'(t)&=-\frac{b(t)-a(t)}{0.5-a(t)} \times rb(t) \eqqcolon y(a(t),b(t)) \end{dcases}
\end{align}
with initial conditions $b(0) = 1$ and $a(0) = 0.$ Define
\begin{align*}
m^* &\coloneqq b(T_0-), \text{where} \\
T_0 &\coloneqq \sup\{t_0\geq 0: \text{a solution of \cref{eqn: backward ODE continuum} is defined on $[0,t_0]$ and $b(t_0) \geq 0.5$}\}.
\end{align*}
Then,
\begin{enumerate}
    \item For every $m_0 \geq 0.5$, 1) if $m_0 \geq m^*$, then there exists $T \geq 0$ such that $\lim_{t \uparrow T} b(T) = m^*$ and $\lim_{t \uparrow T} a(T) \leq 0.5$, and 2) if $m_0 < m^*$, then there exists $T \geq 0$ such that $\lim_{t \uparrow T} a(t) = 0.5$ and $\lim_{t \uparrow T} b(t) > m_0.$
    \item Fix $m_0 > 0.5$. A sufficient condition of the existence of $T$ that $b(T) = m_0$ is
\[\mathbb{E}[(m_0-m) 1\{m < 0.5\}] > 1- m_0. \]
\end{enumerate}
\end{lem}
\begin{proof}[Proof of \cref{lem: existence of ODE}]
    We restrict the space of $(a(t),b(t))$ to an open set $E \coloneqq (-\epsilon, 0.5) \times (0.5,1+\epsilon)$ for small $\epsilon> 0.$ Note that we extend $\mu(\cdot)$ to support the domain $(-\epsilon, 1+\epsilon)$ in the way that $\mu(m) = \mu(1)$ when $m > 1$ and $\mu(m) = \mu(0)$ when $m<0$. This implies $\mu$ is still continuous and bounded away from $0$.

    Since $\mu$ and $g$ are continuous in $E$, the Peano existence theorem implies there exists a local solution around $t=0$. Let the maximal solution of the system of ODE with initial conditions $b(0) = 1$ and $a(0) = 0$ be defined on an interval $I \subset \mathbb{R}^+.$ This implies $T_0 = \sup I$ and $m^* = \lim_{t \uparrow T_0} b(t).$

    First, we show that $a$ and $b$ are increasing and decreasing in $I$, respectively. Since $a_0> b_0$ for every $(a_0,b_0) \in E$, we must have $b'(t) = y(a(t),b(t)) < 0$ for every $t \in I$. Thus, $b$ is strictly decreasing, and $b(t) < b(0) = 1$ for every $t \in I$. This implies, for every $t \in I,$
    \begin{align*}
        a'(t)&=\frac{\mathcal{M}(a(t))+1-\mathcal{M}(b(t))}{\mu(a(t))(0.5-a(t))} \times rb(t) > \frac{\mathcal{M}(a(t))}{\mu(a(t))(0.5-a(t))} \times 0.5r \geq 0.
    \end{align*}
    Thus, $a$ is weakly increasing over $I$, as desired.

    Next, we show that $I$ must be a finite interval. Consider that $b'(t) < 1$ for every $t \in I$, which implies $0.5 <b(t) < 1-t$. Therefore, $T_0 \leq 0.5$, as desired.

    First, we show either $\lim_{t \uparrow \sup I} a(t) = 0.5$ or $m^* = 0.5$. Suppose for the purpose of contradiction that
    \[a^* \coloneqq \lim_{t \uparrow \sup I} a(t) < 0.5 < \lim_{t \uparrow \sup I} b(t) = m^*.\]
    Then, the graph of the solution to the ODEs must satisfy the following
    \begin{align*}
        \{(t, a(t),b(t)) : t\in I\} \subset \underbrace{[0,T_0] \times [0, a^*] \times [m^* , 1]}_{\eqqcolon K} \subset \mathbb{R}^+ \times E.
    \end{align*}
    Since $K$ is compact, this contradicts the fact that the graph of the maximal solution cannot live in a compact subset. Thus, we must have either $a^* = 0.5$ or $m^* = 0.5.$

    If $m_0 \geq m^* = \lim_{t \uparrow T_0} b(t) $, then it is clear that there exists $T\geq 0 $ such that $b(T-) = m_0$ by the Intermediate Value Theorem. Suppose otherwise that $m_0 < m^*$. Since $m_0 \geq 0.5$, we have $m^* >0.5.$ This means $a^* = 0.5$, implying the first half of the lemma by choosing $T = T_0.$

     For the second part of the lemma, suppose there is no $T>0$ such that $b(T) = \underline{\theta}$. We wish to show $\lim_{t \uparrow \sup I} a(t) = 0.5$, and the necessary condition is
    \[\mathbb{E}[(m_0-m) 1\{m < 0.5\}] \leq 1- m_0.\]
    Since $b$ is continuous on $I$, we must have $b(t) > \underline{\theta}$ for every $t \in I.$

    Consider that
    \begin{align}
       m_0 \leq \lim_{t \uparrow T_0} b(t) &= 1 - \int_0^{T_0}
        \frac{b(s) - a(s)}{0.5-a(s)} \times rb(s)\d s \nonumber\\
        &= 1 - \int_0^{T_0} \frac{b(s)-a(s)}{\mathcal{M}(a(s)) + 1- \mathcal{M}(b(s))} \d \mathcal{M}(a(s))\nonumber\\
        &\leq 1 - \int_0^{T_0} (\underline{\theta}-a(s)) \d \mathcal{M}(a(s)) \nonumber\\
        &= 1+ \mathbb{E}[(m - m_0) 1\{m < 0.5\}], \nonumber
    \end{align}
    which directly implies our desired necessary condition.
\end{proof}

We are now ready to show \cref{thm:tail_censorship}.
\begin{proof}[Proof of \cref{thm:tail_censorship}]
Fix $\underline{\theta}.$ We choose $T$ and $m^*$ as defined in \cref{lem: existence of ODE}, where we set $m_0 = \max\{0.5, \underline{\theta}\}.$ We set $\alpha(t) = a(T-t)$ and $\beta(t) = b(T-t)$ for every $t \in [0,T],$ which satisfy \cref{eqn:ODE_continuum}. We will prove the proposition case-by-case.

\fbox{Case 1: $\underline{\theta} \geq m^*$} \cref{lem: existence of ODE} implies $\beta(0) = \underline{\theta}$ and $\alpha(0) \leq 0.5.$ We choose the following Lagrange coefficients:
\begin{align*}
    \Lambda(t) &= \begin{dcases} \frac{c(\beta(t)-\underline{\theta})}{r\beta(t)} e^{rt} &\text{ if $t \leq T$} \\ \frac{c(1-\underline{\theta})}{r} e^{rt} &\text{ if $t > T$}\end{dcases},\\
    \Gamma(m) &= \begin{dcases}
         \int_{\beta(0)}^m \gamma^+(m) \d m &\text{if $m > \beta(0)$}\\ 0 & \text{if $m \in [\alpha(0),\beta(0)]$} \\
         -\int_{m}^{\alpha(0)} \gamma^-(m) \d t & \text{if $m < \alpha(0)$},
    \end{dcases}
\end{align*}
where
\begin{align*}
    \gamma^+(m) = c\underline{\theta} \int_0^{\beta^{-1}(m)} \frac{1}{\beta(s)} \d s, \quad\quad \gamma^-(m) = c\underline{\theta} \int_0^{\alpha^{-1}(m)} \frac{1}{\beta(s)} \d s - \frac{c}{r}\bigg(1 - \frac{\underline{\theta}}{\beta(\alpha^{-1}(m))} \bigg).
\end{align*}
Note that $\Lambda$ is increasing and $\Lambda(0) = 0$, so $\Lambda$ is continuous at $0.$ Moreover, $\Gamma$ is convex because
\begin{align*}
    \Gamma'(m) = \begin{cases}
        \gamma^+(m) &\text{if $m > \beta(0)$} \\
        0 &\text{if $m \in [\alpha(0),\beta(0)]$} \\
        \gamma^-(m) &\text{if $m < \alpha(0)$},
    \end{cases}
\end{align*}
implying 1) $\Gamma'$ is increasing over the domain $[\alpha(0),1]$ because $\beta$ is increasing, and 2) $\Gamma'$ is increasing over the domain $[0,\alpha(0)]$ because $\lim_{m \uparrow \alpha(0)} \gamma'(m) = 0$ by $\beta(0) = \underline{\theta}$ and
\begin{align*}
    \frac{\d \gamma^-(\alpha (t))}{\d t} = \frac{c\underline{\theta}}{r\beta(t)^2} (r\beta(t) - \beta'(t)) < 0
\end{align*}
by the ODE given by \cref{eqn:ODE_continuum}.

Plugging the parametric assumptions to the FOC, we get: for every $t \in T$ and $m \in [0,1],$
\begin{align*}
    l_{g,\Lambda,\Gamma}(m,t)=
\begin{dcases}
    c(m-\underline{\theta}) t   - rm\int_0^t e^{-rs} \Lambda(s) \d s-\Gamma(m) & m>0.5\\
    c(m-\underline{\theta}) t + (0.5-m) \Lambda(t) e^{-rt}  - rm\int_0^t e^{-rs} \Lambda(s)\d s-\Gamma(m) & m\le 0.5
\end{dcases}
\end{align*}
\noindent\underline{Step 1: show $[\alpha(0),\beta(0)] \subset \text{argmax}_{m} l_{g,\Lambda,\Gamma} (m , 0)$.}

\noindent We can see that $l_{g,\Lambda,\Gamma} (m , 0) = - \Gamma(m) \leq 0$, where the maximum is attained at $[\alpha(0),\beta(0)] $ from our construction of $\Gamma,$ as required.

\noindent\underline{Step 2: show $\beta(t) \in \text{argmax}_{m \geq 1/2} l_{g,\Lambda,\Gamma} (m,t)$ and $\alpha(t) \in \text{argmax}_{m < 1/2} l_{g,\Lambda,\Gamma} (m,t)$ for every $t\leq T.$}

\noindent The FOC with respect to $m$ when $m > 0.5$ is given by
\begin{align*}
    \frac{\partial l_{g,\Lambda,\Gamma} (m,t)}{\partial m} = \gamma^+(\beta(t)) - \Gamma'(m) , \frac{\partial^2 l_{g,\Lambda,\Gamma} (m,t)}{\partial m^2} = -\Gamma''(m)< 0.
\end{align*}
The FOC equals $0$ when $m = \beta(t)$ and $l_{g,\Lambda,\Gamma}(\cdot,t)$ is concave over $[1/2,1]$ due to its negative second derivative. Thus, $\beta(t) \in \text{argmax}_{m \geq 1/2} l_{g,\Lambda,\Gamma} (m,t)$.

\noindent Similarly, the FOC with respect to $m$ when $m \leq 0.5$ is given by
\begin{align*}
    \frac{\partial l_{g,\Lambda,\Gamma} (m,t)}{\partial m} = \gamma^+(\beta(t))-  \Gamma'(m), \frac{\partial^2 l_{g,\Lambda,\Gamma} (m,t)}{\partial m^2} = -\Gamma''(m)< 0.
\end{align*}
The FOC equals $0$ when $m = \alpha(t)$ from our construction of $\Lambda$ and $\Gamma$. Also, $l_{g,\Lambda,\Gamma}(\cdot,t)$ is concave over $[0,1/2]$ due to its negative second derivative. Thus, $\alpha(t) \in \text{argmax}_{m < 1/2} l_{g,\Lambda,\Gamma}(m,t)$, as desired.

\noindent\underline{Step 3: show $l_{g,\Lambda,\Gamma}(\alpha (t),t) = l_{g,\Lambda,\Gamma}(\beta (t),t) = 0$ for every $t \in [0,T]$}

\noindent Consider that, by Step 2 and the envelope theorem,
\begin{align*}
    \frac{\d}{\d t} l_{g,\Lambda,\Gamma}(\beta (t),t)
    = \frac{\partial l_{g,\Lambda,\Gamma}}{ \partial t} (\beta(t),t)  =  c (\beta(t) - \underline{\theta}) - r\beta(t) e^{-rt} \Lambda(t) = 0.
\end{align*}
Thus, $l_{g,\Lambda,\Gamma}(\beta (t),t) = l_{g,\Lambda,\Gamma}(\beta (0),0) = 0$ from Step 1. Similarly,
\begin{align*}
    \frac{\d}{\d t} l_{g,\Lambda,\Gamma}(\alpha (t),t)
    &= \frac{\partial l_{g,\Lambda,\Gamma}}{ \partial t} (\alpha(t),t)  \\
    &= -c\underline{\theta}  + \frac{c\underline{\theta}}{r}(0.5 - \alpha(t)) \frac{\beta'(t)}{\beta(t)^2} + \frac{c\alpha(t) \underline{\theta}}{\beta(t)} \\
    &= 0,
\end{align*}
where the last inequality is implied be the ODE given by \cref{eqn:ODE_continuum}. Thus, $l_{g,\Lambda,\Gamma}(\alpha (t),t) = l_{g,\Lambda,\Gamma}(\alpha (0),0) = 0$ from Step 1, as desired.

\noindent\underline{Step 4: show $l_{g,\Lambda,\Gamma} (m,t) \geq 0$ for every $t>T.$}

\noindent Consider that, for every $t>T$ and $m \in [0,1]$, $\frac{\partial l_{g,\Lambda,\Gamma}}{\partial t} (m,t) = c\underline{\theta}(m - 1) \leq 0 $. Thus, if $t>T$, then
\begin{align*}
    l_{g,\Lambda,\Gamma}(m ,t) \leq l_{g,\Lambda,\Gamma}(m ,T) \overset{\text{Step 2}}{\leq}  l_{g,\Lambda,\Gamma}(\beta(T) ,T) = 0,
\end{align*}
as desired.

All steps together imply $l_{f,\Gamma,\Lambda}(m,t) \leq 0,$ where the equality is attained if $(m,t) \in \text{supp } g.$ Moreover, the complementary slackness condition is clear since the obedience constraint binds for every time $t \in T^\circ$ and the mean-preserving spread constraint also binds for every $m \in m$ because our conjectured solution eventually gives full information, i.e., $\text{marg}_m g = \mu.$ \cref{thm:foc_continuum} implies $g$ must be the optimal information structure.

\fbox{Case 2: $\underline{\theta} < m^*$ and $m^* > 0.5$}
\cref{lem: existence of ODE} implies $\alpha(0) = 0.5$ and $\beta(0) > 0.5$. We choose the same Lagrange coefficients except small changes:
\begin{align*}
    \Lambda(t) &= \begin{dcases} 0 &\text{if $t = 0$}\\  \frac{c(\beta(t)-\underline{\theta})}{r\beta(t)} e^{rt} &\text{ if $0 < t \leq T$} \\ \frac{c(1-\underline{\theta})}{r} e^{rt} &\text{ if $t > T$}\end{dcases}
\end{align*}
All earlier arguments can be directly applied except the discontinuity of $\Lambda$ at $t=0$. However, $l_{g,\Lambda,\Gamma}(m,t)$ is still continuous at $t= 0$ when $ m \geq 0.5$, and $[\alpha(0),\beta(0)] \subset [1/2,1]$. Thus, we still have $l_{g,\Lambda,\Gamma}(\alpha(t),t) = l_{f,\Lambda, \Gamma}(\beta(t),t) = 0$ for every $t \in [0,T]$ from Step 3.

\fbox{Case 3: $m^* = 0.5 \geq \underline{\theta} $}
\cref{lem: existence of ODE} implies $\alpha(0) \leq 0.5$ and $\beta(0) = 0.5$ We choose the same Lagrange coefficients except small changes:
\begin{align*}
    \Lambda(t) &= \begin{dcases} 0 &\text{if $t = 0$}\\  \frac{c(\beta(t)-\underline{\theta})}{r\beta(t)} e^{rt} &\text{ if $0 < t \leq T$} \\ \frac{c(1-\underline{\theta})}{r} e^{rt} &\text{ if $t > T$}\end{dcases}
\end{align*}
Since $\widehat{m}_0 = 0.5$, we adjust a subgradient at time $0$ as $U'_m(0.5,0) = 0$. The FOC becomes, for every $t > 0$ and $m \in [0,1],$ $l_{g,\Lambda,\Gamma} (m , 0 ) = -\Gamma(m),$ and
\begin{align*}
    l_{g,\Lambda,\Gamma}(m,t)=
\begin{dcases}
    c(m-\underline{\theta}) t + (m-0.5)\Lambda(0+)e^{-rt}   - rm\int_{0+}^t e^{-rs} \Lambda(s) \d s-\Gamma(m) & m>0.5\\
    c(m-\underline{\theta}) t - (m-0.5) (\Lambda(t)-\Lambda(0+)) e^{-rt}  - rm\int_{0+}^t e^{-rs} \Lambda(s) \d s-\Gamma(m) & m\le 0.5
\end{dcases}
\end{align*}
All earlier arguments can be directly applied except the discontinuity of $\Lambda$ and $l$ at $t=0$. However, $l_{g,\Lambda,\Gamma}(m,t)$ is still continuous at $t= 0$ when $ m \leq 0.5$, and $[\alpha(0),\beta(0)] \subset [\alpha(0),1/2]$. Thus, we still have $l_{g,\Lambda,\Gamma}(\alpha(t),t) = l_{f,\Lambda, \Gamma}(\beta(t),t) = 0$ for every $t \in [0,T]$ from Step 3.
\end{proof}


\section{Extensions}\label{appendix:extensions}

The two extensions developed in \cref{sec:extensions} inherit from the baseline framework without further appendix machinery. In \cref{ssec:evolving_state}, the modification is a restriction of the feasibility set from $D$ to $\overline D$ or $\underline D$; both restricted sets remain compact, and the time-section $\overline D(t)$ (respectively $\underline D(t)$) is monotone in $t$ under set inclusion. The direct-communication reduction of \cref{appendix:reduction} and the strong-duality argument of \cref{appendix:duality_proof} therefore apply to the restricted problem verbatim, and \cref{thm:foc} characterizes its solutions. In \cref{ssec:costly_info}, the costly-information variant is absorbed into the principal's effective payoff $\widehat V = V - H$ by the chain rule for uniform-posterior-separable costs; the constrained variant adds a convex inequality \eqref{eqn:icc_c} for which the Lagrangian acquires an extra multiplier without disturbing the FOC structure. Detailed proofs of the topological prerequisites that ensure existence and tightness in the evolving-state case are developed in the supplementary material of \citet{koh2024attention}, to which we refer the reader for the underlying measure-theoretic constructions.

\end{document}